\newcommand{\DATUM}{14-Apr-2020}       
\newcommand{\ol}{\overline}   
\newcommand{\ul}{\underline}  
\newcommand{\eps}{{\varepsilon}}        
\newcommand{\Om}{\Omega}                
\newcommand{\la}{\langle}
\newcommand{\ra}{\rangle}
\newcommand{\dGamma}{{\mathrm{d}\Gamma}}
\newcommand{\qB}{B}
\newcommand{\Bog}{{\mathrm{Bog}}}
\newcommand{\Vol}{{\mathrm{Vol}}}
\newcommand{\QF}{{\mathrm{qfDM}}}
\newcommand{\bfone}{\mathbf{1}}
\newcommand{\cB}{\mathcal{B}}
\newcommand{\cD}{\mathcal{D}}
\newcommand{\cE}{\mathcal{E}}
\newcommand{\cF}{\mathcal{F}}
\newcommand{\cG}{\mathcal{G}}
\newcommand{\cH}{\mathcal{H}}
\newcommand{\cJ}{\mathcal{J}}
\newcommand{\cL}{\mathcal{L}}         
\newcommand{\cO}{\mathcal{O}}         
\newcommand{\cS}{\mathcal{S}}
\newcommand{\field}[1]{\mathbb{#1}}
\newcommand{\bbA}{\field{A}} 
\newcommand{\vbbA}{\vec{\field{A}}} 
\newcommand{\bbU}{\field{U}} 
\newcommand{\HH}{\field{H}} 
\newcommand{\tHH}{\field{H}} 
\newcommand{\RR}{\field{R}}     
\newcommand{\ZZ}{\field{Z}}     
\newcommand{\NN}{\field{N}}     
\newcommand{\CC}{\field{C}}     
\newcommand{\bbW}{\field{W}}     
\newcommand{\fF}{\mathfrak{F}}
\newcommand{\fh}{\mathfrak{h}}  
\newcommand{\sfh}{\mathsf{h}}
\newcommand{\sfj}{\mathsf{j}}
\newcommand{\sfJ}{\mathsf{J}}
\newcommand{\sfN}{\mathsf{N}}  
\newcommand{\sfR}{\mathsf{R}}  
\newcommand{\sfq}{\mathsf{q}}  
\newcommand{\sfu}{\mathsf{u}} 
\newcommand{\sfv}{\mathsf{v}}
\newcommand{\bchi}{{\overline{\chi}}}
\newcommand{\hcE}{\widehat{\mathcal{E}}}
\newcommand{\hr}{\hat{r}}
\newcommand{\hphi}{\hat{\phi}}
\newcommand{\tT}{\widetilde{T}}
\newcommand{\ta}{\tilde{a}}
\newcommand{\tb}{\tilde{b}}
\newcommand{\td}{\tilde{d}}
\newcommand{\tg}{\tilde{g}}   
\newcommand{\vA}{{\vec{A}}}             
\newcommand{\vnabla}{{\vec{\nabla}}}
\newcommand{\veps}{{\vec{\epsilon}}}
\newcommand{\vk}{{\vec{k}}}
\newcommand{\vm}{{\vec{m}}}
\newcommand{\vp}{{\vec{p}}}
\newcommand{\vv}{{\vec{v}}}
\newcommand{\vx}{{\vec{x}}}
\newcommand{\DM}{\mathfrak{DM}}
\newcommand{\rIm}{\mathrm{Im}}
\newcommand{\rRe}{\mathrm{Re}}               
\newcommand{\cirS}{\mathop{\bigcirc\kern -.73em {\scriptstyle{\rm S}}}}
\newcommand{\dom}{\mathrm{dom}} 
\newcommand{\supp}{\mathrm{supp}}
\newcommand{\Tr}{{\rm Tr}}
\newcommand{\op}{\mathrm{op}} 
\newcommand{\gs}{{\rm gs}}
\newcommand{\hf}{H_{\mathrm{ph}}} 
\newcommand{\cre}{A_J^*}
\newcommand{\ann}{A_J}
\newcommand{\cresf}{A_{\sfJ}^*}
\newcommand{\annsf}{A_{\sfJ}}
\newcommand{\LL}{{\mathrm{LL}}}
\newcommand{\el}{{\mathrm{el}}}
\newcommand{\ph}{{\mathrm{ph}}}
\newcommand{\pol}{{\mathrm{pol}}}
\newcommand{\rmH}{\mathrm{H}}
\numberwithin{equation}{section}
\newtheorem{theorem}{Theorem}[section]        
\newtheorem{lemma}[theorem]{Lemma}             
\newtheorem{corollary}[theorem]{Corollary}     
\theoremstyle{plain}
\begin{document}
\bibliographystyle{plain}
\title{On the Ultraviolet Limit \\ 
of the Pauli-Fierz Hamiltonian \\
in the Lieb-Loss Model}

\author{Volker Bach \thanks{Institut f\"ur Analysis und Algebra,
    Technische Universit\"at Braunschweig, Germany,
    $<$v.bach@tu-bs.de$>$, ORCID 0000-0003-3987-8155} 
  \and
  Alexander~Hach \thanks{Institut f\"ur Analysis und Algebra,
    Technische Universit\"at Braunschweig, Germany,
    $<$a.hach@tu-bs.de$>$}
}

\date{\DATUM}

\maketitle

\begin{abstract}
  \noindent Two decades ago, Lieb and Loss \cite{LiebLoss1999}
  proposed to approximate the ground state energy of a free,
  nonrelativistic electron coupled to the quantized radiation field by
  the infimum $E_{\alpha, \Lambda}$ of all expectation values $\langle
  \phi_{el} \otimes \psi_{ph} | H_{\alpha, \Lambda} (\phi_{el} \otimes
  \psi_{ph}) \rangle$, where $H_{\alpha, \Lambda}$ is the
  corresponding Hamiltonian with fine structure constant $\alpha >0$
  and ultraviolet cutoff $\Lambda < \infty$, and $\phi_{el}$ and
  $\psi_{ph}$ are normalized electron and photon wave functions,
  respectively. Lieb and Loss showed that $c \alpha^{1/2}
  \Lambda^{3/2} \leq E_{\alpha, \Lambda} \leq c^{-1} \alpha^{2/7}
  \Lambda^{12/7}$ for some constant $c >0$. In the present paper we
  prove the existence of a constant $C < \infty$, such that
  \begin{align*}
  \bigg| \frac{E_{\alpha, \Lambda}}{F[1] \, \alpha^{2/7} \, \Lambda^{12/7}} 
  - 1 \bigg|
  \ \leq \ 
  C \, \alpha^{4/105} \, \Lambda^{-4/105} 
  \end{align*}
  holds true, where $F[1] >0$ is an explicit universal number. This
  result shows that Lieb and Loss' upper bound is actually sharp and
  gives the asymptotics of $E_{\alpha, \Lambda}$ uniformly in the
  limit $\alpha \to 0$ and in the ultraviolet limit $\Lambda \to
  \infty$.
\end{abstract}


\thispagestyle{empty}

\newpage
\setcounter{page}{1}

\section{Introduction and Result} \label{sec-I}
%
\setcounter{equation}{0}
Soon after the discovery of quantum mechanics almost a century ago by
Heisenberg and Schr{\"o}dinger, the quantization of the radiation
field was formulated by Born, Heisenberg, and Jordan and by Dirac
\cite{BornHeisenbergJordan1926, Dirac1927}, and about seventy years ago
quantum electrodynamics (QED) was formulated by Feynman, Schwinger,
Tomonaga, and Dyson \cite{Dyson1949, Feynman1949, Schwinger1948,
  Tomonaga1946}, laying the foundation to answer the question whether
light rays consisted of particles or waves that was open for several
centuries. Besides being conceptually satisfying, QED is one of the
most successful theories with quantitative predictions that match
experimental data by more that eight decimals.

In spite of its success for applications, however, QED is still
lacking essential parts of its mathematical foundation to this very
day. Namely, all known formulations require unphysical regularizations
at large, ultraviolet, and/or small, infrared, photon energies. While
considerable progress has been made in the past three decades on the
construction of the infrared limit, i.e., the construction of a theory
without regularization at small photon energies
\cite{BachFroehlichSigal1998a, BachFroehlichSigal1998b,
  GriesemerLiebLoss2001, BarbarouxChenVugalter2003,
  BachFroehlichPizzo2006}, the construction of the ultraviolet limit
is wide open. This difficult problem has been tackled from several
angles, e.g., by replacing the fully interacting model by effective
mean-field theories of various kinds \cite{HainzlLewinSolovej2007,
  GravejatLewinSere2009, GravejatLewinSere2018}.

One approach among these is a simplifying variational model proposed
by Lieb and Loss in 1999 \cite{LiebLoss1999}. Their starting point is the
Pauli-Fierz Hamiltonian
\begin{align} \label{eq-0.01}
H_{\alpha, \Lambda} 
\ = \ 
\frac{1}{2} \Big( \tfrac{1}{i} \vnabla_x 
- \alpha^{1/2} \vA_\Lambda(x) \Big)^2 
+ H_\ph
\end{align}
of a nonrelativistic spinless particle (modelling the electron),
minimally coupled to the quantized radiation field. Here
$\tfrac{1}{i} \vnabla_x$ is the (particle) momentum operator and
$\vA_\Lambda(x) = \int_{|k| \leq \Lambda} \big( e^{-ik \cdot x} a^*(k) +
e^{ik \cdot x} a(k) \big) \frac{\eps(k) \, dk}{(2\pi)^{3/2} \, |k|^{1/2}}$ is
the magnetic vector potential in Coulomb gauge and cut off for momenta
larger than $\Lambda$ in magnitude. Moreover, $H_\ph = \int |k| \;
a^*(k) \, a(k) \; dk$ is the energy of the radiation field, and
$\alpha \approx 1/137$ is the (dimensionless) fine structure constant.
The Hamiltonian $H_{\alpha, \Lambda}$ is an unbounded, self-adjoint
operator on the domain $\dom[H_{0,0}] \subseteq \cH_\el \otimes
\cF_\ph$ of the noninteracting Hamiltonian $H_{0,0} = \frac{1}{2}
(-\Delta) \otimes \bfone_\ph + \bfone_\el \otimes H_\ph$, see
\cite{Hiroshima2000b, HaslerHerbst2008}, where $\cH_\el = L^2(\RR^3)$
is the space of square-integrable functions on $\RR^3$, and $\cF_\ph$
is the boson Fock space over the space $L^2(\RR^3 \times \ZZ_2)$ of
square-integrable, purely transversal vector fields, see
Section~\ref{sec-II} for a precise definition.

Note that $H_{\alpha, \Lambda} \geq 0$ as a quadratic form. The
(nonnegative) ground state of the energy of the system is
characterized by the Rayleigh-Ritz variational principle as the
infimum of all energy expectation values of the system,
\begin{align} \label{eq-0.02}
E_\gs(\alpha, \Lambda)
\ := \ 
\inf\Big\{ \la \Psi \, | H_{\alpha, \Lambda} \Psi \ra \; \Big|
\ \Psi \in \cH_\el \otimes \cF_\ph \, , \ \ 
\|\Psi\| = 1 \; \Big\} \, .
\end{align}
Lieb and Loss proposed \cite{LiebLoss1999} to restrict the variation in
\eqref{eq-0.02} to wave functions of product form 
$\Psi = \phi \otimes \psi$, with normalized $\phi \in \cH_\el$ and
$\psi \in \cF_\ph$, to obtain a new approximation and upper bound
$E_\LL(\alpha, \Lambda) \geq E_\gs(\alpha, \Lambda)$ to the ground 
state energy, i.e.,
\begin{align} \label{eq-0.03}
E_\LL(\alpha, \Lambda)
\ := \
\inf\Big\{ \cE_{\alpha, \Lambda}(\phi, \psi)  \; \Big|
\ & \phi \in \cH_\el \, , \ \psi \in \cF_\ph \, , \ 
\|\phi\| = \|\psi\| = 1 \; \Big\} \, ,
\\[1ex] \label{eq-0.04}
\cE_{\alpha, \Lambda}(\phi, \psi)  
\ := \ &
\big\la \phi \otimes \psi \big|
H_{\alpha, \Lambda} (\phi \otimes \psi) \big\ra \, .
\end{align}
Note that \textit{upper bounds} on the ground state energy are of
particular interest here because the ultraviolet problem is about the
understanding of the divergence of $E_\gs(\alpha, \Lambda) \to
\infty$, as $\Lambda \to \infty$. We henceforth refer to
Eqs.~\eqref{eq-0.03}-\eqref{eq-0.04} as the \textbf{Lieb-Loss Model}.

In Theorem~1.1 in \cite{LiebLoss1999} Lieb and Loss proved the
existence of two universal constants $C_1, C_2 \in \RR^+$ such that
\begin{align} \label{eq-0.05}
C_1 \, \alpha^{1/2} \, \Lambda^{3/2} 
\ \leq \ 
E_\LL(\alpha, \Lambda)
\ \leq \ 
C_2 \, \alpha^{2/7} \, \Lambda^{12/7} \, . 
\end{align}
This is the first of a series of results of Lieb and Loss in
\cite{LiebLoss1999}, extending their model to $N \geq 2$ fermions or
bosons, taking the electron spin into account by studying the Pauli
operator, and replacing the nonrelativistic kinetic energy by a
pseudorelativistic one. Note that the Lieb-Loss model does not take
the renormalization of the electron mass into account, and the actual
value of $E_\LL(\alpha, \Lambda)$ is of limited quantitative use in
physics. The significance of Eq.~\eqref{eq-0.05}, however, lies in the
fact that the formal perturbation expansion of the ground state about
the photon vacuum yields $E_\gs(\alpha, \Lambda) \sim C \alpha
\Lambda^2$. In contrast, Eq.~\eqref{eq-0.05} says that this grossly
overestimates the ground state energy; it is a warning sign that
perturbation theory may not be adequate to construct the ultraviolet
limit.

The main result of this paper is to derive the asymptotics of
$E_\LL(\alpha, \Lambda)$, as $\Lambda \to \infty$ or $\alpha \to
0$. We obtain an exact characterization of the ground state and ground
state energy of the Lieb-Loss Model for any given $\alpha >0$ and
$\Lambda \geq 1$, in terms of an auxiliary classical functional. To
formulate this precisely, we introduce
\begin{align} \label{eq-0.06}
\cF_\beta(\phi)  
\ := \ &
\frac{1}{2} \big\| \vnabla \phi \big\|_2^2 
+ \beta \, \|\phi\|_1 \, ,
\end{align}
for all $\phi \in Y := H^1(\RR^3) \cap L^1(\RR^3)$, where 
$\|f\|_p := (\int |f(x)|^p \, d^3x)^{1/p}$ denotes the usual
$L^p$-norm, here and henceforth. It is not hard to see
that
\begin{align} \label{eq-0.07}
F[\beta] \ := \ &
\inf\big\{ \cF_\beta(\phi) \: \big| \ 
\phi \in Y \, , \ \ \|\phi\|_2 = 1 \big\} 
\end{align}
satisfies the scaling relation
\begin{align} \label{eq-0.07a}
F[\beta] \ = \ \beta^{4/7} \, F[1] \, , 
\end{align}
and in \cite{Hach2020} the second author shows that the infimum in
\eqref{eq-0.07} is actually attained and strictly positive, in
particular,
\begin{align} \label{eq-0.07b}
F[1] \ > \ 0 \, .
\end{align}
Our main result is estimate~\eqref{eq-0.08} below, showing that the
upper bound on \break $E_\LL(\alpha, \Lambda)$ in \eqref{eq-0.05} is actually
tight.
\begin{theorem} \label{thm-0.1} There exists a universal
constant $C < \infty$ such that for all $\alpha>0$ and 
$\Lambda \geq 1$, the estimate
\begin{align} \label{eq-0.08}
- C \, \alpha^{\frac{4}{49}} \Lambda^{-\frac{4}{49}}
\ \leq \ 
\frac{E_\LL(\alpha, \Lambda)}{F_1 \, \alpha^{2/7} \, \Lambda^{12/7}} 
\: - \: 1 
\ \leq \  
C \, \alpha^{\frac{4}{105}} \Lambda^{-\frac{4}{105}} 
\end{align}
holds true.
\end{theorem}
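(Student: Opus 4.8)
The plan is to establish two matching one-sided estimates, $E_\LL(\alpha,\Lambda)\le F[1]\,\alpha^{2/7}\Lambda^{12/7}\bigl(1+C(\alpha/\Lambda)^{4/105}\bigr)$ and $E_\LL(\alpha,\Lambda)\ge F[1]\,\alpha^{2/7}\Lambda^{12/7}\bigl(1-C(\alpha/\Lambda)^{4/49}\bigr)$, each reducing the quantum problem \eqref{eq-0.03} to the classical problem \eqref{eq-0.07} with an effective coupling $\beta$ proportional to $\alpha^{1/2}\Lambda^3$; the scaling law \eqref{eq-0.07a} then gives $F[\beta]=\beta^{4/7}F[1]$, which equals $F[1]\,\alpha^{2/7}\Lambda^{12/7}$ once the explicit universal proportionality constant in $\beta$ is taken into account. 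Two scales compete --- the electron scale $\ell\asymp\beta^{-2/7}\asymp\alpha^{-1/7}\Lambda^{-6/7}$ and the ultraviolet scale $\Lambda^{-1}$ --- and every error is a power of the ratio $\delta:=(\ell\Lambda)^{-1}=(\alpha/\Lambda)^{1/7}$.

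\emph{Upper bound.} Take $\phi(x)=\ell^{-3/2}\phi_\ast(x/\ell)$, a rescaling of a regularised, compactly supported near-minimiser $\phi_\ast$ of $F[1]$ (whose existence and regularity I import from \cite{Hach2020}), and $\psi=W(f)\Omega$ a coherent (Weyl) state. All expectations then become classical integrals: $\cE_{\alpha,\Lambda}(\phi,W(f)\Omega)=\tfrac12\|\vnabla\phi\|_2^2+\tfrac\alpha2\int\rho_\phi\,\vec a_f^{\,2}-\alpha^{1/2}\int\vec J_\phi\cdot\vec a_f+\|f\|_\omega^2+\tfrac\alpha2 v_\Lambda$, where $\rho_\phi=|\phi|^2$, $\vec J_\phi=\rIm(\bar\phi\,\vnabla\phi)$, $\vec a_f=\langle W(f)\Omega|\vA_\Lambda(\cdot)\,W(f)\Omega\rangle$, $\|g\|_\omega^2=\int|k|\,|g(k)|^2\,dk$, and $v_\Lambda=\langle\Omega|\vA_\Lambda(0)^2\Omega\rangle\asymp\Lambda^2$. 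Minimising the positive-definite quadratic bracket over $f$ and the phase of $\phi$, and using that $\widehat{\vec J_\phi}$ and $\widehat{\rho_\phi}$ are concentrated on the scale $1/\ell\ll\Lambda$ so that the photon kernel cut at $|k|\le\Lambda$ acts on them as a scalar up to relative $O(\delta)$, the self-energy $\tfrac\alpha2 v_\Lambda$ and the linear coupling recombine into the local term $\beta\|\phi\|_1$ (the $L^1$-norm appearing because the residual interaction is $x$-local and scales like $\sqrt{\rho_\phi}$). What remains is $\cF_\beta(\phi)$ up to errors, and optimising $\ell$ gives $F[\beta]\bigl(1+O(\delta^{4/15})\bigr)$, i.e.\ the stated $(\alpha/\Lambda)^{4/105}$.

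\emph{Lower bound.} For an arbitrary normalised $\phi\otimes\psi$, $\cE_{\alpha,\Lambda}(\phi,\psi)=\tfrac12\|\vnabla\phi\|_2^2+\langle\psi|\hat Q_\phi\psi\rangle\ge\tfrac12\|\vnabla\phi\|_2^2+\inf\spec\hat Q_\phi$, with the quadratic-plus-linear field operator $\hat Q_\phi=\tfrac\alpha2\int\rho_\phi\vA_\Lambda^2-\alpha^{1/2}\int\vec J_\phi\cdot\vA_\Lambda+H_\ph$. Using $\langle\psi|\vA^2\psi\rangle=|\langle\psi|\vA\psi\rangle|^2+\mathrm{Var}_\psi(\vA)$ and $\langle\psi|H_\ph\psi\rangle\ge\|\langle\psi|a(\cdot)\psi\rangle\|_\omega^2$ (since $\langle a^*a\rangle-|\langle a\rangle|^2=\|(a-\langle a\rangle)\psi\|^2\ge0$), together with an IMS/Bogoliubov-type lower bound, one obtains $\inf\spec\hat Q_\phi\ge\min_f\bigl\{\tfrac\alpha2\int\rho_\phi\vec a_f^{\,2}-\alpha^{1/2}\int\vec J_\phi\cdot\vec a_f+\|f\|_\omega^2\bigr\}+\inf\spec\bigl(\tfrac\alpha2\int\rho_\phi\vA_\Lambda^2+H_\ph\bigr)-\mathrm{Err}(\phi)$. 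The second term is $\ge\tfrac\alpha2 v_\Lambda$ up to a controlled second-order correction (positivity of the photon mass induced by $\alpha\rho_\phi$, with $\|\widehat{\rho_\phi}\|_\infty\le1$ and decay on scale $1/\ell$); the first is $\ge-\tfrac\alpha4\langle\vec J_\phi,\cK_0\vec J_\phi\rangle$ with $\cK_0\asymp P_\perp|k|^{-2}\mathbf{1}_{|k|\le\Lambda}$. The crux is then the local inequality $\tfrac\alpha2 v_\Lambda-\tfrac\alpha4\langle\vec J_\phi,\cK_0\vec J_\phi\rangle\ge\beta\|\phi\|_1-\mathrm{Err}'(\phi)$; granting it, $\cE_{\alpha,\Lambda}(\phi,\psi)\ge\cF_\beta(\phi)-\mathrm{Err}_{\mathrm{tot}}(\phi)\ge F[\beta]-\sup_\phi\mathrm{Err}_{\mathrm{tot}}(\phi)$, and balancing the error contributions gives the exponent $4/7$ in $\delta$, i.e.\ $(\alpha/\Lambda)^{4/49}$.

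The main obstacle is this last local inequality: one must quantify that the probability current $\vec J_\phi$ cannot extract more than ``$\beta\|\phi\|_1$ less than the full field self-energy'' without paying commensurately in kinetic energy. This requires an $x$-local analysis of the non-commuting field operators $\vA_\Lambda(x)$ and a splitting into photon modes above and below the scale $1/\ell$, with all errors controlled uniformly in $\alpha$ and $\Lambda$; it is also where the asymmetry between the two exponents originates, since both the Bogoliubov lower bound for $\inf\spec\hat Q_\phi$ and the operator inequality $\cK_\phi\le\cK_0$ are lossy. A secondary difficulty --- certifying that no genuinely non-quasiclassical photon state beats the bound --- is absorbed into the reduction to $\inf\spec\hat Q_\phi$, but the quantitative estimates on the resulting zero-point and fluctuation terms are themselves delicate.
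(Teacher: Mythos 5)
Your proposal has a fatal flaw in the upper bound and defers the essential content of the lower bound to an unproven claim, so it does not establish the theorem.

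\emph{Upper bound.} You propose the trial state $\phi\otimes W(f)\Omega$ with a coherent photon state. But for any coherent state, $\la W(f)\Om|\vA_\Lambda^2(x)\,W(f)\Om\ra=\vec a_f(x)^2+\la\Om|\vA_\Lambda^2(0)\Om\ra$, so the term $\tfrac{\alpha}{2}\int\rho_\phi\la\vA_\Lambda^2\ra$ is bounded \emph{below} by $\tfrac{\alpha}{2}v_\Lambda\sim\alpha\Lambda^2$: coherent states cannot reduce the field fluctuations below their vacuum value. The linear coupling can only subtract a term controlled by $\rIm(\ol\phi\vnabla\phi)$ (which vanishes for the optimal, real $\phi$), so your trial energy is at least of order $\alpha\Lambda^2$, which is much \emph{larger} than $F[1]\alpha^{2/7}\Lambda^{12/7}$ in the regime where the theorem has content ($\alpha^{5/7}\Lambda^{2/7}\gg1$). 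This is precisely the point the paper emphasizes: the perturbative/vacuum answer $\sim\alpha\Lambda^2$ grossly overestimates $E_\LL$. The gain from $\alpha\Lambda^2$ down to $\sqrt{\alpha}\,\Lambda^3\|\phi\|_1$ comes from \emph{squeezing}, not displacement: the exact minimization over all $\psi$ of $\la\psi|(H_\ph+\tfrac{\alpha}{2}\int|\phi|^2\vA_\Lambda^2)\psi\ra$ is $\tfrac12\Tr(\sqrt{k^2+2\Theta_{\phi,\alpha}}-|k|)$, attained on a homogeneous Bogolubov transform of the vacuum, and the square root of the coupling operator is what produces $\alpha^{1/2}$ rather than $\alpha$. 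In the paper, Weyl transformations are used only to remove the term linear in the fields (forcing $\phi\geq0$); the quadratic term is then diagonalized by a Bogolubov transformation (Theorems~\ref{thm-IV-vb.04} and \ref{thm-IV-vb.05}).

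\emph{Lower bound.} Your intermediate claim that $\inf\spec(\tfrac{\alpha}{2}\int\rho_\phi\vA_\Lambda^2+H_\ph)\geq\tfrac{\alpha}{2}v_\Lambda$ up to controlled corrections is false for the same reason in reverse: squeezing lowers this infimum far below the vacuum expectation $\tfrac{\alpha}{2}v_\Lambda\sim\alpha\Lambda^2$, down to $O(\sqrt{\alpha}\,\Lambda^3\|\phi\|_1)$. The quantity that must be bounded from below is $\tfrac12\Tr(\sqrt{k^2+2\Theta_{\phi,\alpha}}-|k|)$, and showing that this is $\geq\beta(\alpha,\Lambda)\|\phi\|_1$ minus controlled errors is exactly the ``local inequality'' you flag as the main obstacle and leave unproved. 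The paper's proof fills precisely this gap: it first reduces to $X(2\Theta_{\phi,\alpha})$ by the Bogolubov--Hartree--Fock variational principle of \cite{BachBreteauxTzaneteas2013}, then proves matching upper and lower bounds on $X(2\Theta_{\phi,\alpha})$ against $2\beta\|\phi\|_1$ (Theorems~\ref{thm-VI-vb.01} and \ref{thm-VII-vb.02}) using trace-norm decompositions over momentum shells, the Birman--Solomyak bound, and the Amrein--Boutet de Monvel--Georgescu functional calculus, together with an IMS localization to compactly supported $\phi$ (Theorem~\ref{thm-V-vb.01}) before invoking the scaling $F[\beta]=\beta^{4/7}F[1]$ and optimizing the cutoff parameters. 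None of this machinery, which carries the actual content of the theorem, appears in your proposal.
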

We briefly sketch the derivation of \eqref{eq-0.08}. The intermediate
steps yield further insight on the minimizer of the Lieb-Loss model.
The latter is described in detail in Section~\ref{subsec-III.4}.
\begin{itemize}
\item[(1)] For technical reasons we introduce an infrared cutoff
  $\sigma > 0$. The case $\sigma = 0$ can be dealt with by a
  continuity argument in the limit $\sigma \to 0$ using standard
  relative bounds on $\vA_\sigma$. We do not give details of
  the argument but refer the reader to \cite{BachFroehlichSigal1998a}.

\item[(2)] We first analyze the functional $\cE_{\alpha, \Lambda}$. A direct
computation yields
\begin{align} \label{eq-0.09}
\cE_{\alpha, \Lambda}(\phi, \psi)  
\ = \ 
\frac{1}{2} \big\| \vnabla \phi \big\|_2^2
+ \Big\la \psi \Big| \;
\HH\big( |\phi|^2 , \, \rIm\{ \ol{\phi} \, \vnabla\phi \} \big) 
\psi \Big\ra_{\cF} \, ,
\end{align}
where $\la \cdot | \cdot \ra_\cF$ denotes the scalar product on the
photon Fock space $\cF_\ph$ and $\HH[\rho, \vv]$ is for 
$\rho: \RR^3 \to \RR^+$ and $\vv: \RR^3 \to \RR^3$ given as
\begin{align} \label{eq-0.10}
\HH[\rho, \vv]
\ := \ 
\hf + \frac{\alpha}{2} \int \rho(x) \, \vA_{\sigma,\Lambda}^2(x) \, d^3x +
\sqrt{\alpha}  \int \vv(x) \cdot \vA_{\sigma,\Lambda}(x) \, d^3x \, .
\end{align}
In Theorem~\ref{thm-IV-vb.04} in Section~\ref{subsec-IV.1} we
demonstrate that, by a suitably chosen Weyl transformation $W_\phi$,
the term linear in the fields, i.e., proportional to $\vv = \rIm\{
\ol{\phi} \, \vnabla\phi \}$, can be eliminated up to an additive
constant in the transformed Hamiltonian. The minimization of the
energy functional consequently enforces the reality of the
wavefunction $\phi$. More precisely,
\begin{align} \label{eq-0.11}
\cE_{\alpha, \Lambda}\big( \phi, \psi \big)  
\ \geq \ 
\cE_{\alpha, \Lambda}\big(|\phi|, W_\phi \psi \big) \, .
\end{align}
Defining
\begin{align} \label{eq-0.12}
\hcE_{\alpha, \Lambda}\big( \phi \big)  
\ := \ 
\inf\Big\{ \cE_{\alpha, \Lambda}(\phi, \psi)  \; \Big|
\ \psi \in \cF_\ph \, , \ \|\psi\| = 1 \; \Big\} \, ,
\end{align}
we therefore have that
\begin{align} \label{eq-0.13}
\hcE_{\alpha, \Lambda}\big( \phi \big)  
\ \geq \ 
\hcE_{\alpha, \Lambda}\big( |\phi| \big) \, . 
\end{align}

\item[(3)] Eq.~\eqref{eq-0.13} guarantees that we can assume without
  loss of generality that $\phi = |\phi| \geq 0$, and in this case
\begin{align} \label{eq-0.14}
\cE_{\alpha, \Lambda}\big( \phi, \psi \big)  
\ = \ 
\frac{1}{2} \big\| \vnabla \phi \big\|_2^2
+ \bigg\la \psi \bigg| \
\Big( \hf + 
\frac{\alpha}{2} \int |\phi(x)|^2 \, \vA_\Lambda^2(x) \, d^3x \Big) 
\, \psi \bigg\ra_{\cF} \, .
\end{align}
In Theorem~\ref{thm-IV-vb.05} in Section~\ref{subsec-IV.2} we give an
alternative proof for the observation of Lieb and Loss that
\begin{align} \label{eq-0.15}
\inf\bigg\{ \Big\la \psi & \Big| \
\Big( \hf + 
\frac{\alpha}{2} \int |\phi(x)|^2 \, \vA_\Lambda^2(x) \, d^3x \Big) 
\, \psi \Big\ra_{\cF} 
\; \bigg| \ \psi \in \cF_\ph \, , \ \|\psi\| = 1 \; \bigg\} 
\nonumber \\[1ex] 
\ = \ & 
\frac{1}{2} \Tr\big\{ \sqrt{ -\Delta_x + 2\Theta_{|\phi|,\alpha} \, }
\: - \: \sqrt{ -\Delta_x} \big\} \, , 
\end{align}
where $\Theta_{\phi,\alpha} := \alpha (2\pi)^{-3} 
P_C \chi_{\sigma,\Lambda} (\hphi*)^* (\hphi*) \chi_{\sigma,\Lambda} P_C$,
$\chi_{\sigma,\Lambda} := \bfone[ \sigma \leq -\Delta_x \leq \Lambda^2]$, 
and $P_C := \bfone\left[(\vec{\nabla}_x \cdot) =0 \right]$ is the
projection onto divergence-free vector fields, i.e., vector fields 
in Coulomb gauge. Inserting this into \eqref{eq-0.12}-\eqref{eq-0.13},
we arrive at
\begin{align} \label{eq-0.16,1}
\hcE_{\alpha, \Lambda}\big( \phi \big)  
\ = \ 
\frac{1}{2} \big\| \vnabla \phi \big\|_2^2
+ \frac{1}{2} X(2 \Theta_{\phi,\alpha} ) \, ,
\end{align}
for $\phi = |\phi| \geq 0$, where
\begin{align} \label{eq-0.17}
X(A) \ := \ & \Tr\Big( \sqrt{|k|^2 + A \,} - |k| \Big) 
\quad \text{and} 
\\[1ex] \label{eq-0.17,1}
\Theta_{\phi,\alpha} \ := \ & \frac{\alpha}{(2\pi)^3} \, 
P_C \, \chi_{\sigma,\Lambda} \, \phi(x)^2 \, \chi_{\sigma,\Lambda} \, P_C \, ,
\end{align}
with $\phi(x) \equiv \phi(i\nabla_p)$ denoting the corresponding
Fourier multiplier (with respect to the momentum representation).

\item[(4)] In Section~\ref{sec-V} we introduce the infima
\begin{align} \label{eq-0.23,1}
E_\LL^{(L)}(\alpha, \Lambda) \ := \ &
\inf\Big\{ \widehat{\cE}_{\alpha,\Lambda}(\phi_L) \; \Big| 
\ \phi_L \in Y_L \Big\} \, ,
\\[1ex] \label{eq-0.23,2}
F^{(L)}[\beta] \ := \ &
\inf\Big\{ \cF_\beta(\phi_L) \; \Big| \ \phi_L \in Y_L \Big\} \, ,
\end{align}
of the Lieb-Loss functional $E_\LL(\alpha,\Lambda)(\phi_L)$ and the
auxiliary functional \break $\cF_\beta(\phi_L)$ under variation only over
compactly supported functions $\phi_L \in Y_L := H^1(B(0,L))$ and
compare these infima to $E_\LL(\alpha, \Lambda)$ and $F[\beta]$ by
means of the IMS localization formula. More specifically, we prove in
Theorem~\ref{thm-V-vb.01} that
\begin{align} \label{eq-0.24}
E_\LL^{(L)}(\alpha, \Lambda) \; - \; C \, L^{-2} 
\ \leq \ &
E_\LL(\alpha,\Lambda) \ \leq \ E_\LL^{(L)}(\alpha, \Lambda) \, , 
\\[1ex] \label{eq-0.25}
F^{(L)}[\beta] \; - \; C \, L^{-2} 
\ \leq \ &
F[\beta] \ \leq \ F^{(L)}[\beta] \, , 
\end{align}
for some universal constant $C < \infty$ and all $L >0$.
Consequently, the leading orders of $E_\LL(\alpha,\Lambda)$ and
$F[\beta]$, respectively, are determined by their behavior on
compactly supported functions.

\item[(5)] The fourth step carried out in Sections~\ref{sec-VI} 
and \ref{sec-VII} is to find upper and lower bounds for all
compactly supported $\phi = |\phi| \in Y_L := H^1\big(B(0,L)\big)$ 
on $X(\Theta_{\phi_L,\alpha})$. In Theorem~\ref{thm-VI-vb.01} we prove 
the existence of a universal constant $C < \infty$ such that, for all 
$0 < \eps \leq 1$, $L \geq 1/\Lambda$, and $\phi \in Y_L$,
\begin{align} \label{eq-0.18}
\frac{1}{2} X(2\Theta_{\phi,\alpha}) &
\; - \; 
\sqrt{\frac{4\alpha}{9\pi} \,} \, \Lambda^3 \, \|\phi_L\|_1
\\[1ex] \nonumber
\ \leq \ &
C \big( \eps \, \alpha^{\frac{1}{2}} \, \Lambda^3 + 
\alpha^{\frac{1}{2}} \, \sigma^{\frac{3}{2}} \, \Lambda^{\frac{3}{2}} \big)
\, \|\phi_L\|_1
\; + \;
C \, \eps^{-2} \, \Lambda^2 \, L^{\frac{3}{2}} \, \|\nabla\phi_L\|_2 \, .
\end{align}
This is complemented by the lower bound in 
Theorem~\ref{thm-VII-vb.02} which asserts that, there
exists a universal constant $C <\infty$ such that, 
for all $L \geq 1/\Lambda$ and $\phi \in Y_L$,
\begin{align} \label{eq-0.19}
\frac{1}{2} X(2 \Theta_{\phi,\alpha}) 
\; - \; 
\sqrt{\frac{4\alpha}{9\pi} \,} \, \Lambda^3 \, \|\phi_L\|_1
\ \geq \ 
- C \, \alpha^{\frac{1}{4}} \, \Lambda^{\frac{7}{2}} \, 
L^{\frac{3}{2}} \, \|\phi_L\|_1^{\frac{1}{2}} \, .
\end{align}

\item[(6)] Estimates~\eqref{eq-0.18} and \eqref{eq-0.19} suggest
to compare the functional 
$\hcE_{\alpha,\Lambda}(\phi) = \frac{1}{2} \| \vnabla \phi \|_2^2
+ \frac{1}{2} X(2 \Theta_{\phi,\alpha})$
to 
$\cF_{\beta(\alpha,\Lambda)}(\phi) = \frac{1}{2} \| \vnabla \phi \|_2^2 
+  \beta(\alpha,\Lambda) \|\phi\|_1$ with \break
$\beta(\alpha,\Lambda) := \sqrt{\frac{4\alpha}{9\pi}} \Lambda^3$
which is done in Section~\ref{sec-VIII}.
Indeed, this leads us to introduce the family of auxiliary functionals
$(\cF_\beta)_{\beta >0}$, defined on 
$Y := H^1(\RR^3)\cap L^1(\RR^3) \subset H^1(\RR^3)$ as
\begin{align} \label{eq-0.20}
\cF_\beta(\phi)  
\ := \ &
\frac{1}{2} \big\| \vnabla \phi \big\|_2^2 
+  \, \beta \, \|\phi\|_1 \, ,
\end{align}
and their infima
\begin{align} \label{eq-0.21}
F[\beta] \ := \ &
\inf\big\{ \cF_\beta(\phi) \: \big| \ 
\phi \in Y \, , \ \ \|\phi\|_2 = 1 \big\} \, .
\end{align}
This family of functionals is analyzed by direct methods of the
calculus of variations in detail by the second author in a separate
paper \cite{Hach2020}, and here we describe its properties only
briefly. 
  \begin{compactitem}
  \item For fixed $\beta >0$, the functional $\cF_\beta$ possesses a
    minimizer, which is unique up to translations, nonnegative,
    spherically symmetric and decreasing. In particular, its infimum
    $F[\beta]$ is attained and hence a strictly positive minimum.

  \item For all $\beta >0$ both energy and minimizer are uniquely
    determined by their scaling behaviour in $\beta$ and universal
    constants corresponding to the case $\beta = 1$. In particular,
    $F[1] >0$ is a universal positive number and $F[\beta] =
    \beta^{4/7} F_1$.

  \item The Euler-Lagrange equation, which corresponds to the
    inhomogeneous Helmholtz equation $(- \Delta - \mu^2) \phi + \beta
    = 0$, yields an explicit characterization of this minimizer in
    terms of the zeroth Bessel function $j_0$ of the first kind.
   \end{compactitem}
In Section~\ref{sec-VIII} we use the information on the auxiliary
functional and especially the scaling relation 
$F[\beta] = \beta^{4/7} F_1$ to finally derive \eqref{eq-0.08},
formulated again as \eqref{eq-VIII-vb.03} in 
Theorem~\ref{thm-VIII-vb.1}. In order to simultaneously control
the errors on the right side of \eqref{eq-0.18}
and the localization error of order $\cO(L^{-2})$ we choose
$\eps := \alpha^{4/105} \Lambda^{-4/105}$ and 
$L := \alpha^{17/105} \Lambda^{-88/105}$ and arrive at the upper bound
in \eqref{eq-0.08}. Similarly, we choose 
$L := \alpha^{9/49} \Lambda^{-40/49}$ to obtain the lower bound in
\eqref{eq-0.08} from \eqref{eq-0.19} and the localization estimate.

Note that both estimates suggest that the length scale $\ell(\Lambda)$
of the particle in the ground state of the Lieb-Loss model is of order
$\ell(\Lambda) \approx \alpha^{\tau-1} \Lambda^\tau$, with 
$\tau = \frac{6}{7} \approx 0.86$.
\end{itemize}

\noindent \textbf{Acknowledgements:} VB gratefully acknowledges useful
discussions with M.~K{\"o}\-nen\-berg, J.~M{\o}ller, and A.~Pizzo.

\newpage
\section{The Lieb-Loss Model} \label{sec-II}
%
\setcounter{equation}{0}
The Lieb-Loss model is a variational model for the study of the ground
state energy of a system containing a single nonrelativistic spinless
particle which is minimally coupled to the quantized radiation
field. The dynamics of such a quantum system is generated by the
Pauli-Fierz Hamiltonian
\begin{align} \label{eq-II-vb.01}
H_{\alpha, \sigma, \Lambda}
\ := \ 
\frac{1}{2} \Big( i\vnabla + \sqrt{\alpha} \vA_{\sigma, \Lambda}(\vx) \Big)^2 
+ H_\ph \, ,
\end{align}
which we define here as a quadratic form on $\rmH^1(\RR^3) \otimes
\cD(N_\ph^{1/2})$, where $\rmH^1(\RR^3) \subseteq L^2(\RR^3)$ is the
Sobolev space of square-integrable functions whose gradient is
square-integrable, as well, and $\cD(N_\ph^{1/2}) \subseteq \fF_\ph$
denotes the subspace of finite photon number expectation value of the
photon Fock space $\fF_\ph$. The latter is the boson Fock space over
the one-photon Hilbert space $\fh$, i.e., it is the orthogonal sum
$\fF_\ph = \bigoplus_{n=0}^\infty \fF_\ph^{(n)}$ of $n$-photon
sectors, where $\fF_\ph^{(0)} := \CC \cdot \Om$ is the one-dimensional
vacuum sector spanned by the normalized vacuum vector $\Om$, and for
$n \geq 1$, the $n$-photon sector $\fF_\ph^{(n)} := \cS_n[\fh_\pol^{\otimes
  n}] \subseteq \fh_\pol^{\otimes n}$, is the subspace of the $n$-fold
tensor product of $\fh_\pol$ of totally symmetric vectors.

The one-photon Hilbert space $\fh_\pol :=L^2(S_{\sigma, \Lambda} \times
\ZZ_2)$ is the space of square-integrable, divergence-free vector
fields $\vk \mapsto \veps(\vk, +) f(k,+) + \veps(\vk, -) f(k,-)$
supported in the momentum shell $S_{\sigma,\Lambda} := \{ \vk \in
\RR^3: \; \sigma \leq |\vk| < \Lambda \} \subseteq \RR^3$ which
excludes momenta of magnitude below the infrared cutoff $\sigma \geq
0$ and above the ultraviolet cutoff $1 \leq \Lambda < \infty$. The two
transversal polarizations are parametrized by the polarization vectors
$\veps(k,\pm) \perp k$ that are chosen so as to form an orthonormal
frame $\big( \vk/|\vk|, \veps(\vk,+), \veps(\vk,-) \big)$ in $\CC
\otimes \RR^3$, for all $\vk \in S_{\sigma,\Lambda} \setminus
\{\vec{0}\}$. Of course, the map $k \to \veps(k)$ is assumed to be
measurable and, for convenience, chosen to be real, $\veps(k,\pm) \in \RR^3$,
almost everywhere in $\RR^3 \times \ZZ_2$. 

In \eqref{eq-II-vb.01} the field Hamiltonian
\begin{align} \label{eq-II-vb.02} 
H_\ph \ = \ 
\mathrm{d}\Gamma(|k|) 
\ = \ 
\int |k| \,
a^*(k)  \, a(k)  \; dk
\end{align}
represents the energy of the radiation field, and 
\begin{align} \label{eq-II-vb.03} 
\vA_{\sigma, \Lambda}(\vx) \ = \ 
(2\pi )^{-\frac{3}{2}} \int \frac{\veps(k)}{|k|^{\frac{1}{2}}} 
\big( a^*(k) \, e^{-i \vk \cdot \vx} \: + \: a(k) \, e^{i \vk \cdot \vx} \big) 
\: dk 
\end{align}
is the quantized vector potential (in Coulomb gauge). In
\eqref{eq-II-vb.02}, \eqref{eq-II-vb.03}, we denote
elements of $S_{\sigma,\Lambda} \times \ZZ_2 \ni (\vk, \tau)$ by $k :=
(\vk, \tau)$ and then further $-k := (-\vk, \tau)$, $|k| := |\vk|$,
$\int F(k) \, dk := \sum_{\tau = \pm} \int_{\sigma \leq |\vk| < \Lambda}
F(\vk,\tau) \: d^3k$. Furthermore,
we use creation and annihilation operators $a^*(k)$ and $a(k)$, for $k
\in S_{\sigma,\Lambda} \times \ZZ_2$, in \eqref{eq-II-vb.02} and
\eqref{eq-II-vb.03}. These are operator-valued distributions
constituting a Fock representation of the canonical commutation
relations (CCR) on $\fF_\ph$, i.e.,
\begin{align} \label{eq-II-vb.04} 
\big[ a(k_1), \, a(k_2) \big] 
\ = \ 
\big[ a^*(k_1), & \, a^*(k_2) \big] 
\ = \ 0 \; , 
\\[1ex] \label{eq-II-vb.05} 
\big[ a(k_1), \, a^*(k_2) \big] \ = \ \delta(k_1-k_2) 
\; , & \quad 
a(k_1) \Om \ = \ 0 \, ,
\end{align}
for all $k_1, k_2 \in S_{\sigma,\Lambda} \times \ZZ_2$ (integrated
over $k_1$ and $k_2$ against test functions). Finally, the photon
number operator entering the definition of the domain
$\cD(N_\ph^{1/2})$ is given by $N_\ph := \int a^*(k) a(k) \, dk$.

The Lieb-Loss model is defined by the Lieb-Loss (energy) functional
$\cE_{\alpha, \sigma, \Lambda}: \rmH^1(\RR^3) \times \cD(N_\ph^{1/2})
\to \RR$ which results from varying only over products 
$\phi \otimes \psi$ of normalized wave functions of the particle 
$\phi \in L^2(\RR^3)$ and the photon state $\psi \in \fF_\ph$ in the
Rayleigh-Ritz principle, i.e.,
\begin{align} \label{eq-II-vb.06} 
\cE_{\alpha, \sigma, \Lambda}(\phi, \psi)  
\ := \ 
\big\la \phi \otimes \psi \: \big| \ 
H_{\alpha, \sigma, \Lambda} (\phi \otimes \psi) \big\ra \, .
\end{align}
Note that, given a fixed $\phi \in \rmH^1(\RR^3)$ and varying only
over $\psi \in \cD(N_\ph^{1/2})$, the Lieb-Loss functional $\psi
\mapsto \cE_\LL(\phi,\psi)$ becomes the expectation value in $\psi$ of
a Hamiltonian that is quadratic in the boson fields. More
specifically, a simple computation shows that
\begin{align} \label{eq-II-vb.07} 
\cE_{\alpha, \sigma, \Lambda}(\phi, \psi)  
\ = \ &
\frac{1}{2} \big\| \vnabla \phi \big\|_2^2
+ \Big\la \psi \Big| \;
\HH\big( |\phi|^2 , \; \rIm\{ \ol{\phi} \, \vnabla\phi \} \big) 
\: \psi \Big\ra_{\fF} \, ,
\end{align}
where $\la \cdot | \cdot \ra_\fF$ denotes the scalar product on the
photon Fock space $\fF_\ph$ and, for fixed
$\rho: \RR^3 \to \RR^+$ and $\vv: \RR^3 \to \RR^3$, the
quadratic Hamiltonian $\HH[\rho, \vv]$ is given as
\begin{align} \label{eq-II-vb.08}
\HH[\rho, \vv]
\ := \ 
H_\ph + \frac{\alpha}{2} \int \rho(x) \, \vA_{\sigma,\Lambda}^{\: 2}(x) \, d^3x 
+ \sqrt{\alpha} \int \vv(x) \cdot \vA_{\sigma,\Lambda}(x) \, d^3x \, .
\end{align}
As we show below it turns out that the minimal values of the Lieb-Loss
functional is attained for positive wave functions. To exhibit this we
define $r := |\phi| \in \rmH^1(\RR^3; \RR_0^+)$ and choose $\gamma \in
\rmH^1(\RR^3; \RR)$, for a given $\phi \in \rmH^1(\RR^3; \CC)$, so
that 
\begin{align} \label{eq-II-vb.09} 
\phi \ = \ r \, e^{i\gamma} \, ,  
\qquad 
|\phi|^2 \ = \ & r^2 \, ,  
\qquad  
\rIm\{ \ol{\phi} \, \vnabla\phi \} \ = \ r^2 \, \vnabla\gamma \, ,
\\[1ex] \label{eq-II-vb.10} 
\|\vnabla\phi\|_2^2 
\ = \ & 
\|\vnabla r\|_2^2 + \|r \vnabla\gamma\|_2^2 \, ,
\end{align}
and thus
\begin{align} \label{eq-II-vb.11} 
\cE_{\alpha, \sigma, \Lambda}(r \, e^{i\gamma} \; , \; \psi)  
\ = \ &
\frac{1}{2} \|\vnabla r\|_2^2 + \frac{1}{2} \|r \vnabla\gamma\|_2^2
+ \Big\la \psi \Big| \;
\HH\big( r^2 , \; r^2 \, \vnabla\gamma \big) 
\: \psi \Big\ra_{\fF} \, .
\end{align}
Although convenient, the explicit parametrization of Couloumb gauge by
polarization vectors $\veps(\vec{k}, \pm)$ tends to obscure the
picture by introducing a seeming dependence of the model on the choice
of $\veps(\vec{k}, \pm)$, which, however, should be physically
meaningless. For this reason we choose the one-photon space to be the
Hilbert space
\begin{align} \label{eq-II-vb.12}
\fh \ := \ &
P_C \big[ L^2(S_{\sigma, \Lambda}; \CC \otimes \RR^3) \big] 
\\[1ex] \nonumber
\ = \ &
\Big\{ f \in L^2(S_{\sigma, \Lambda}; \CC \otimes \RR^3) \; \Big| \ 
\forall \, \vk \in S_{\sigma, \Lambda}: \ \ \vk \perp f(\vk) \Big\} 
\end{align}
of divergence-free, square-integrable vector fields, where 
$P_C \in \cB\big[ L^2(S_{\sigma, \Lambda}; \CC \otimes \RR^3) \big]$
is the orthogonal projection acting as 
$\big[P_C f \big](\vk) := P_{\vk}^\perp f(\vk)$, with 
$P_{\vk}: \RR^3 \to \RR^3$ being the projection in $\RR^3$ onto the unit 
vector $\vk/\|\vk| \in \mathbb{S}^2$. Note that for
any arbitrary, but fixed, choice of polarization vectors
basis $\{ \veps(\vk, +), \veps(\vk, -) \}_{\vk \in S_{\sigma, \Lambda}}$ 
described above, the map
\begin{align} \label{eq-II-vb.13}
\Xi: \ \fh_\pol \ \to \ \fh \, , \quad
\big[ \Xi f \big](\vk) \ := \ 
\veps(\vk,+) \, f(\vk,+) + \veps(\vk,-) \, f(\vk,-)
\end{align}
is unitary, with $[\Xi^{-1} f](\vk, \pm) = [\Xi^* f](\vk, \pm) =
\veps(\vk,\pm) \cdot f(\vk)$, and allows us to switch between the
photon representations, if necessary.

Accordingly, the photon Fock space we use is 
$\fF_\ph := \fF_b[\fh]$ the bosonic Fock space over divergence-free
vector fields. On $\fF_\ph$ we have a Fock representation of the CCR
of the form
\begin{align} \label{eq-II-vb.14} 
\big[ a(\vk_1, \nu_1), \, a(\vk_2, \nu_2) \big] 
\ = \ 
\big[ a^*(\vk_1, \nu_1), & \, a^*(\vk_2, \nu_2) \big] 
\ = \ 0 \; , 
\\[1ex] \label{eq-II-vb.15} 
\big[ a(\vk_1, \nu_1), \, a^*(\vk_2, \nu_2) \big] 
\ = \ 
\delta(\vk_1-\vk_2) \, \big( P_{\vk_1}^\perp \big)_{\nu_1,\nu_2}
\; , & \quad 
a(k_1) \Om \ = \ 0 \, ,
\end{align}
for all $\vk_1, \vk_2 \in S_{\sigma,\Lambda}$ and $\nu_1, \nu_2 \in \ZZ_3$,
as operator-valued distributions, or
\begin{align} \label{eq-II-vb.16} 
\big[ a(f), \, a(g) \big] 
\ = \ 
\big[ a^*(f), & \, a^*(g) \big] 
\ = \ 0 \; , 
\\[1ex] \label{eq-II-vb.17} 
\big[ a(f), \, a^*(g) \big] 
\ = \ 
\big\la f \, \big| \; P_C \, g \big\ra \; , & \quad 
a(f) \Om \ = \ 0 \, ,
\end{align}
for all $f, g \in \fh$, where we write
\begin{align} \label{eq-II-vb.18} 
a^*(f) \ := \ \sum_{\nu=1}^3 \int f_\nu(\vk) \, a^*(\vk, \nu) \: d^3k
\; , \quad 
a(f) \ := \ \sum_{\nu=1}^3 \int \ol{f_\nu(\vk)} \, a(\vk, \nu) \: d^3k \, .
\end{align}
for all $f = (f_1, f_2, f_3)^t \in \fh$. In this representation
the operator $\vA_{\sigma,\Lambda}(x)$ of the magnetic vector 
potential becomes 
$\vbbA(x) = \big( \bbA_{1}(x) \, , \, \bbA_{2}(x) \, , \, \bbA_{3}(x) \big)$,
with
\begin{align} \label{eq-II-vb.20} 
\bbA_{\mu}(x) 
\ = \ & 
a^*\big( m_\mu(x) \big) + a\big( m_\mu(x) \big) 
\\ \nonumber 
\ = \ & 
\sum_{\nu=1}^3 \int \Big\{
m_{\mu,\nu}(x, \vk) \, a^*( \vk, \nu) + 
\ol{m_{\mu,\nu}(x, \vk)} \, a( \vk, \nu) \Big\} d^3k \, ,
\\[1ex] \label{eq-II-vb.21}
m_{\mu,\nu}(x, \vk) 
& \ := \ 
\frac{\bfone\big[ \sigma \leq |\vk| < \Lambda \big]}{
(2\pi)^{3/2} \, |\vk|^{1/2}} \, \big( P_\vk^\perp \big)_{\mu,\nu} 
\, e^{-i \vk \cdot \vx} \, ,
\end{align}
and the Hamiltonian $\HH\big( r^2 , \; r^2 \, \vnabla\gamma \big)$ 
in \eqref{eq-II-vb.11} turns into
\begin{align} \label{eq-II-vb.22} 
\tHH & \big( r^2 , \; r^2 \, \vnabla \gamma \big) 
\ = \ 
\\[1ex] \nonumber
& H_\ph + \frac{\alpha}{2} \int \big( r(x) \, \vbbA(x) \big)^2 \, d^3x 
+ \sqrt{\alpha} \int \big( r(x) \, \vnabla\gamma(x) \big) \cdot 
\big( r(x) \, \vbbA(x) \big) \, d^3x \, .
\end{align}
Note that the dependence of $\vbbA(x)$ on the cutoff parameters
$0 < \sigma \leq 1$ and $1 \leq \Lambda < \infty$ is not displayed 
anymore.

\newpage
\section{Bogolubov Transformations} \label{sec-III}
%
\setcounter{equation}{0}
Next, we analyze the infimum of $\psi \mapsto \big\la \psi \, \big| \;
\HH\big[ r^2 , r^2 \vnabla\gamma \big] \psi \big\ra$, as 
$\psi \in \cD(N_\ph^{1/2})$ varies over normalized states, by means of
Bogolubov transformations. For a suitable definition of these
in the present context, the choice of the antilinear involution 
$J :\fh \to \fh$ defined by
\begin{align} \label{eq-III-vb.01}
[ J f ](\vk) \ := \ \ol{f(-\vk)} 
\end{align}
plays a key role. Before using $J$, we recall a few facts about
antiunitary maps and generalized creation and annihilation operators.

\subsection{Antiunitary Maps and Generalized Field Operators} 
\label{subsec-III.1}
%
For a general complex Hilbert space $\sfh$ the
Riesz map $\sfR: \sfh \to \sfh^*$, 
$\psi \mapsto \la \psi |$ is a canonical isomorphism
from $\sfh$ onto its dual $\sfh^* = \cB[\sfh; \CC]$.
Moreover, $\sfR$ is \textit{antiunitary}, i.e., it obeys
$\la \sfR(f) | \sfR(g) \ra_{\sfh^*} 
= \la g | f \ra_{\sfh}$. Note that $\sfR$ is not the only
antiunitary map from $\sfh$ to $\sfh^*$, for if 
$\sfu: \sfh \to \sfh$
and $\sfv: \sfh^* \to \sfh^*$ are unitary operators
on $\sfh$ and $\sfh^*$, respectively, then 
$\sfR \circ \sfu: \sfh \to \sfh^*$ and 
$\sfv \circ \sfR: \sfh \to \sfh^*$ are 
antiunitary, too. Conversely, any antiunitary from $\sfh$ to
$\sfh^*$ is of this form.

In the present paper we prefer to work with an \textit{antiunitary}
$\sfJ$ which additionally constitutes an \textit{antilinear
  involution} or \textit{real structure}. Given a general complex
Hilbert space $\sfh$ these are antiunitary bijections $\sfJ: \sfh \to
\sfh$, which obey
\begin{align} \label{eq-III-vb.02}
\sfJ^2 \ = \ \bfone_{\sfh} 
\quad \textit{and} \quad
\forall \, f,g \in \sfh: \ \ 
\la \sfJ(f) | \sfJ(g) \ra_{\sfh} 
\ = \ 
\la g | f \ra_{\sfh} \, .
\end{align}
Given an \textit{antiunitary involution} $\sfJ: \sfh \to \sfh$ we can
define the maximal $\sfJ$-invariant subspace
\begin{align} \label{eq-III-vb.03}
\sfh_\RR 
\ = \ 
\big\{ f \in \sfh \; \big| \ \sfJ f = f \; \big\} 
\ \subseteq \ \sfh \, ,
\end{align}
which is a $\RR$-linear subspace of $\sfh$. Writing
$f \in \sfh$ as $f = f_1 + i f_2$, with
$f_1 := \frac{1}{2}(f + \sfJ f) \in \sfh$ and 
$f_2 := \frac{1}{2i}(f - \sfJ f) \in \sfh$, we obtain
a direct sum decomposition 
$\sfh = \sfh_\RR \oplus i\sfh_\RR$.
Similar to antiunitary operators $\sfh \to \sfh^*$, antiunitary
involutions $\sfh \to \sfh$ are not unique. This gives us freedom to
make a suitable choice for the problem to solve, namely,
\eqref{eq-III-vb.01} in the present case.

To define Bogolubov transformations it is convenient to use
\textit{generalized creation and annihilation operators} which were
first introduced by Araki and Shiraishi in \cite{Araki1971,
  ArakiShiraishi1971} to describe the second quantization of one-body
Hamiltonians. Bogolubov transformations are also discussed in detail
in \cite{Solovej2014, BachLiebSolovej1994}. Given an antiunitary
involution $\sfJ: \sfh \to \sfh$, the generalized creation and
annihilation (field) operators $\cresf, \annsf: \sfh \oplus \sfh \to
\cB[\cD(\sfN^{1/2}); \fF_b(\sfh)]$ are defined by
\begin{align} \label{eq-III-vb.04} 
\cresf(f \oplus \sfJ g) \ := \ a^*(f) + a(g) 
\quad \text{and} \quad
\annsf(f \oplus \sfJ g) \ := \ a(f) + a^*(g) \, ,
\end{align}
for any $f, g \in \sfh$. Note that
\begin{align} \label{eq-III-vb.05} 
\annsf(F) \ = \ \cre(\cJ F) \, ,
\quad \text{with} \quad
\cJ \ := \ \begin{pmatrix} 0 & \sfJ \\ \sfJ & 0 \end{pmatrix} 
\end{align}
being an antiunitary involution on $\sfh \oplus \sfh$. The vectors
in $\sfh \oplus \sfh$ which are invariant under $\cJ$ are of the
form $y \oplus \sfJ y$, with $y \in \sfh$. They form a real subspace
\begin{align} \label{eq-III-vb.06} 
(\sfh \oplus \sfh)_\cJ 
\ := \
\big\{ G \in \sfh \oplus \sfh \: \big| \ G = \cJ G \: \big\}
\ = \
\big\{ y \oplus \sfJ y \: \big| \ y \in \sfh \: \big\}
\ = \
q[\sfh] \, ,
\end{align}
where $\sfq : \sfh \to (\sfh \oplus \sfh)_\cJ$ is the real-linear map
\begin{align} \label{eq-III-vb.07} 
\sfq \ := \ \begin{pmatrix} \bfone \\ \sfJ \end{pmatrix} \, ,
\quad \text{with adjoint} \quad
\sfq^*: (\sfh \oplus \sfh)_\cJ \to \sfh \, , \ 
\sfq^* \ = \ \big( \bfone \, , \, \sfJ \big) \, .
\end{align}
One advantage of the generalized formalism consists in encoding all
orderings in the second quantization of operators, so that we need not
worry about imposing normal-ordering. The price for this is the
slightly modified form of the canonical commutation relations (CCR),
the generalized field operators obey, namely,
\begin{align} \label{eq-III-vb.08} 
\big[ \annsf(F) \, , \, \cresf(F') \big]
\ = \ 
\big\la F \: \big| \ \cS F' \big\ra \, ,
\end{align}
where $\cS$ is a natural symplectic form on $\sfh \oplus \sfh$ given by
\begin{align} \label{eq-III-vb.09} 
\cS \ := \ \begin{pmatrix} 1 & 0 \\ 0 & -1 \end{pmatrix} \, .
\end{align}
%

\subsection{Second Quantization and 
Bogolubov Transformations} \label{subsec-III.2}
%
Next, we introduce the \ul{\textit{second quantization}} of one-photon
operators. Let $\sfJ: \sfh \to \sfh$ be an antiunitary involution and
$\{F_i\}_{i = 1}^\infty \subseteq \sfh \oplus \sfh$ an orthonormal
basis.  For $T = T^* \in \cB[\sfh \oplus \sfh]$ and $y \in \sfh$, 
we define their second quantization 
$\dGamma_\sfJ[T, y] \in \cB[\cD(N_\ph); \fF_\ph)]$ by
\begin{align} \label{eq-III-vb.10} 
\dGamma_\sfJ[T, y]
\ := \ &
\sum_{i,j=1}^\infty \la F_i | \: T F_j \ra \: \cresf(F_i) \, \annsf(F_j) 
\\[1ex] \nonumber 
& + \sum_{i=1}^\infty \big\{ \la F_i | q(y) \ra \, \cresf(F_i) 
+ \ol{\la F_i | \: q(y) \ra} \, \annsf(F_i) \big\} \, . 
\end{align}
Note that the definition \eqref{eq-III-vb.10} of $\dGamma_\sfJ[T, y]$
is independent of the choice of the orthonormal basis 
$\{F_i\}_{i = 1}^\infty \subseteq \sfh \oplus \sfh$. Moreover,
$\dGamma_\sfJ[T, y]$ is self-adjoint on $\cD(N_\ph)$ and
$\dGamma_\sfJ[T, y]$ is semibounded, provided $T \geq 0$. Finally, 
$[a(f), a(g)] = 0$ and $[a^*(f), a^*(g)] = 0$ imply that 
$\dGamma_\sfJ\big[ 
\big(\begin{smallmatrix} a & b \\ c & d \end{smallmatrix} \big), y \big]
= \dGamma_\sfJ\big[ 
\big(\begin{smallmatrix} a & \sfJ b^* \sfJ \\ 
\sfJ c^* \sfJ & d \end{smallmatrix} \big), y \big]$, 
and we can and will henceforth always assume that 
\begin{align} \label{eq-III-vb.10,1} 
b^* \ = \ \sfJ \, b \, \sfJ \, , 
\quad \text{for} \quad
T \ = \ T^* \ = \ 
\big( \begin{smallmatrix} a & b \\ b^* & d \end{smallmatrix} \big) 
\ = \ 
\big( \begin{smallmatrix} a & b \\ \sfJ b \sfJ & d 
\end{smallmatrix} \big) \, .  
\end{align}

A second advantage of the generalized creation and annihilation
operators is that their use eases the definition of Bogolubov
transformations. We recall that \ul{\textit{Bogolubov
    transformations}} are unitary transformations $\widehat{\bbU}$ on
Fock space $\fF_\ph$ which preserve \eqref{eq-III-vb.07} and are
linear in the field operators, i.e., they act as
\begin{align} \label{eq-III-vb.11} 
\widehat{\bbU} \, a^*(f) \, \widehat{\bbU}^* 
\ := \ 
a^*(Uf) + a(\sfJ Vf) + \la \eta | f \ra \, ,
\end{align}
for all $f \in \sfh$, where $U$ and $V$ are linear operators on $\sfh$
and $\eta \in \sfh$. The Bogolubov transformations form a group which
is the semidirect product of the group of \textit{homogenous Bogolubov
  transformations} and the group of \textit{Weyl transformations}. 
That is, every Bogolubov transformation $\widehat{\bbU}$ can be written 
as a composition 
\begin{align} \label{eq-III-vb.12} 
\widehat{\bbU} \ = \  \bbU_\qB \, \bbW_\eta \ = \ \bbW_\mu \, \bbU_\qB 
\end{align}
of a homogeneous Bogolubov transformation $\bbU_\qB$ and a Weyl transformation
$\bbW_\eta$ or a composition of a Weyl transformation
$\bbW_\mu$ and $\bbU_\qB$, but with $\mu \neq \eta$, in general.

\ul{\textit{Homogeneous Bogolubov transformations}} $\bbU_\qB$ are the
special case $\eta = 0$ of \eqref{eq-III-vb.11}. In terms of the
generalized field operators they assume the form
\begin{align} \label{eq-III-vb.13} 
\bbU_\qB \, \cre(F) \, \bbU_\qB^* \ := \ \cre(\qB F)
\, , \quad  
\qB \ \equiv \  \qB(U,V) 
\ := \ 
\begin{pmatrix}
U & \sfJ V \sfJ \\ V & \sfJ U \sfJ  
\end{pmatrix} \, ,
\end{align}
where the form of $\qB$ is determined by \eqref{eq-III-vb.05}, i.e.,
$\cJ \qB = \qB \cJ$, and \eqref{eq-III-vb.11}. Note that this makes
explicit use of the antiunitary involution $\sfJ: \sfh \to \sfh$. The
homogeneous Bogolubov transformation $\bbU_\qB$ is unitary iff it
leaves the CCR invariant and preserves the norm of the vacuum vector
$\Om \in \fF_\ph$, which is equivalent to
\begin{align} \label{eq-III-vb.14} 
\qB^* \, \cS \, \qB \ = \ \cS \ , \quad
\qB \, \cS \, \qB^* \ = \ \cS \ , \quad
\quad \text{and} \quad 
\Tr\big( V^* \, V \big) \ < \ \infty \, .
\end{align}
The second identity in \eqref{eq-III-vb.14} is actually a consequence
of the first, as the latter implies the invertibility of $B$, and then
the second identity follows from the uniqueness of the inverse. The
requirement that $V$ be a Hilbert-Schmidt operator is known as the
\textit{Shale-Stinespring condition}. A simple computation shows that
the second quantization $\dGamma_\sfJ[T, y]$ of $T$ and $y$
transforms under a homogeneous Bogolubov transformation $\bbU_\qB$ with
$\qB \equiv \qB(U,V)$ as 
\begin{align} \label{eq-III-vb.15} 
\bbU_\qB \, \dGamma_\sfJ[ T, \; y ] \, \bbU_\qB^*  
\ = \ 
\dGamma_\sfJ\big[ \qB T \qB^* \, , \; \tfrac{1}{2} q^* \qB q(y) \big] \, .
\end{align}

\ul{\textit{Weyl transformations}} $\bbW_\eta$ are the special case $U
= \bfone_\sfh$ and $V = 0$ of \eqref{eq-III-vb.11}. They act on the
generalized field operators as
\begin{align} \label{eq-III-vb.16} 
\bbW_\eta \, \cre(F) \, \bbW_\eta^* 
\ := \ 
\cre(F) + \la q(\eta) \, | \: F \ra \, .
\end{align}
The unitarity of $\bbW_\eta$ is equivalent to the requirement $\eta
\in \sfh$. Another simple computation shows that the second
quantization $\dGamma_\sfJ[T, y]$ of $T$ and $y$ transforms under a
Weyl transformation $\bbW_\eta$ as
\begin{align} \label{eq-III-vb.17} 
\bbW_\eta \, \dGamma_\sfJ[ T, \; y] \, \bbW_\eta^*  
\ = \ 
\dGamma_\sfJ\Big[ T \, , \ y + \tfrac{1}{2}q^* & T q(\eta) \Big] 
+ \la \eta | \; q^* T q(\eta) \ra + 4 \rRe \la \eta | y \ra \, .
\end{align}
%

\subsection{The Lieb-Loss Model in Terms of Second 
Quantization} 
\label{subsec-III.4}
%
We turn to the analysis of the Lieb-Loss model. Note that
$\dGamma_\sfJ[T, y]$ depends on the choice of the antiunitary
involution $\sfJ: \fh \to \fh$. For the analysis of the Lieb-Loss model
it is of key importance to choose the antiunitary involution $J: \fh
\to \fh$ as
\begin{align} \label{eq-III-vb.18} 
\forall \, f \in \fh, \ \vk \in S_{\sigma,\Lambda}: 
\quad [Jf](\vk) \ := \ \ol{f(-\vk)} 
\end{align}
because with this choice the operator 
$T: \fh_\RR \oplus \fh_\RR \to \fh_\RR \oplus \fh_\RR$ leaves the
real subspace $\fh_\RR \oplus \fh_\RR$ of $\fh \oplus \fh$ invariant,
and the vector $y \in \fh_\RR$ is contained in the real subspace
$\fh_\RR \subseteq \fh$ of $J$-invariant vectors, as is discussed
below.

We identify $\tHH( r^2 , \, r^2 \vnabla \gamma )$ with
$\dGamma_J[T_{r,\alpha}, y_{r,\gamma,\alpha}]$, for suitably chosen
$T_{r,\alpha}$ and $y_{r,\gamma,\alpha}$. We state the result in form
of Lemma~\ref{lem-III-vb.02} below.
%
\begin{lemma} \label{lem-III-vb.02}
Let $J: \fh \to \fh$ be defined by \eqref{eq-III-vb.18} and 
$r, \gamma \in H^1(\RR^3)$. Then the Lieb-Loss functional 
\eqref{eq-II-vb.11} is given by
\begin{align} \label{eq-III-vb.19} 
\cE_{\alpha, \sigma, \Lambda}(r \, e^{i\gamma} \; , \; \psi)  
\ = \ &
\frac{1}{2} \|\vnabla r\|_2^2 + \frac{1}{2} \|r \vnabla\gamma\|_2^2
+ \frac{1}{2} \Big\la \psi \Big| \;
\dGamma_J\big[ T_{r,\alpha} \, , \; y_{r,\gamma,\alpha} \big] 
\: \psi \Big\ra_{\fF} \, ,
\end{align}
where 
\begin{align} \label{eq-III-vb.20}
T_{r,\alpha} \ := \ 
|k|^{-1/2} \, \begin{pmatrix}
2|k|^2 + \Theta_{r,\alpha} & \Theta_{r,\alpha} \\ 
\Theta_{r,\alpha} & \Theta_{r,\alpha} 
\end{pmatrix} \, |k|^{-1/2} \, , 
\end{align}
with $|k|$ denoting the multiplication operator 
$\big[ |k| f \big](\vk) := |k| f(\vk)$ (Fourier multiplier), 
and $\Theta_{r,\alpha}$ being a nonnegative, $J$-invariant, self-adjoint 
Hilbert-Schmidt operator,
$\Theta_{r,\alpha} = \Theta_{r,\alpha}^* = 
\Theta_{r,\alpha}^T = J \Theta_{r,\alpha} J \geq 0$
given by
\begin{align} \label{eq-III-vb.21,1} 
\Theta_{r,\alpha} \ = \ & \Phi_{r,\alpha}^* \, \Phi_{r,\alpha} 
\, , \quad 
\Phi_{r,\alpha} \ = \ 
(\hr *) \, P_C \, \chi_{\sigma,\Lambda} \, ,
\\[1ex] \label{eq-III-vb.21,2} 
\Phi_{r,\alpha}(\vp,\mu \, ; & \, \vk, \nu)
\ := \ \alpha^{1/2} \, (2\pi)^{-3/2} \,
\hr(\vp-\vk) \, \big(P_\vk^\perp \big)_{\mu,\nu} 
\, \chi_{\sigma,\Lambda}(\vk) \, ,
\end{align}
where $\big[ \chi_{\sigma,\Lambda} f \big](\vk) \: := \: 
\bfone[ \sigma \leq |\vk| < \Lambda ] \, f(\vk)$ is a
multiplication operator, and 
$\hr *$ is the convolution operator  
$[\hr * f](\vk) = \int \hr(\vk - \vk') \, f(\vk') \: d^3k'$,
where $\hr \equiv \cF[r]$ denotes the Fourier transform
$\cF[r](\vk) \: := \: (2\pi)^{-3/2} \int e^{-i\vk \cdot x} \, r(x) \, d^3x$
of $r$, normalized as to preserve the $L^2$-scalar product.
\\
Furthermore, $y_{r,\gamma,\alpha} = J[y_{r,\gamma,\alpha}] \in \fh_\RR$ is given by
\begin{align} \label{eq-III-vb.22} 
y_{r,\gamma,\alpha} \ = \ & 
|k|^{-1/2} \, \Phi_{r,\alpha}^* \, \cF[ r \, \vnabla\gamma ]
\quad \Leftrightarrow 
\\ \nonumber 
y_{r,\gamma,\alpha}(\vk,\nu)
\ := \ &
\sum_{\mu=1}^3 \int |\vk|^{-1/2} \, 
\Phi_{r,\alpha}^*\big(\vk,\nu; \vp,\mu\big) \, 
\cF\big[ r \, \partial_\mu\gamma \big](\vp) \: d^3p \, .
\end{align}
\end{lemma}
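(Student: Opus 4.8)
The plan is to establish the operator identity $\tHH\big(r^2, r^2\vnabla\gamma\big) = \tfrac12\,\dGamma_J\big[T_{r,\alpha},\, y_{r,\gamma,\alpha}\big]$ as quadratic forms on $\cD(N_\ph^{1/2})$ by expanding the three summands of \eqref{eq-II-vb.22} and matching them term by term against the definition \eqref{eq-III-vb.10} of $\dGamma_J$; combined with \eqref{eq-II-vb.11} this yields \eqref{eq-III-vb.19}. Everything hinges on one structural observation: since $\chi_{\sigma,\Lambda}$ depends only on $|\vk|$ and $P^\perp_{-\vk} = P^\perp_\vk$ is real, the one-photon vectors $m_\mu(x)$ of \eqref{eq-II-vb.21} satisfy $J m_\mu(x) = m_\mu(x)$, i.e.\ $m_\mu(x) \in \fh_\RR$ — this is exactly why $J$ is chosen as in \eqref{eq-III-vb.18}. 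Consequently, with $\sfq$ the real-linear embedding of \eqref{eq-III-vb.07} built from this $J$, each component of the vector potential acquires the compact generalized form $\bbA_\mu(x) = \cresf\big(\sfq\, m_\mu(x)\big) = \annsf\big(\sfq\, m_\mu(x)\big)$ directly from \eqref{eq-III-vb.04}.

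Next I would record three elementary, basis-independent identities that follow from \eqref{eq-III-vb.04}, \eqref{eq-III-vb.05} and \eqref{eq-III-vb.10}: (i) $\dGamma_J\big[\big(\begin{smallmatrix} h & 0 \\ 0 & 0 \end{smallmatrix}\big),\, 0\big] = \dGamma(h)$ for self-adjoint $h$ on $\fh$ (compute in the basis $\{e_i\oplus 0\}\cup\{0\oplus e_j\}$); (ii) $\dGamma_J\big[\,|G\ra\la G|,\, 0\,\big] = \cresf(G)\,\annsf(G)$ for $G \in \fh\oplus\fh$, together with the linearity and additivity of $T\mapsto\dGamma_J[T,y]$ in both slots; and (iii) $\dGamma_J[0,y] = 2\,\cresf(\sfq\, y)$ for $y\in\fh_\RR$, the factor $2$ appearing because $\sfq\, y$ is $\cJ$-invariant so $\annsf(\sfq\, y) = \cresf(\cJ\sfq\, y) = \cresf(\sfq\, y)$ by \eqref{eq-III-vb.05}. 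Applying these to \eqref{eq-II-vb.22}: by (i), $\hf = \dGamma(|k|) = \tfrac12\dGamma_J\big[\big(\begin{smallmatrix} 2|k| & 0 \\ 0 & 0 \end{smallmatrix}\big),\, 0\big]$, with this block kept on the diagonal and unsymmetrized since $|k|$ is not trace class; by (ii) and $\bbA_\mu(x)^2 = \cresf(\sfq m_\mu(x))\,\annsf(\sfq m_\mu(x))$, one gets $\tfrac{\alpha}{2}\int r(x)^2\,\vbbA(x)^2\,d^3x = \tfrac12\dGamma_J\big[\,\alpha\sum_{\mu}\int r(x)^2\,|\sfq m_\mu(x)\ra\la\sfq m_\mu(x)|\,d^3x,\, 0\,\big]$; and by (iii) with $\bbA_\mu(x) = \cresf(\sfq m_\mu(x))$, one gets $\sqrt{\alpha}\int (r\vnabla\gamma)\cdot(r\vbbA)\,d^3x = \tfrac12\dGamma_J\big[\,0,\, \sqrt{\alpha}\sum_\mu\int r(x)^2\,\partial_\mu\gamma(x)\,m_\mu(x)\,d^3x\,\big]$, the integrated vector lying in $\fh_\RR$ because $m_\mu(x)\in\fh_\RR$ and $r^2\partial_\mu\gamma$ is real. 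Summing the three contributions and using additivity reduces the claim to showing $\widetilde T := \big(\begin{smallmatrix} 2|k| & 0 \\ 0 & 0 \end{smallmatrix}\big) + \alpha\sum_\mu\int r^2\,|\sfq m_\mu(x)\ra\la\sfq m_\mu(x)|\,d^3x = T_{r,\alpha}$ and $\widetilde y := \sqrt{\alpha}\sum_\mu\int r^2\,\partial_\mu\gamma\, m_\mu(x)\,d^3x = y_{r,\gamma,\alpha}$.

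This last step is a kernel computation. Because $m_\mu(x)\in\fh_\RR$, the operator $\sum_\mu\int r^2\,|\sfq m_\mu(x)\ra\la\sfq m_\mu(x)|\,d^3x$ is the $2\times2$ block operator all of whose four entries equal $M := \sum_\mu\int r(x)^2\,|m_\mu(x)\ra\la m_\mu(x)|\,d^3x$. To evaluate the kernel of $M$ (and likewise of $\widetilde y$) I would use the two elementary Fourier facts $\widehat{r^2} = (2\pi)^{-3/2}\,\hr*\hr$ and $\overline{\hr(\vk)} = \hr(-\vk)$ (valid as $r$ is real): the $x$-integral collapses to $\int r(x)^2\,e^{-i(\vk-\vk')\cdot x}\,d^3x = (\hr*\hr)(\vk-\vk')$, while the polarization-index sum $\sum_\mu (P^\perp_\vk)_{\mu\nu}(P^\perp_{\vk'})_{\mu\nu'}$ reproduces precisely the matrix elements produced by the two factors of $P_C$ in $\Phi_{r,\alpha}^*\Phi_{r,\alpha}$. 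Comparison with \eqref{eq-III-vb.21,1}--\eqref{eq-III-vb.21,2} then gives $\alpha M = |k|^{-1/2}\,\Theta_{r,\alpha}\,|k|^{-1/2}$ with $\Theta_{r,\alpha} = \Phi_{r,\alpha}^*\Phi_{r,\alpha}$, hence $\widetilde T = T_{r,\alpha}$ in the form \eqref{eq-III-vb.20}, and the analogous shorter computation gives $\widetilde y = y_{r,\gamma,\alpha}$ as in \eqref{eq-III-vb.22}. Finally, $\Theta_{r,\alpha} = \Phi_{r,\alpha}^*\Phi_{r,\alpha}\geq 0$ is self-adjoint; the factorization together with the reality of $r$ and $P^\perp_\vk$ gives $\Theta_{r,\alpha}^T = \Theta_{r,\alpha}$ and $J\Theta_{r,\alpha}J = \Theta_{r,\alpha}$; and $\Phi_{r,\alpha}$ is Hilbert-Schmidt, since $\|\Phi_{r,\alpha}\|_{\mathrm{HS}}^2 \lesssim \alpha\,\|r\|_2^2\,\Vol(S_{\sigma,\Lambda}) < \infty$ for $r\in H^1(\RR^3)$, whence $\Theta_{r,\alpha}$ is trace class and a fortiori Hilbert-Schmidt.

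I expect the only genuine difficulty to be the bookkeeping in moving between the three descriptions of the objects involved: the physical form of $\vbbA$ with its $e^{\pm i\vk\cdot x}$ factors, the generalized form in $\cresf,\annsf,\sfq$, and the Fourier-multiplier form of $T_{r,\alpha}$ and $\Theta_{r,\alpha}$. In particular one must track carefully the sign flips $\vk\mapsto-\vk$ induced by $J$ — this is why the kernel of $\Theta_{r,\alpha}$ involves $\hr$ evaluated at the difference $\vk-\vk'$ rather than the sum — as well as the powers of $2\pi$ and the placement of the coupling $\alpha$ and the cutoff $\chi_{\sigma,\Lambda}$; beyond that the argument is routine.
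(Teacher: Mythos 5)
Your proposal is correct and follows essentially the same route as the paper: both decompose $\tHH(r^2,r^2\vnabla\gamma)$ into the three summands of \eqref{eq-II-vb.22}, use the $J$-invariance $Jm_\mu(x)=m_\mu(x)$ to write the quadratic term as $\tfrac12\dGamma_J$ of the rank-one integral $\sum_\mu\int r^2\,|\sfq m_\mu(x)\ra\la \sfq m_\mu(x)|\,d^3x$ and the linear term as $\tfrac12\dGamma_J[0,\tilde y]$, and then identify the kernels with $\Phi_{r,\alpha}^*\Phi_{r,\alpha}$ and $|k|^{-1/2}\Phi_{r,\alpha}^*\cF[r\vnabla\gamma]$ via Plancherel. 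Your explicit verification of the Hilbert--Schmidt property of $\Phi_{r,\alpha}$ is a small addition the paper leaves implicit, but otherwise the arguments coincide.
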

\begin{proof}
We first observe that
\begin{align} \label{eq-III-vb.23} 
\frac{\alpha}{2} \int \big( r(x) & \, \vbbA(x) \big)^2 \, d^3x 
\ = \
\sum_{\mu = 1}^3 \int \frac{\alpha}{2} \Big[ 
a^*\big( r(x) \, m_\mu(x) \big) + a\big( r(x) \, m_\mu(x) \big) 
\Big]^2 \, d^3x
\nonumber \\[1ex] 
\ = \ &
\sum_{\mu = 1}^3 \frac{\alpha}{2} \int
\cre\Big( q\big[r(x) \, m_\mu(x)\big] \Big) 
\: \ann\Big( q\big[r(x) \, m_\mu(x)\big] \Big) \; d^3x
\nonumber \\[1ex] 
\ = \ &
\frac{1}{2} \, \dGamma_J\left[ |k|^{-1/2} \,  \begin{pmatrix} 
\Theta_{r,\alpha} & \Theta_{r,\alpha} \\ \Theta_{r,\alpha} & \Theta_{r,\alpha} 
\end{pmatrix} \, |k|^{-1/2} \; , \ 0  \right] \, ,
\end{align}
where $\Theta_{r,\alpha}: \fh \to \fh$ is the bounded operator given by the
integral kernel
\begin{align} \label{eq-III-vb.24} 
|\vk|^{-1/2} \, \Theta_{r,\alpha}(\vk,\nu \, ; & \, \vk', \nu') \, |\vk|^{-1/2}
\ := \ 
\sum_{\mu = 1}^3 \int \alpha \, r^2(x) \: 
m_{\mu,\nu}(x,\vk) \, m_{\mu,\nu'}(x,\vk') \: d^3x \, ,
\end{align}
recalling the definition $m_{\mu,\nu}(x, \vk) 
\, := \, (2\pi)^{-3/2} |\vk|^{-1/2} \chi_{\sigma,\Lambda}(\vk) 
\big( P_\vk^\perp \big)_{\mu,\nu} e^{-i \vk \cdot x}$ from \eqref{eq-II-vb.21}. 
As $J(e^{i \vk \cdot \vx} e_\nu) = e^{i \vk \cdot \vx} e_\nu$ we have that 
$J[r(x) m_\mu(x)] = r(x) m_\mu(x)$ and hence
\begin{align} \label{eq-III-vb.25} 
\Theta_{r,\alpha} \ = \ J \, \Theta_{r,\alpha} 
\ = \ 
\Theta_{r,\alpha} \, J \ = \ J \, \Theta_{r,\alpha} \, J \, .
\end{align}
Moreover, using the Plancherel theorem, we have that 
\begin{align} \label{eq-III-vb.25,1} 
\Theta_{r,\alpha} \ = \ \Phi_{r,\alpha}^* \, \Phi_{r,\alpha} \, ,
\end{align}
where $\Phi_{r,\alpha} = \alpha^{1/2} (2\pi)^{-3/2} 
(\hr *) P_C \chi_{\sigma,\Lambda} |\vk|^{-1/2}$ 
is defined by the integral kernel
\begin{align} \label{eq-III-vb.25,2} 
\Phi_{r,\alpha}(\vp,\mu \, ; & \, \vk, \nu)
\ := \ 
\frac{\alpha^{1/2}}{(2\pi)^{3/2}} \, \hr(\vp-\vk) \, 
\big(P_\vk^\perp \big)_{\mu,\nu} \, \chi_{\sigma,\Lambda}(\vk) \, ,
\end{align}
i.e., $\hr *$ is the convolution operator  
$[\hr * f](\vk) = \int \hr(\vk - \vk') \, f(\vk') \: d^3k'$,
convolving $f$ with the Fourier transform
\begin{align} \label{eq-III-vb.28} 
\cF[r](\vk) \ \equiv \ \hr(\vk) 
\ := \ 
\int e^{-i\vk \cdot x} \, r(x) \, \frac{d^3x}{(2\pi)^{-3/2}}
\end{align}
of $r$, normalized as to preserve the $L^2$-scalar product.
\\
Similarly, we obtain
\begin{align} \label{eq-III-vb.29} 
& \alpha^{1/2} \int r^2(x) \, \vnabla\gamma(x) \cdot \vbbA(x) \, d^3x 
\nonumber \\[1ex] 
&  = 
\sum_{\mu = 1}^3 \int \alpha^{1/2}  
\Big\{ a^*\big( r^2(x) \, \partial_\mu\gamma(x) \, m_\mu(x) \big) 
+ a\big( r^2(x) \, \partial_\mu\gamma(x) \, m_\mu(x) \big) \Big\} \, d^3x
\nonumber \\[1ex] 
&  =  
\sum_{\mu = 1}^3 \int \frac{\alpha^{1/2}}{2}  
\Big\{ \cre\big( q[r^2(x) \, \partial_\mu\gamma(x) \, m_\mu(x)] \big) 
+ \ann\big( q[r^2(x) \, \vnabla\gamma(x) \cdot \vm(x)] \big) \Big\} \, d^3x
\nonumber \\[1ex] 
&  = 
\frac{1}{2} \, \dGamma_J\big[ 0 \; , \ y_{r,\gamma,\alpha} \big] \, ,
\end{align}
where $y_{r,\gamma,\alpha} \in \fh$ is given as
\begin{align} \label{eq-III-vb.30} 
y_{r,\gamma,\alpha}(\vk,\nu)
\ := \ &
\sum_{\mu = 1}^3 \int r^2(x) \: \partial_\mu \gamma(x) \: \alpha^{1/2} \, 
m_{\mu,\nu}(x,\vk) \: d^3 x \, .
\end{align}
Note that $J m_\mu(x) = m_\mu(x)$ implies 
$y_{r,\gamma,\alpha} = J[y_{r,\gamma,\alpha}] \in \fh_\RR$ and the 
Plancherel theorem yields
$y_{r,\gamma,\alpha} = |k|^{-1/2} \, \Phi_{r,\alpha}^* \cF[ r \, \vnabla\gamma ]$, 
i.e., 
\begin{align} \label{eq-III-vb.30,1} 
y_{r,\gamma,\alpha}(\vk,\nu)
\ := \ 
\sum_{\mu=1}^3 \int |\vk|^{-1/2} \, \Phi_{r,\alpha}^*\big(\vk,\nu; \vp,\mu\big) \, 
\cF\big[ r \, \partial_\mu\gamma \big](\vp) \: d^3p \, .
\end{align}
\end{proof}

\newpage
\section{Minimization over Photon States} \label{sec-IV} 
%
\setcounter{equation}{0}
%
\subsection{Weyl Transformations and \\ Positivity of the 
Electron Wave Function} \label{subsec-IV.1}
%
In this section we show that the optimal electron wave function is
nonnegative. More precisely, given any normalized complex-valued
electron wave function $\phi \in H^1(\RR^3)$, we show that the
Lieb-Loss functional for the electron wave function $|\phi| \in
H^1(\RR^3)$ yields a lower value, if minimized over all photon
states. This is done by a suitable Weyl transformation that eliminates
the term in the Hamiltonian which is linear in the field
operators. The proper choice \eqref{eq-III-vb.18} of the antiunitary
$J$ is of key importance for the construction of this Weyl
transformation. Equally important is the observation, that
the energy shift induced by this Weyl transformation is balanced
by the term $\frac{1}{2}\|r \vnabla\gamma\|_2^2$ that vanishes
for real $\phi$.

We start with a preparatory lemma.
%
\begin{lemma} \label{lem-IV-vb.01}
Let $\kappa \in \cB[\fh]$ be a bounded operator and $\delta \in \RR^+$.
Then
\begin{align} \label{eq-IV-vb.00,1} 
\kappa \, \big( \delta^2 + \kappa^* \kappa \big)^{-1} \, \kappa^* 
\ \leq \ 
\bfone \, .
\end{align}
\end{lemma}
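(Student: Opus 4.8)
The plan is to reduce the claim to the elementary scalar inequality $x/(\delta^2+x)\leq 1$ for $x\geq 0$, transported to the operator level by the standard intertwining between $\kappa^*\kappa$ and $\kappa\kappa^*$. Since the inverse $(\delta^2+\kappa^*\kappa)^{-1}$ already appears in the statement, we are implicitly in the regime where $\delta^2+\kappa^*\kappa$ is boundedly invertible; this holds whenever $\delta>0$, because then $\delta^2+\kappa^*\kappa$ is self-adjoint with spectrum contained in $[\delta^2,\infty)$, and the same applies to $\delta^2+\kappa\kappa^*$, both possessing bounded positive inverses.

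First I would record the intertwining relation
\begin{align*}
\kappa\,(\delta^2+\kappa^*\kappa)^{-1} \ = \ (\delta^2+\kappa\kappa^*)^{-1}\,\kappa ,
\end{align*}
which follows purely algebraically from the identity $(\delta^2+\kappa\kappa^*)\kappa=\kappa\,(\delta^2+\kappa^*\kappa)$ on multiplying by $(\delta^2+\kappa\kappa^*)^{-1}$ on the left and by $(\delta^2+\kappa^*\kappa)^{-1}$ on the right; no functional calculus is needed beyond the bounded invertibility noted above. Using it, the left-hand side of \eqref{eq-IV-vb.00,1} becomes
\begin{align*}
\kappa\,(\delta^2+\kappa^*\kappa)^{-1}\kappa^*
\ = \ (\delta^2+\kappa\kappa^*)^{-1}\,\kappa\kappa^*
\ = \ \bfone \; - \; \delta^2\,(\delta^2+\kappa\kappa^*)^{-1},
\end{align*}
and since $(\delta^2+\kappa\kappa^*)^{-1}\geq 0$ (it is the inverse of a positive operator whose spectrum is bounded below by $\delta^2>0$), the operator $\delta^2\,(\delta^2+\kappa\kappa^*)^{-1}$ is positive, whence $\kappa\,(\delta^2+\kappa^*\kappa)^{-1}\kappa^*\leq\bfone$, as claimed.

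I do not expect any genuine obstacle: this is a routine bounded-operator estimate, and the only points meriting a word of justification are the invertibility of $\delta^2+\kappa^*\kappa$ (which needs $\delta>0$) and the algebraic derivation of the intertwining identity. If one prefers to bypass even that, a direct variational argument also works: for $\psi\in\fh$ set $\phi:=(\delta^2+\kappa^*\kappa)^{-1}\kappa^*\psi$, so that $\kappa^*\psi=(\delta^2+\kappa^*\kappa)\phi$ and hence
\begin{align*}
\big\langle \psi \,\big|\, \kappa\,(\delta^2+\kappa^*\kappa)^{-1}\kappa^*\psi \big\rangle
\ = \ \big\langle \kappa^*\psi \,\big|\, \phi \big\rangle
\ = \ \delta^2\,\|\phi\|^2 + \|\kappa\phi\|^2 \ \geq \ 0 ;
\end{align*}
combining this with Cauchy--Schwarz, the identity $\delta^2\|\phi\|^2+\|\kappa\phi\|^2=\langle\psi|\kappa\phi\rangle\leq\|\psi\|\,\|\kappa\phi\|$ forces $\|\kappa\phi\|\leq\|\psi\|$, and therefore $\langle\psi|\kappa(\delta^2+\kappa^*\kappa)^{-1}\kappa^*\psi\rangle\leq\|\psi\|\,\|\kappa\phi\|\leq\|\psi\|^2$, which is exactly \eqref{eq-IV-vb.00,1}.
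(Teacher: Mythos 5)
Your proposal is correct, and both of your arguments are valid; the route differs from the paper's. The paper first notes that the operator-monotonicity argument only works when $\kappa$ is invertible, and then treats the general case by forming the self-adjoint block operator $A=\bigl(\begin{smallmatrix}\delta & \kappa^*\\ \kappa & -\delta\end{smallmatrix}\bigr)$ on $\fh\oplus\fh$, observing that $A^2$ is the diagonal operator $M=\mathrm{diag}(\delta^2+\kappa^*\kappa,\ \delta^2+\kappa\kappa^*)\geq\delta^2\cdot\bfone$, concluding that $A$ is invertible with $A^{-1}=M^{-1}A$, and reading off from the lower-right corner of $AM^{-1}A=\bfone$ the identity $\bfone=\delta^2(\delta^2+\kappa\kappa^*)^{-1}+\kappa(\delta^2+\kappa^*\kappa)^{-1}\kappa^*$. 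Your first argument reaches exactly the same identity, but more directly: the intertwining relation $\kappa(\delta^2+\kappa^*\kappa)^{-1}=(\delta^2+\kappa\kappa^*)^{-1}\kappa$ is purely algebraic, requires no invertibility of $\kappa$, and immediately exhibits the left-hand side as $\bfone-\delta^2(\delta^2+\kappa\kappa^*)^{-1}\leq\bfone$; this is arguably the more economical derivation of the same fact. Your second, variational argument is a further independent and fully elementary verification at the level of quadratic forms, and it too is correct. Either of your arguments could replace the paper's block-matrix computation without loss.
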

\begin{proof}
Note that, if $\kappa$ is invertible, the assertion follows trivially
from the operator monotonicity of $A \mapsto A^{-1}$, namely,
\begin{align} \label{eq-IV-vb.00,2} 
\kappa \, \big( \delta^2 + \kappa^* \kappa \big)^{-1} \, \kappa^* 
\ \leq \ 
\kappa \, \big( \kappa^* \kappa \big)^{-1} \, \kappa^* 
\ = \ 
\kappa \, \kappa^{-1} \, \big( \kappa^* \big)^{-1} \, \kappa^* 
\ = \
\bfone \, .
\end{align}
If $\kappa$ is, however, not invertible then we define the bounded
operator $A \in \cB[\fh \oplus \fh]$ by 
\begin{align} \label{eq-IV-vb.00,3} 
A \ := \ 
\begin{pmatrix} 
\delta    & \kappa^* \\ 
\kappa & -\delta 
\end{pmatrix} 
\ = \ A^* \, ,
\end{align}
and observe that
\begin{align} \label{eq-IV-vb.00,4} 
A^2 \ = \ 
\begin{pmatrix} 
\delta^2 + \kappa^* \kappa & 0 \\ 
0  & \delta^2 + \kappa \kappa^* 
\end{pmatrix} 
\ =: \ M \ \geq \ \delta^2 \cdot \bfone  
\end{align}
clearly is invertible. Hence
\begin{align} \label{eq-IV-vb.00,5} 
A^2 \, M^{-1} \ = \ M^{-1} \, A^2 \ = \ \bfone \, ,
\end{align}
and $A$ has a left inverse $M^{-1} A$ and a right inverse $A M^{-1}$.
Thus $A$ is invertible and its left and right inverses coincide.
In particular, 
\begin{align} \label{eq-IV-vb.00,6} 
& \begin{pmatrix} 
\bfone & 0 \\ 
0  & \bfone 
\end{pmatrix} 
\ = \
A \, M^{-1} \, A \ = \ 
\\[1ex] \nonumber
& \begin{pmatrix} 
(\delta^2 + \kappa^* \kappa)^{-1} \delta^2 + 
\kappa^* (\delta^2 + \kappa \kappa^*)^{-1} \kappa  
& \delta (\delta^2 + \kappa^* \kappa)^{-1} \kappa^* 
- \kappa^* (\delta^2 + \kappa \kappa^*)^{-1} \delta \\ 
\kappa (\delta^2 + \kappa^* \kappa)^{-1} \delta
- \delta (\delta^2 + \kappa \kappa^*)^{-1} \kappa 
& (\delta^2 + \kappa \kappa^*)^{-1} \delta^2 + 
\kappa (\delta^2 + \kappa^* \kappa)^{-1} \kappa^*
\end{pmatrix} \, .  
\end{align}
Evaluating the lower right corner, we obtain
\begin{align} \label{eq-IV-vb.00,7} 
\bfone \ = \ 
(\delta^2 + \kappa \kappa^*)^{-1} \delta^2 + 
\kappa (\delta^2 + \kappa^* \kappa)^{-1} \kappa^*
\ \geq \ 
\kappa^* (\delta^2 + \kappa^* \kappa)^{-1} \kappa \, . 
\end{align}
\end{proof}

\begin{lemma} \label{lem-IV-vb.02}
Let $J: \fh \to \fh$ be defined by \eqref{eq-III-vb.18},
$r, \gamma \in H^1(\RR^3)$, and $T_{r,\alpha}$, $\Theta_{r,\alpha}$, and 
$y_{r,\gamma,\alpha} \in \fh_\RR$ as in 
\eqref{eq-III-vb.20}-\eqref{eq-III-vb.22}. Then there is a unique
$\eta_{r,\gamma} \in \fh_\RR$ such that 
\begin{align} \label{eq-IV-vb.01}
y_{r,\gamma,\alpha} 
\ = \ 
\frac{1}{2} \, q^* \, T_{r,\alpha} \, q (\eta_{r,\gamma}) \, . 
\end{align}
Moreover, as a quadratic form
\begin{align} \label{eq-IV-vb.02} 
\dGamma_J\big[ T_{r,\alpha} , \; y_{r,\gamma,\alpha} \big] 
\ \geq \
\bbW_{\eta_{r,\gamma}} \, 
\dGamma_J\big[ T_{r,\alpha} , \; 0 \big] 
\, \bbW_{\eta_{r,\gamma}}^* \: - \: \| r \, \vnabla\gamma \|_2^2 \, .
\end{align}
\end{lemma}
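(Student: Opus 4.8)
The plan is to reduce the inequality~\eqref{eq-IV-vb.02} to a completion-of-squares computation at the level of the quadratic form $\dGamma_J[T_{r,\alpha},y_{r,\gamma,\alpha}]$, using the Weyl transformation law~\eqref{eq-III-vb.17}. First I would establish existence and uniqueness of $\eta_{r,\gamma}\in\fh_\RR$ solving~\eqref{eq-IV-vb.01}. The operator $\tfrac12 q^* T_{r,\alpha}\, q$ is a real-linear self-adjoint operator on $\fh_\RR$ (it maps $\fh_\RR$ into $\fh_\RR$ because $T_{r,\alpha}$ commutes with $\cJ$, which in turn follows from the $J$-invariance of $\Theta_{r,\alpha}$ recorded in~\eqref{eq-III-vb.25} together with the form~\eqref{eq-III-vb.20}). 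A direct computation gives
\begin{align*}
\tfrac12 \, q^* T_{r,\alpha}\, q
\ = \
\tfrac12\,\bigl(\bfone,\,J\bigr)\,|k|^{-1/2}
\begin{pmatrix} 2|k|^2+\Theta_{r,\alpha} & \Theta_{r,\alpha} \\ \Theta_{r,\alpha} & \Theta_{r,\alpha}\end{pmatrix}
|k|^{-1/2}\begin{pmatrix}\bfone\\ J\end{pmatrix}
\ = \
|k|^{-1/2}\bigl(|k|^2+2\,\Theta_{r,\alpha}\bigr)\,|k|^{-1/2}
\end{align*}
on $\fh_\RR$ (using $J|k|J=|k|$, $J\Theta_{r,\alpha}J=\Theta_{r,\alpha}$). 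Since $\Theta_{r,\alpha}\geq 0$ and $|k|\geq\sigma>0$ on $S_{\sigma,\Lambda}$, this operator is bounded, self-adjoint, and bounded below by $\sigma\cdot\bfone>0$, hence boundedly invertible; so $\eta_{r,\gamma}:=2\,(q^*T_{r,\alpha}q)^{-1} y_{r,\gamma,\alpha}$ is the unique solution, and it lies in $\fh_\RR$ because $y_{r,\gamma,\alpha}\in\fh_\RR$ by Lemma~\ref{lem-III-vb.02}.

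Next I would apply~\eqref{eq-III-vb.17} with $T=T_{r,\alpha}$ and $y=0$: since $\bbW_{\eta_{r,\gamma}}\dGamma_J[T_{r,\alpha},0]\bbW_{\eta_{r,\gamma}}^* = \dGamma_J[T_{r,\alpha},\,\tfrac12 q^*T_{r,\alpha}q(\eta_{r,\gamma})] + \la\eta_{r,\gamma}\,|\,q^*T_{r,\alpha}q(\eta_{r,\gamma})\ra$, and $\tfrac12 q^* T_{r,\alpha}q(\eta_{r,\gamma}) = y_{r,\gamma,\alpha}$ by~\eqref{eq-IV-vb.01}, this reads
\begin{align*}
\bbW_{\eta_{r,\gamma}}\,\dGamma_J\big[T_{r,\alpha},0\big]\,\bbW_{\eta_{r,\gamma}}^*
\ = \
\dGamma_J\big[T_{r,\alpha},\,y_{r,\gamma,\alpha}\big]
\ + \ \la\eta_{r,\gamma}\,|\,q^*T_{r,\alpha}q(\eta_{r,\gamma})\ra\, .
\end{align*}
Thus~\eqref{eq-IV-vb.02} is equivalent to the scalar inequality $\la\eta_{r,\gamma}\,|\,q^*T_{r,\alpha}q(\eta_{r,\gamma})\ra \leq \|r\,\vnabla\gamma\|_2^2$, which by~\eqref{eq-IV-vb.01} equals $2\,\la\eta_{r,\gamma}\,|\,y_{r,\gamma,\alpha}\ra = 4\,\la y_{r,\gamma,\alpha}\,|\,(q^*T_{r,\alpha}q)^{-1}\,y_{r,\gamma,\alpha}\ra$.

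It then remains to bound $\la y_{r,\gamma,\alpha}\,|\,(q^*T_{r,\alpha}q)^{-1}\,y_{r,\gamma,\alpha}\ra \leq \tfrac14\|r\vnabla\gamma\|_2^2$. Here I would use the explicit form $y_{r,\gamma,\alpha} = |k|^{-1/2}\Phi_{r,\alpha}^*\,\cF[r\vnabla\gamma]$ from~\eqref{eq-III-vb.22} and $q^*T_{r,\alpha}q = 2|k|^{-1/2}(|k|^2+2\Theta_{r,\alpha})|k|^{-1/2}$ with $\Theta_{r,\alpha}=\Phi_{r,\alpha}^*\Phi_{r,\alpha}$. Writing $\kappa := \sqrt{2}\,\Phi_{r,\alpha}\,|k|^{-1/2}$ so that $|k|^{-1}\Theta_{r,\alpha} = \tfrac12|k|^{-1/2}\kappa^*\kappa|k|^{1/2}$... more cleanly, conjugating by $|k|^{1/2}$ reduces the bound to
\begin{align*}
\la \Phi_{r,\alpha}^*\,\cF[r\vnabla\gamma]\ \big|\ \big(2|k|^2+4\Theta_{r,\alpha}\big)^{-1}\,\Phi_{r,\alpha}^*\,\cF[r\vnabla\gamma]\ra \ \leq \ \tfrac14\|\cF[r\vnabla\gamma]\|_2^2,
\end{align*}
and since $|k|^2\geq 0$ this follows from $\Phi_{r,\alpha}\,(4\Theta_{r,\alpha})^{-1}\,\Phi_{r,\alpha}^* = \tfrac14\,\Phi_{r,\alpha}(\Phi_{r,\alpha}^*\Phi_{r,\alpha})^{-1}\Phi_{r,\alpha}^* \leq \tfrac14\,\bfone$ — precisely the content of Lemma~\ref{lem-IV-vb.01} (applied with $\kappa=\Phi_{r,\alpha}$, $\delta$ absorbing the $|k|^2$ term via the monotonicity argument already used there), together with Plancherel $\|\cF[r\vnabla\gamma]\|_2 = \|r\vnabla\gamma\|_2$. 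The main obstacle is bookkeeping: tracking the $|k|^{\pm1/2}$ dressings and the factors of $2$ through the identification $\Theta_{r,\alpha}=\Phi_{r,\alpha}^*\Phi_{r,\alpha}$ so that Lemma~\ref{lem-IV-vb.01} applies on the nose with the $|k|^2\geq\sigma^2$ term playing the role of $\delta^2$; once the algebra is aligned, positivity does the rest.
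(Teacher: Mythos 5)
Your proposal is correct and follows essentially the same route as the paper: use the $J$-invariance of $\Theta_{r,\alpha}$ to reduce $q^*T_{r,\alpha}q$ to $2|k|^{-1/2}(|k|^2+2\Theta_{r,\alpha})|k|^{-1/2}$ (invertible thanks to $|k|\geq\sigma>0$), apply the Weyl transformation law \eqref{eq-III-vb.17} to reduce \eqref{eq-IV-vb.02} to the scalar bound $\la\eta_{r,\gamma}|q^*T_{r,\alpha}q\,\eta_{r,\gamma}\ra\leq\|r\vnabla\gamma\|_2^2$, and conclude via $\Theta_{r,\alpha}=\Phi_{r,\alpha}^*\Phi_{r,\alpha}$, the monotone replacement $|k|^2\geq\sigma^2$, and Lemma~\ref{lem-IV-vb.01}. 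The factor bookkeeping you flag works out exactly as you describe.
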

\begin{proof}
We first compute that
\begin{align} \label{eq-IV-vb.05}
q^* \, T_{r,\alpha} \, q 
\ = \ &
|k|^{-1/2} \, ( \bfone \ J ) \,  
\begin{pmatrix}
2|k|^2 + \Theta_{r,\alpha} & \Theta_{r,\alpha} \\ \Theta_{r,\alpha} & \Theta_{r,\alpha} 
\end{pmatrix} \,  
\begin{pmatrix}
\bfone \\ J 
\end{pmatrix} \, |k|^{-1/2} 
\nonumber \\[1ex]
\ = \ &
|k|^{-1/2} \, \big( 2|k|^2 + \Theta_{r,\alpha} + J \Theta_{r,\alpha} 
+ \Theta_{r,\alpha} J + J \Theta_{r,\alpha} J \big) \, |k|^{-1/2} \, ,
\end{align}
so $\eta_{r,\gamma}$ sought for fulfils
\begin{align} \label{eq-IV-vb.06}
2 y_{r,\gamma,\alpha} 
\ = \ 
|k|^{-1/2} \, \big( 2|k|^2 + \Theta_{r,\alpha} + J \Theta_{r,\alpha} 
+ \Theta_{r,\alpha} J + J \Theta_{r,\alpha} J \big) 
\, |k|^{-1/2} \, \eta_{r,\gamma} \, .
\end{align}
If $J$ was any general antiunitary map, the determination of
$\eta_{r,\gamma}$ from \eqref{eq-IV-vb.06} appeared to be fairly
complicated, but thanks to our choice \eqref{eq-III-vb.18} of $J$ we
have that $\Theta_{r,\alpha} = J \Theta_{r,\alpha} = \Theta_{r,\alpha} J 
= J \Theta_{r,\alpha} J$ and
$y_{r,\gamma,\alpha} = J y_{r,\gamma,\alpha}$. Therefore,
$y_{r,\gamma,\alpha}$ is an element of $\fh_\RR$ which is left invariant 
by $q^* T_{r,\alpha} q = |k|^{-1/2} \big( 2|k|^2 + 4 \Theta_{r,\alpha}
\big) |k|^{-1/2}$.  Moreover, 
$q^* T_{r,\alpha} q \geq 2 |k| \geq 2 \sigma \cdot \bfone >0$ is
strictly positive and hence invertible, due to 
$\Theta_{r,\alpha} \geq 0$. (Here, the infrared cutoff $\sigma >0$
comes in handy.) It follows that
\begin{align} \label{eq-IV-vb.07}
\eta_{r,\gamma} 
\ = \ 
|k|^{1/2} \, \big( |k|^2 + 2 \Theta_{r,\alpha} \big)^{-1} 
\, |k|^{1/2} \, y_{r,\gamma,\alpha}
\ \in \ \fh_\RR 
\end{align}
and 
\begin{align} \label{eq-IV-vb.08}
\big\la \eta_{r,\gamma} \; \big| \ 
|k|^{-1/2} & \big( |k|^2 + 2 \Theta_{r,\alpha} \big) \,
|k|^{-1/2} \, \eta_{r,\gamma} \big\ra
\nonumber \\[1ex] 
\ = \ &
\big\la |k|^{1/2} \, y_{r,\gamma,\alpha} \; \big| \ 
\big( |k|^2 + 2 \Theta_{r,\alpha} \big)^{-1} \, 
|k|^{1/2} \, y_{r,\gamma,\alpha} \big\ra 
\nonumber \\[1ex]
\ = \ &
\Big\la \cF[r \, \vnabla\gamma] \; \Big| \ \Phi_{r,\alpha} \,
\big( |k|^2 + 2 \Phi_{r,\alpha}^* \Phi_{r,\alpha} \big)^{-1} \, 
\Phi_{r,\alpha}^* \, \cF[r \, \vnabla\gamma] \Big\ra 
\nonumber \\[1ex]
\ \leq \ &
\Big\la \cF[r \, \vnabla\gamma] \; \Big| \; \cF[r \, \vnabla\gamma] \Big\ra 
\ = \ 
\| r \, \vnabla\gamma \|_2^2 \, ,
\end{align}
estimating $\big( |k|^2 + 2 \Phi_{r,\alpha}^* \Phi_{r,\alpha} \big)^{-1} \leq
\big( \sigma^2 + 2 \Phi_{r,\alpha}^* \Phi_{r,\alpha} \big)^{-1}$ and then
using Lemma~\ref{lem-IV-vb.01}. We obtain the assertion from here
by \eqref{eq-III-vb.17}.
\end{proof}

As a corollary of Lemma~\ref{lem-IV-vb.02}, we now find the
following lower bound on the Lieb-Loss functional defined in
\eqref{eq-0.04}.
%
\begin{corollary} \label{cor-IV-vb.03}
Let $\phi \in H^1(\RR^3)$ and $\psi \in \fF_\ph$ be normalized
wave functions. Then there exists a unitary Weyl transformation
$\bbW_\phi$ such that
\begin{align} \label{eq-IV-vb.09}
\cE_{\alpha, \Lambda}\big( \phi, \psi \big)  
\ \geq \ 
\cE_{\alpha, \Lambda}\big(|\phi|, \bbW_\phi \psi \big) \, .
\end{align}
\end{corollary}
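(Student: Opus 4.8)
The plan is to reduce Corollary~\ref{cor-IV-vb.03} to Lemma~\ref{lem-IV-vb.02} by means of the polar decomposition $\phi = r\,e^{i\gamma}$ from \eqref{eq-II-vb.09}--\eqref{eq-II-vb.11} together with Lemma~\ref{lem-III-vb.02}. First I would write $\phi = r\,e^{i\gamma}$ with $r = |\phi| \in H^1(\RR^3; \RR_0^+)$ and $\gamma \in H^1(\RR^3; \RR)$, so that by Lemma~\ref{lem-III-vb.02},
\begin{align} \label{eq-IV-vb.10}
\cE_{\alpha, \sigma, \Lambda}(r\,e^{i\gamma}, \psi)
\ = \
\tfrac{1}{2}\|\vnabla r\|_2^2 + \tfrac{1}{2}\|r\vnabla\gamma\|_2^2
+ \tfrac{1}{2}\big\la \psi \big| \dGamma_J[T_{r,\alpha}, y_{r,\gamma,\alpha}]\,\psi\big\ra_\fF \, .
\end{align}
Then I invoke Lemma~\ref{lem-IV-vb.02}: there is a unique $\eta_{r,\gamma} \in \fh_\RR$ with \eqref{eq-IV-vb.01}, and as a quadratic form
\begin{align} \label{eq-IV-vb.11}
\dGamma_J[T_{r,\alpha}, y_{r,\gamma,\alpha}]
\ \geq \
\bbW_{\eta_{r,\gamma}}\,\dGamma_J[T_{r,\alpha}, 0]\,\bbW_{\eta_{r,\gamma}}^* \ - \ \|r\vnabla\gamma\|_2^2 \, .
\end{align}

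Substituting \eqref{eq-IV-vb.11} into \eqref{eq-IV-vb.10}, the two occurrences of $\|r\vnabla\gamma\|_2^2$ have opposite signs and cancel (here the factor $\tfrac12$ in front of the $\dGamma_J$ term matches the $\tfrac12$ in front of $\|r\vnabla\gamma\|_2^2$, so the cancellation is exact), giving
\begin{align} \label{eq-IV-vb.12}
\cE_{\alpha, \sigma, \Lambda}(r\,e^{i\gamma}, \psi)
\ \geq \
\tfrac{1}{2}\|\vnabla r\|_2^2
+ \tfrac{1}{2}\big\la \bbW_{\eta_{r,\gamma}}^*\psi \big| \dGamma_J[T_{r,\alpha}, 0]\,\bbW_{\eta_{r,\gamma}}^*\psi\big\ra_\fF \, .
\end{align}
Now I recognize the right-hand side. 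Applying Lemma~\ref{lem-III-vb.02} again, this time with the real electron wave function $r = |\phi|$ (so $\gamma \equiv 0$, hence $y_{r,0,\alpha} = 0$ and $\|r\vnabla\gamma\|_2^2 = 0$), the expression $\tfrac12\|\vnabla r\|_2^2 + \tfrac12\la \psi' | \dGamma_J[T_{r,\alpha},0]\psi'\ra_\fF$ is precisely $\cE_{\alpha,\sigma,\Lambda}(|\phi|, \psi')$. Setting $\bbW_\phi := \bbW_{\eta_{r,\gamma}}^*$ (a unitary Weyl transformation, since $\eta_{r,\gamma} \in \fh \subseteq \fh$ by \eqref{eq-IV-vb.07} and unitarity of $\bbW_\eta$ is equivalent to $\eta \in \fh$) and $\psi' = \bbW_\phi \psi$, which is again normalized, yields $\cE_{\alpha,\sigma,\Lambda}(\phi,\psi) \geq \cE_{\alpha,\sigma,\Lambda}(|\phi|, \bbW_\phi\psi)$, which is \eqref{eq-IV-vb.09}.

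The only genuine content has already been supplied by the two preceding lemmas, so the main point to get right is bookkeeping: verifying that the $\dGamma_J[T_{r,\alpha},0]$ term produced by Lemma~\ref{lem-IV-vb.02} really is the $\dGamma_J$ term attached to the \emph{real} wave function $|\phi|$ in Lemma~\ref{lem-III-vb.02}. This holds because $T_{r,\alpha}$ depends on $\phi$ only through $r = |\phi|$ (via $\Theta_{r,\alpha}$ in \eqref{eq-III-vb.20}--\eqref{eq-III-vb.21,2}), and because for a real wave function the linear term $y$ vanishes identically; so there is nothing further to check. One should also note that the statement as written uses $\cE_{\alpha,\Lambda}$ whereas the body works with $\cE_{\alpha,\sigma,\Lambda}$; as remarked in step~(1) of the introduction, the infrared cutoff $\sigma>0$ is carried along throughout and the $\sigma=0$ case follows by the limiting argument referenced there, so I would phrase the corollary for fixed $\sigma>0$ and let the notation $\cE_{\alpha,\Lambda}$ stand for $\cE_{\alpha,\sigma,\Lambda}$.
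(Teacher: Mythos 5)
Your proposal is correct and is exactly the argument the paper intends: the corollary is stated there as an immediate consequence of Lemma~\ref{lem-IV-vb.02} combined with the representation of Lemma~\ref{lem-III-vb.02}, and your bookkeeping (the exact cancellation of the two $\tfrac12\|r\vnabla\gamma\|_2^2$ terms, the identification of the remaining expression as $\cE_{\alpha,\sigma,\Lambda}(|\phi|,\bbW_{\eta_{r,\gamma}}^*\psi)$ via the $\gamma\equiv 0$ case, and the remark on carrying the infrared cutoff $\sigma>0$) matches what the paper leaves implicit.
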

%
As a consequence it follows that the partial minimization 
of the Lieb-Loss functional
\begin{align} \label{eq-IV-vb.10}
\hcE_{\alpha, \Lambda}\big( \phi \big)  
\ := \ 
\inf\Big\{ \cE_{\alpha, \Lambda}(\phi, \psi)  \; \Big|
\ \psi \in \cF_\ph \, , \ \|\psi\| = 1 \; \Big\} \, ,
\end{align}
over photon wave functions [see~\eqref{eq-0.12}] allows us
to restrict the minimization over electron wave functions
to nonnegative functions.
%
\begin{theorem} \label{thm-IV-vb.04}
Let $J: \fh \to \fh$ be defined by \eqref{eq-III-vb.18} 
and suppose that $\phi \in \cH_\el$ is normalized and 
$\phi \in H^1(\RR^3)$. Then
\begin{align} \label{eq-IV-vb.11}
\hcE_{\alpha, \Lambda}\big( \phi \big) 
\ \geq \ 
\hcE_{\alpha, \Lambda}\big( |\phi| \big) 
\ = \ 
\frac{1}{2} \big\|\vnabla |\phi| \big\|_2^2 
+ \frac{1}{2} \inf\big\{
\sigma\big( \dGamma_J[ T_{|\phi|,\alpha} \, , \, 0 ] \big) \big\} \, ,
\end{align}
where $\sigma(A) \subseteq \RR$ denotes the spectrum of a self-adjoint
operator $A$ and $T_{|\phi|}$ is as defined in 
\eqref{eq-III-vb.20}-\eqref{eq-III-vb.21,2}.
\end{theorem}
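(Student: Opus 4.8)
The plan is to combine the three preceding results in this section. First I would invoke Lemma~\ref{lem-III-vb.02} to write, for a general complex-valued normalized $\phi = r\,e^{i\gamma}$ with $r = |\phi|$ and $\gamma \in H^1(\RR^3;\RR)$,
\begin{align*}
\cE_{\alpha,\sigma,\Lambda}(\phi,\psi)
\ = \
\tfrac12 \|\vnabla r\|_2^2 + \tfrac12 \|r\vnabla\gamma\|_2^2
+ \tfrac12 \big\la \psi \big| \dGamma_J[T_{r,\alpha}, y_{r,\gamma,\alpha}] \psi \big\ra \, .
\end{align*}
(When $\phi$ does not have a globally well-defined phase one first replaces $\phi$ by a sequence of functions that are nonvanishing, or argues on $\{\phi \neq 0\}$; I would simply invoke the standard fact that $\|\vnabla|\phi|\|_2 \leq \|\vnabla\phi\|_2$, which is all that is needed for the final inequality.) Taking the infimum over normalized $\psi \in \cF_\ph$ gives
\begin{align*}
\hcE_{\alpha,\Lambda}(\phi)
\ = \
\tfrac12 \|\vnabla r\|_2^2 + \tfrac12 \|r\vnabla\gamma\|_2^2
+ \tfrac12 \inf\big\{ \sigma\big(\dGamma_J[T_{r,\alpha}, y_{r,\gamma,\alpha}]\big)\big\} \, ,
\end{align*}
using that the ground-state energy of a self-adjoint operator bounded below equals the bottom of its spectrum.

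Next I would apply Lemma~\ref{lem-IV-vb.02}: there is $\eta_{r,\gamma} \in \fh_\RR$ with $y_{r,\gamma,\alpha} = \tfrac12 q^* T_{r,\alpha} q(\eta_{r,\gamma})$, and as quadratic forms
\begin{align*}
\dGamma_J[T_{r,\alpha}, y_{r,\gamma,\alpha}]
\ \geq \
\bbW_{\eta_{r,\gamma}}\, \dGamma_J[T_{r,\alpha}, 0]\, \bbW_{\eta_{r,\gamma}}^*
\ - \ \|r\vnabla\gamma\|_2^2 \, .
\end{align*}
Since $\bbW_{\eta_{r,\gamma}}$ is unitary, conjugation by it does not change the spectrum, so $\inf\{\sigma(\dGamma_J[T_{r,\alpha},y_{r,\gamma,\alpha}])\} \geq \inf\{\sigma(\dGamma_J[T_{r,\alpha},0])\} - \|r\vnabla\gamma\|_2^2$. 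Substituting this into the formula for $\hcE_{\alpha,\Lambda}(\phi)$, the term $+\tfrac12\|r\vnabla\gamma\|_2^2$ is exactly cancelled by $-\tfrac12\|r\vnabla\gamma\|_2^2$, leaving
\begin{align*}
\hcE_{\alpha,\Lambda}(\phi)
\ \geq \
\tfrac12 \|\vnabla r\|_2^2 + \tfrac12 \inf\big\{ \sigma\big(\dGamma_J[T_{r,\alpha}, 0]\big)\big\} \, .
\end{align*}

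Finally, observe that the right-hand side is precisely $\hcE_{\alpha,\Lambda}(r) = \hcE_{\alpha,\Lambda}(|\phi|)$: indeed $r = |\phi| \geq 0$ is real, so its phase $\gamma$ may be taken identically zero, whence $y_{r,0,\alpha} = 0$ and the general formula collapses to $\hcE_{\alpha,\Lambda}(|\phi|) = \tfrac12\|\vnabla|\phi|\|_2^2 + \tfrac12\inf\{\sigma(\dGamma_J[T_{|\phi|,\alpha},0])\}$. This is both the asserted lower bound and the asserted closed-form expression, so the proof is complete. The one genuinely delicate point — and the step I expect to require the most care — is the polar decomposition $\phi = r e^{i\gamma}$ with $\gamma \in H^1$: this is not literally available when $\phi$ has zeros, so the rigorous argument either restricts to a dense class of nowhere-vanishing $\phi$ and passes to the limit (using continuity of $\hcE_{\alpha,\Lambda}$ and of $\phi \mapsto \|\vnabla|\phi|\|_2$), or works locally on $\{\phi\neq 0\}$ and uses $\|\vnabla|\phi|\|_2 \le \|\vnabla\phi\|_2$ together with the diamagnetic-type structure; everything downstream is then routine.
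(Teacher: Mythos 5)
Your proposal is correct and follows essentially the same route as the paper: decompose $\phi = r e^{i\gamma}$ via Lemma~\ref{lem-III-vb.02}, apply the Weyl-transformation bound of Lemma~\ref{lem-IV-vb.02} so that the shift $-\tfrac12\|r\vnabla\gamma\|_2^2$ cancels the phase contribution to the kinetic energy, and then take the infimum over $\psi$ (the paper packages the fixed-$\psi$ step as Corollary~\ref{cor-IV-vb.03} and states the theorem as an immediate consequence). Your closing caveat about the polar decomposition when $\phi$ vanishes is a point the paper itself glosses over, so flagging it is a small improvement rather than a deviation.
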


%
\subsection{The Ground State Energy of $T_{|\phi|,\alpha}$} 
\label{subsec-IV.2}
%
In this section we show that the infimum of the spectrum
of $\frac{1}{2} \dGamma_J[ T_{|\phi|} , \, 0 ]$ equals 
$X(\Theta_{|\phi|,\alpha})$, as defined in \eqref{eq-0.17} and
\eqref{eq-III-vb.20}-\eqref{eq-III-vb.21,2}. This fact had already
been observed in \cite{LiebLoss1999}, and we give an alternative and
detailed proof here. More specifically, we prove the following theorem
in this section.
%
\begin{theorem} \label{thm-IV-vb.05}
Let $J: \fh \to \fh$ be defined by \eqref{eq-III-vb.18},
suppose that $\phi = |\phi| \in H^1(\RR^3)$, and 
let $T_{\phi,\alpha}$ and $\Theta_{\phi,\alpha}$ be given as in 
\eqref{eq-III-vb.20}-\eqref{eq-III-vb.22}. Then 
\begin{align} \label{eq-IV-vb.52}
\inf\big\{
\sigma\big( \dGamma_J[ T_{\phi,\alpha} , 0 ] \big) \big\} 
\ = \ 
\Tr\Big( \sqrt{ |k|^2 + 2 \Theta_{\phi,\alpha} \:} \: - \: |k| \Big) \, .
\end{align}
\end{theorem}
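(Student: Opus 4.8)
The strategy is to diagonalise the quadratic photon Hamiltonian $\dGamma_J[T_{\phi,\alpha},0]$ by a homogeneous Bogolubov transformation and to read off its infimum from the resulting ``free'' form. Throughout write $\omega := |k|$ for the Fourier multiplier and $\Theta := \Theta_{\phi,\alpha}$. By Lemma~\ref{lem-III-vb.02}, $\Theta = \Theta^* = J\Theta J \geq 0$; moreover $\Theta = \Phi_{\phi,\alpha}^* \Phi_{\phi,\alpha}$ with $\Phi_{\phi,\alpha}$ Hilbert--Schmidt, since $\int\!\!\int |\hphi(\vp-\vk)|^2\, |\vk|^{-1}\, \chi_{\sigma,\Lambda}(\vk)\, d^3p\, d^3k = \|\phi\|_2^2 \int_{\sigma \le |\vk| < \Lambda} |\vk|^{-1}\, d^3k < \infty$, so that $\Theta$ is even trace class. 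Since also $\omega \geq \sigma \cdot \bfone > 0$, the operator $T_{\phi,\alpha}\geq 0$ is bounded with bounded inverse. Introduce the $J$-invariant operator
\[
\xi_{\phi,\alpha} \ := \ \sqrt{\, |k|^2 + 2\Theta_{\phi,\alpha}\,}\ -\ |k| \ \geq \ 0\, ,
\]
which is trace class (since $\Theta$ is trace class and $\omega \geq \sigma$); the claim is that $\xi_{\phi,\alpha}$ is exactly the one-photon excitation operator obtained from $\dGamma_J[T_{\phi,\alpha},0]$ by diagonalisation.

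\emph{Step 1: the diagonalising transformation.} I would construct linear operators $U, V$ on $\fh$ by functional calculus from $\omega$ and $E := \sqrt{\omega^2 + 2\Theta}$ --- of the type $U = \tfrac12\big(\omega^{1/2} E^{-1/2}\omega^{-1/2} + \omega^{-1/2} E^{1/2}\omega^{1/2}\big)$ and $V = \tfrac12\big(\omega^{1/2} E^{-1/2}\omega^{-1/2} - \omega^{-1/2} E^{1/2}\omega^{1/2}\big)$, the $\omega^{\pm 1/2}$-dressing being dictated by the normalisation $T_{\phi,\alpha} = (\omega^{-1/2}\oplus\omega^{-1/2})\big(\begin{smallmatrix} 2\omega^2 + \Theta & \Theta \\ \Theta & \Theta \end{smallmatrix}\big)(\omega^{-1/2}\oplus\omega^{-1/2})$ --- and set $\qB := \qB(U,V) = \big(\begin{smallmatrix} U & JVJ \\ V & JUJ \end{smallmatrix}\big)$ as in \eqref{eq-III-vb.13}. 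One then checks: (i) the symplectic identities $\qB^* \cS \qB = \cS = \qB\, \cS\, \qB^*$ of \eqref{eq-III-vb.14}, equivalently $U^* U - V^* V = \bfone$ together with the mixed relations; (ii) the Shale--Stinespring condition $\Tr(V^* V) < \infty$, which by a functional-calculus estimate reduces to $E^{-1/2} - \omega^{-1/2}$ being Hilbert--Schmidt and hence follows from $\Theta$ being Hilbert--Schmidt; and (iii) the intertwining identity $\qB\, T_{\phi,\alpha}\, \qB^* = \big(\begin{smallmatrix} \xi_{\phi,\alpha} & 0 \\ 0 & \xi_{\phi,\alpha}\end{smallmatrix}\big)$. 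Here the choice~\eqref{eq-III-vb.18} of $J$ is decisive: because $\Theta = J\Theta J$, all blocks are $J$-invariant, $\qB$ commutes with $\cJ$ so it is indeed of the admissible form~\eqref{eq-III-vb.13}, and the two diagonal entries of $\qB\, T_{\phi,\alpha}\, \qB^*$ coincide and equal the $J$-invariant operator $\xi_{\phi,\alpha}$.

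\emph{Step 2: reading off the ground state energy.} A short computation from the definition~\eqref{eq-III-vb.10} and the CCR~\eqref{eq-II-vb.17}, using an orthonormal basis of $\fh$, shows that for every $J$-invariant $D = D^* \geq 0$ one has $\dGamma_J\big[ \big(\begin{smallmatrix} D & 0 \\ 0 & D \end{smallmatrix}\big)\, , \ 0 \big] = \dGamma(2D) + \Tr(D)$. Applying this with $D = \xi_{\phi,\alpha}$, together with the transformation law~\eqref{eq-III-vb.15} in the form $\bbU_\qB\, \dGamma_J[T_{\phi,\alpha},0]\, \bbU_\qB^* = \dGamma_J[\qB\, T_{\phi,\alpha}\, \qB^*, 0]$ (the linear part stays $0$) and the unitarity of $\bbU_\qB$, yields
\[
\dGamma_J[T_{\phi,\alpha},0] \ = \ \bbU_\qB^*\, \Big( \dGamma\big( 2\xi_{\phi,\alpha}\big) + \Tr(\xi_{\phi,\alpha}) \Big)\, \bbU_\qB\, .
\]
Since $\xi_{\phi,\alpha}\geq 0$ we have $\dGamma(2\xi_{\phi,\alpha})\geq 0$ with infimum $0$, attained at the Fock vacuum $\Om$, and $\bbU_\qB$ is unitary; hence $\inf \spec(\dGamma_J[T_{\phi,\alpha},0]) = \Tr(\xi_{\phi,\alpha}) = \Tr\big( \sqrt{|k|^2 + 2\Theta_{\phi,\alpha}} - |k|\big)$, which is~\eqref{eq-IV-vb.52}. (As a by-product the minimiser over photon states is $\bbU_\qB^* \Om$, a squeezed vacuum; this is the description used in Section~\ref{subsec-III.4}.)

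\emph{Main obstacle.} The heart of the argument is Step~1: producing $U, V$ with exactly the right $\omega^{\pm 1/2}$-dressing so that symplecticity, the Shale--Stinespring bound, and the intertwining identity hold simultaneously, the difficulty being that the intertwining identity mixes $\omega$ and $\Theta$ non-commutatively. The cleanest way to organise this is to first observe that $\tfrac12\dGamma_J[T_{\phi,\alpha},0] = \hf + \tfrac\alpha2 \int |\phi(x)|^2\, \vA_{\sigma,\Lambda}^2(x)\, d^3x$ and to rewrite the right-hand side, after a trivial symplectic rescaling, as a multidimensional harmonic oscillator $\tfrac12\big( \vPi^2 + \vA\, (|k|^2 + 2\Theta_{\phi,\alpha})\, \vA\big) - \tfrac12\Tr(|k|)$ with frequency-squared operator $|k|^2 + 2\Theta_{\phi,\alpha}$ (the $-\tfrac12\Tr(|k|)$ accounting for the normal ordering built into $\hf = \dGamma(|k|)$); diagonalising this oscillator by $\vA \mapsto (|k|^2 + 2\Theta_{\phi,\alpha})^{-1/4}\vA$, $\vPi \mapsto (|k|^2 + 2\Theta_{\phi,\alpha})^{1/4}\vPi$ gives ground state energy $\tfrac12\Tr\sqrt{|k|^2 + 2\Theta_{\phi,\alpha}}$, and subtracting the zero-point term and doubling reproduces $\Tr(\sqrt{|k|^2 + 2\Theta_{\phi,\alpha}} - |k|)$; the consistency of all the factors $\tfrac12$, $2\Theta$, $2\xi$, $2D$ is pinned down by this bookkeeping. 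Either way, the remaining analytic points --- that the rescalings are unitarily implementable and that every trace is finite --- are precisely where the Hilbert--Schmidt and trace-class properties of $\Theta_{\phi,\alpha}$ recorded in Lemma~\ref{lem-III-vb.02} enter.
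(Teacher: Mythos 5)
Your route --- exact diagonalisation of the quadratic Hamiltonian by a homogeneous Bogolubov transformation --- is genuinely different from the paper's, which never diagonalises anything: the paper reduces the ground state energy to an infimum of vacuum expectations over all of $\Bog_J[\fh]$ via the Bogolubov--Hartree--Fock principle (Lemma~\ref{lem-IV-vb.07}), proves a matching lower bound valid for \emph{every} such transform by a Cauchy--Schwarz trace argument (Lemmas~\ref{lem-IV-vb.08}--\ref{lem-IV-vb.09}), and then exhibits one trial $B_*$ attaining it; note that $B_* T_{\phi,\alpha} B_*^*$ in \eqref{eq-IV-vb.58} is visibly \emph{not} diagonal and does not need to be, since only the trace of its lower-right block enters the vacuum expectation via \eqref{eq-IV-vb.20,1}. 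Your approach would in principle yield more (the full excitation spectrum), but as written it contains concrete errors. First, your explicit $U,V$ fail the sanity check $\Theta_{\phi,\alpha}=0$: there $E=|k|$, so $V=\tfrac12(|k|^{-1/2}-|k|^{1/2})$, a nonzero Fourier multiplier, which is not Hilbert--Schmidt (Shale--Stinespring fails) and is not the identity transformation that the already-diagonal case demands; the correct trial operators are the symmetrically ordered ones of \eqref{eq-IV-vb.56}--\eqref{eq-IV-vb.57}, built from $y_*=|k|^{-1/2}E|k|^{-1/2}$, whose square root does \emph{not} factor as a product of powers of $|k|$ and $E$. Second, the claimed intertwining $\qB\,T_{\phi,\alpha}\,\qB^*=\diag(\xi,\xi)$ with $\xi=E-|k|$ is false: at $\Theta_{\phi,\alpha}=0$ one has $T_{\phi,\alpha}=\diag(2|k|,0)\neq\diag(0,0)$, and in general the diagonalised operator must satisfy $\dGamma_J[\qB T\qB^*,0]=\dGamma(2E)+\Tr(E-|k|)$ (excitation operator $2E$, not $2\xi$), i.e.\ the diagonal blocks are $E+|k|$ and $E-|k|$. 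Your final number $\Tr(E-|k|)$ happens to survive because it is the trace of the lower-right block either way, but the identities you compute with are wrong.

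The more serious gap is that the step carrying all the weight --- the vanishing of the off-diagonal block of $\qB\,T_{\phi,\alpha}\,\qB^*$, without which the vacuum is not an eigenvector of the transformed operator and no spectral information can be read off --- is asserted but never verified. This is a genuinely nontrivial non-commutative identity (it holds trivially only when $|k|$ and $\Theta_{\phi,\alpha}$ commute), and it is exactly the computation the paper's variational architecture is designed to avoid. Exact diagonalisation of such quadratic Hamiltonians under a Shale--Stinespring hypothesis is a known but substantial theorem; if you want to go this route you must either invoke such a result and then separately identify the excitation operator as $2E$ and the constant as $\Tr(E-|k|)$, or carry out the block computation analogous to \eqref{eq-IV-vb.58} and prove the off-diagonal entry vanishes for the correctly ordered $B_*$. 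The harmonic-oscillator rescaling $\vA\mapsto(|k|^2+2\Theta_{\phi,\alpha})^{-1/4}\vA$ in your closing paragraph faces the same two unresolved issues (unitary implementability and the $|k|^{\pm1/2}$-dressing of the kinetic term), so it does not close the gap either.
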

%
Inserting \eqref{eq-IV-vb.52} into \eqref{eq-IV-vb.11}, 
we immediately obtain the following Corollary.
%
\begin{corollary} \label{cor-IV-vb.06}
Let $J: \fh \to \fh$ be defined by \eqref{eq-III-vb.18} 
and suppose that $\phi = |\phi| \in \cH_\el$ is normalized and 
$\phi \in H^1(\RR^3)$. Then
\begin{align} \label{eq-IV-vb.52,1}
\hcE_{\alpha, \Lambda}(\phi)
\ = \ 
\frac{1}{2} \big\|\vnabla \phi \big\|_2^2 
+ \frac{1}{2} \Tr\Big( \sqrt{ |k|^2 + 2\Theta_{\phi,\alpha} \:} 
\: - \: |k| \Big) \, .
\end{align}
where $\Theta_{\phi,\alpha}$ is defined in 
\eqref{eq-III-vb.21,1}-\eqref{eq-III-vb.21,2}.
\end{corollary}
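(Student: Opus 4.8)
The plan is to recognise $\dGamma_J[T_{\phi,\alpha},0]$ as a quadratic (``van Hove''/Pauli--Fierz) bosonic Hamiltonian, to diagonalise it by a homogeneous Bogolubov transformation, and to read off the ground state energy from the subtracted zero‑point energy. Concretely, combining \eqref{eq-III-vb.19} (with $\gamma=0$) with \eqref{eq-II-vb.11} and \eqref{eq-II-vb.22} gives the operator identity $\tfrac12\,\dGamma_J[T_{\phi,\alpha},0]=\hf+\tfrac{\alpha}{2}\int\phi(x)^2\,\vbbA(x)^2\,d^3x=:G_\phi$ as semibounded quadratic forms, so it suffices to prove $\inf\sigma(G_\phi)=\tfrac12\Tr\big(\sqrt{|k|^2+2\Theta_{\phi,\alpha}}-|k|\big)$. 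Writing $\vbbA_\mu(x)=a^*(m_\mu(x))+a(m_\mu(x))$ and putting $\mathbf g_\mu(x):=\sqrt{\alpha}\,\phi(x)\,m_\mu(x)\in\fh$, a Plancherel computation identical to \eqref{eq-III-vb.23}--\eqref{eq-III-vb.25,2} yields the key algebraic identity $|k|^{1/2}\big(\sum_{\mu=1}^{3}\int|\mathbf g_\mu(x)\rangle\langle\mathbf g_\mu(x)|\,d^3x\big)|k|^{1/2}=\Theta_{\phi,\alpha}$; moreover, since the kernel of $\Theta_{\phi,\alpha}$ is supported in $\{|\vk|<\Lambda\}$ with diagonal proportional to $\alpha\,\chi_{\sigma,\Lambda}(\vk)\,\|\phi\|_2^2$, the operator $\Theta_{\phi,\alpha}\ge0$ is trace class with $\Tr\Theta_{\phi,\alpha}$ of order $\alpha\,\Lambda^3\,\|\phi\|_2^2$.

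Next comes the diagonalisation. In the Segal‑field picture, $\hf$ splits into a ``momentum'' square plus a ``field'' square minus the zero‑point energy $\tfrac12\Tr|k|$, while $\tfrac{\alpha}{2}\int\phi^2\vbbA^2=\tfrac12\sum_{\mu}\int\big(a^*(\mathbf g_\mu(x))+a(\mathbf g_\mu(x))\big)^2\,d^3x$ is a pure ``field'' square; completing the square in the field variable replaces $G_\phi$ by a harmonic‑oscillator Hamiltonian whose effective one‑photon frequency is $\Xi:=\sqrt{|k|^2+2\Theta_{\phi,\alpha}}\ge|k|\ge\sigma\cdot\bfone>0$, by the identity of the previous paragraph. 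Equivalently, in the notation of Section~\ref{sec-III}, expanding \eqref{eq-III-vb.10} over the first/second halves of an orthonormal basis shows $\dGamma_J[T_{\phi,\alpha},0]=\mathrm{d}\Gamma(a+JdJ)+\Tr d+(\text{pairing term})$, whose Bogolubov matrix $\mathcal M$ is $\cS$‑self‑adjoint with $(\cS\mathcal M)^2$ similar to $4\big(|k|^2+2\Theta_{\phi,\alpha}\big)$; the homogeneous Bogolubov transformation $\bbU_\qB$ that diagonalises it is built from $|k|^{1/2}$ and $\Xi^{1/2}$ and has $V$‑component $V=\tfrac12\,\Xi^{-1/2}\big(\Xi-|k|\big)|k|^{-1/2}$. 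I would then verify the Shale--Stinespring condition: since $\Xi-|k|\ge0$ is trace class (shown below) and $\Xi^{-1/2},|k|^{-1/2}$ are bounded, $V$ is trace class, hence Hilbert--Schmidt, so $\bbU_\qB$ is a well‑defined unitary on $\fF_\ph$; by \eqref{eq-III-vb.15} it transforms $\dGamma_J[T_{\phi,\alpha},0]$ into $\dGamma_J[\tilde T,0]+c$ with $\tilde T\ge0$, $\dGamma_J[\tilde T,0]\ge0$ and $\inf\sigma(\dGamma_J[\tilde T,0])=0$, where the constant $c$ equals the vacuum energy $\langle\Om\,|\,\bbU_\qB^*\dGamma_J[T_{\phi,\alpha},0]\bbU_\qB\,\Om\rangle$, which a direct Wick computation evaluates to $\Tr\big(\sqrt{|k|^2+2\Theta_{\phi,\alpha}}-|k|\big)$. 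The finiteness of this trace follows from $0\le\sqrt{|k|^2+2\Theta_{\phi,\alpha}}-|k|\le\tfrac1\sigma\,\Theta_{\phi,\alpha}$ in trace norm, which I would obtain from the integral representation $\sqrt{B}-\sqrt{A}=\tfrac1\pi\int_0^\infty\sqrt{s}\,\big[(A+s)^{-1}-(B+s)^{-1}\big]\,ds$ with $A=|k|^2$, $B=|k|^2+2\Theta_{\phi,\alpha}$, together with the second resolvent identity and $|k|^2\ge\sigma^2$.

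Since $\inf\sigma(\dGamma_J[\tilde T,0])=0$ is attained at $\Om$, the unitary equivalence gives $\inf\sigma(\dGamma_J[T_{\phi,\alpha},0])=c=\Tr\big(\sqrt{|k|^2+2\Theta_{\phi,\alpha}}-|k|\big)$, which is \eqref{eq-IV-vb.52}; equivalently $\inf\sigma(G_\phi)=\tfrac12\Tr(\sqrt{|k|^2+2\Theta_{\phi,\alpha}}-|k|)$. If one prefers not to invoke the implementability of $\bbU_\qB$ at the outset, the same identity follows from matching bounds: ``$\ge$'' by completing the square at the level of quadratic forms on $\fh\oplus\fh$ combined with $\mathrm{d}\Gamma(\Xi)\ge0$, and ``$\le$'' by evaluating the energy in the squeezed trial state $\bbU_\qB\Om$, whose energy is exactly $c$ by \eqref{eq-III-vb.15}. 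I expect the main obstacle to be precisely this diagonalisation step: verifying that the squeezing is a genuine, implementable Bogolubov transformation (the Hilbert--Schmidt condition on $V$, where the trace‑class property of $\Theta_{\phi,\alpha}$ and the infrared cutoff $\sigma>0$ controlling $|k|^{-1/2}$ are essential), and above all bookkeeping the two formally divergent zero‑point energies $\tfrac12\Tr\Xi$ and $\tfrac12\Tr|k|$ so that only their finite difference $\tfrac12\Tr(\sqrt{|k|^2+2\Theta_{\phi,\alpha}}-|k|)$ remains.
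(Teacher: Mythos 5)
Your reduction of the Corollary to the single identity $\inf\sigma(\dGamma_J[T_{\phi,\alpha},0])=\Tr\big(\sqrt{|k|^2+2\Theta_{\phi,\alpha}}-|k|\big)$ is exactly how the paper proceeds (the Corollary is obtained by inserting Theorem~\ref{thm-IV-vb.05} into Theorem~\ref{thm-IV-vb.04}); your identification of $\Theta_{\phi,\alpha}$ with $|k|^{1/2}\big(\sum_\mu\int|\mathbf{g}_\mu(x)\rangle\langle\mathbf{g}_\mu(x)|\,d^3x\big)|k|^{1/2}$ matches \eqref{eq-III-vb.23}--\eqref{eq-III-vb.25,2}, your trace-norm bound $\Tr\big(\sqrt{|k|^2+2\Theta_{\phi,\alpha}}-|k|\big)\leq\sigma^{-1}\Tr\,\Theta_{\phi,\alpha}$ is correct and plays the role of the paper's Lemma~\ref{lem-IV-vb.10}, and your trial-state upper bound coincides with the paper's computation \eqref{eq-IV-vb.56}--\eqref{eq-IV-vb.59} with $B_*$.

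The gap is in the lower bound. A Bogolubov transformation built only from $|k|^{1/2}$ and $\Xi^{1/2}$, with $\Xi:=\sqrt{|k|^2+2\Theta_{\phi,\alpha}}$ --- whether your $V=\tfrac12\Xi^{-1/2}(\Xi-|k|)|k|^{-1/2}$ or the paper's symmetric $v_*=\tfrac12(y_*^{1/2}-y_*^{-1/2})$ with $y_*=|k|^{-1/2}\Xi|k|^{-1/2}$ --- does \emph{not} diagonalize $T_{\phi,\alpha}$, because $|k|$ and $\Theta_{\phi,\alpha}$ do not commute ($\Theta_{\phi,\alpha}$ contains the multiplication operator $\phi(x)^2$, i.e.\ a convolution in momentum space). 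Indeed, using $y_*|k|y_*=|k|+2|k|^{-1/2}\Theta_{\phi,\alpha}|k|^{-1/2}$ one finds that the off-diagonal block of $B_*T_{\phi,\alpha}B_*^*$ equals $-\tfrac12\,y_*^{-1/2}\,[\,|k|,y_*]\,y_*^{-1/2}\neq0$; the paper never needs this block to vanish because it records only the lower-right entry, which is all that enters the vacuum expectation, i.e.\ the upper bound. With a residual pairing term present, the vacuum is not the ground state of $\dGamma_J[\qB T_{\phi,\alpha}\qB^*,0]$ and your conclusion $\inf\sigma=c$ does not follow; your fallback of ``completing the square'' is the same unestablished cancellation in disguise. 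This is precisely where the paper takes a different route: it invokes the variational principle of \cite{BachBreteauxTzaneteas2013} asserting that the ground state energy of a quadratic Hamiltonian is attained on pure quasifree states (Lemma~\ref{lem-IV-vb.07}), bounds the vacuum expectation over \emph{all} homogeneous Bogolubov transforms from below by a convex trace functional of the squeezing operator $v$ alone via a Cauchy--Schwarz argument in trace ideals (Lemma~\ref{lem-IV-vb.08}), and minimizes that functional explicitly (Lemma~\ref{lem-IV-vb.09}). To complete your argument you would have to either import that quasifree variational principle, or construct the genuine diagonalizer --- which is not given by these simple square roots when $[\,|k|,\Theta_{\phi,\alpha}]\neq0$ --- and evaluate its vacuum energy in closed form.
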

%
\begin{proof} \textit{[Proof of Theorem~\ref{thm-IV-vb.05}]} 
The first step in our proof rests on an observation 
made in \cite{BachBreteauxTzaneteas2013} that, given a nonnegative
Hamiltonian $\HH$ representing an interacting quantum system, it holds true
that
\begin{align} \label{eq-IV-vb.52,2}
\inf_{\rho \in \QF} \big\{ \Tr(\rho^{1/2} \HH \rho^{1/2}) \big\} 
\ = \
\inf_{\rho \in \QF} \big\{ \Tr(\rho^{1/2} \HH \rho^{1/2}) \; \big| 
\ \text{$\rho$ is pure} \big\} \, ,
\end{align}
where $\QF$ denotes the set of quasifree
density matrices. In other words, for the computation of the
Bogolubov-Hartree-Fock energy of the system, one may restrict the
variation over all quasifree states to pure states. 
This statement may be viewed as a generalization of Lieb's variational
principle \cite{Lieb1981a}.  
In Lemma~\ref{lem-IV-vb.07} below, the
observation from \cite{BachBreteauxTzaneteas2013} is applied to the
Hamiltonian $\dGamma_J[ T_{\phi,\alpha} , 0 ]$ and yields the statement, that
its ground state energy is the lowest vacuum expectation value of all
homogeneous Bogolubov transforms of $\dGamma_J[ T_{\phi,\alpha} , 0 ]$,
\begin{align} \label{eq-IV-vb.53}
\inf\big\{
\sigma\big( \dGamma_J[ T_{\phi,\alpha} , 0 ] \big) \big\} 
\ = \ 
\inf\Big\{ \big\la \Om \big| 
\bbU_B \, \dGamma_J[ T_{\phi,\alpha} , 0 ] \, \bbU_B^* \Om \big\ra
\; \Big| \ 
B \in \Bog_J[\fh] \; \Big\} \, ,
\end{align}
where $\Bog_J[\fh]$ is defined in \eqref{eq-IV-vb.12,1}.
\\
Next, an application of Lemma~\ref{lem-IV-vb.08} with $a := 2|k|$, $b
:= |k|^{-1/2} \Theta_{\phi,\alpha} |k|^{-1/2}$, and $d := 0$ yields the
following lower bound on the vacuum expectation values on the right of
\eqref{eq-IV-vb.53} in terms of $|v|$, where $v \in \cL^2[\fh]$ is the
lower left matrix entry of $B$ of the Bogolubov transformation
$\bbU_B$,
\begin{align} \label{eq-IV-vb.54}
& \big\la \Om \big| 
\bbU_B \, \dGamma_J[ T_{\phi,\alpha} , 0 ] \, \bbU_B^* \Om \big\ra 
\ \geq \ 
\\[1ex] \nonumber
& 
\inf_{v \in \cL^2[\fh], \, v \geq 0} \Big\{ 
\Tr\Big[ 2 |k|^{1/2} \, v^2 \, |k|^{1/2} \: + \: 
\Theta_{\phi,\alpha}^{1/2}  |k|^{-1/2} \big(v - \sqrt{1+v^2} \big)^2 
|k|^{-1/2} \Theta_{\phi,\alpha}^{1/2} \Big] \Big\} \, .
\end{align}
The infimum on the right side of the lower bound \eqref{eq-IV-vb.54}
is explicitly computed in Lemma~\ref{lem-IV-vb.09} below, using  
$\sigma \cdot \bfone \leq a := 2|k| \leq \Lambda \cdot \bfone$, 
$b := |k|^{-1/2} \Theta_{\phi,\alpha} |k|^{-1/2} \geq 0$, and $d := 0$ again.
Consequently, 
\begin{align} \label{eq-IV-vb.55}
\inf\Big\{ \big\la \Om \big| 
\bbU_B \, \dGamma_J[ T_{\phi,\alpha} , 0 ] \, \bbU_B^* \Om \big\ra
\; \Big| \ 
B \in \Bog_J[\fh] \; \Big\} 
\ \geq \
\Tr\Big[ \sqrt{ k^2 + 2 \Theta_{\phi,\alpha} \,} \: - \: |k| \Big] \, .
\end{align}
We finally define 
\begin{align} \label{eq-IV-vb.56}
B_* \ := \ 
\begin{pmatrix} 
\sqrt{1+v_*^2 \,} & -v_* \\
-v_* &  \sqrt{1+v_*^2 \,} \\
\end{pmatrix} 
\ = \
\frac{1}{2} \begin{pmatrix} 
y_*^{1/2} + y_*^{-1/2}  & -y_*^{1/2} + y_*^{-1/2} \\
-y_*^{1/2} + y_*^{-1/2} &  y_*^{1/2} + y_*^{-1/2} \\
\end{pmatrix} \, , 
\\[1ex] \label{eq-IV-vb.57}
v_* \ := \ \frac{1}{2} \big( y_*^{1/2} - y_*^{-1/2} \big) \ \geq \ 0 \, , 
\qquad
y_* \ := \ 
|k|^{-1/2} \,  \sqrt{ k^2 + 2 \Theta_{\phi,\alpha} \,} \, |k|^{-1/2} 
\ \geq \ 
\bfone \, ,
\end{align}
in accordance with \eqref{eq-IV-vb.35} and \eqref{eq-IV-vb.45}. Then,
by \eqref{eq-IV-vb.50}, $\Theta_{\phi,\alpha} \in \cL^2[\fh]$ implies that
$y_*-1 \in \cL^2[\fh]$ which is equivalent to $v_* \in \cL^2[\fh]$,
thanks to \eqref{eq-IV-vb.36c}, and thus 
$1-y_*^{-1} = y_* - 1 - 4v_*^2 \in \cL^2[\fh]$. 
Moreover, as $|k|$ and $\Theta_{\phi,\alpha}$
are $J$-invariant, so are $y_*$ and hence also $v_*$ and 
$\sqrt{1+v_*^2 \,}$. It follows that $B_* \in \Bog_J[\fh]$
is a homogeneous Bogolubov transformation. Finally, 
\begin{align} \label{eq-IV-vb.58}
B_* & \, T_{\phi,\alpha} \, B_*^* 
\\[1ex] \nonumber
\ = \ &
B_* \, \begin{pmatrix} 2|k| & 0 \\ 0 & 0 \end{pmatrix} \, B_*^*
\: + \: B_* \, 
\begin{pmatrix} 
|k|^{-1/2} \Theta_{\phi,\alpha} |k|^{-1/2} & |k|^{-1/2} \Theta_{\phi,\alpha} |k|^{-1/2} \\
|k|^{-1/2} \Theta_{\phi,\alpha} |k|^{-1/2} & |k|^{-1/2} \Theta_{\phi,\alpha} |k|^{-1/2} \\
\end{pmatrix} \, B_*^*
\\[1ex] \nonumber
\ = \ &
\frac{1}{2}
\begin{pmatrix} 
(y_*^{1/2} + y_*^{-1/2}) \, |k| \, (y_*^{1/2} + y_*^{-1/2})
& - (y_*^{1/2} + y_*^{-1/2}) \, |k| \, (y_*^{1/2} - y_*^{-1/2}) 
\\
+ 2 y_*^{-1/2} |k|^{-1/2} \Theta_{\phi,\alpha} |k|^{-1/2} y_*^{-1/2} 
& + 2 y_*^{-1/2} |k|^{-1/2} \Theta_{\phi,\alpha} |k|^{-1/2} y_*^{-1/2} 
\\[1ex]
- (y_*^{1/2} - y_*^{-1/2}) \, |k| \, (y_*^{1/2} + y_*^{-1/2})
& (y_*^{1/2} - y_*^{-1/2}) \, |k| \, (y_*^{1/2} - y_*^{-1/2}) 
\\
+ 2 y_*^{-1/2} |k|^{-1/2} \Theta_{\phi,\alpha} |k|^{-1/2} y_*^{-1/2} 
& + 2 y_*^{-1/2} |k|^{-1/2} \Theta_{\phi,\alpha} |k|^{-1/2} y_*^{-1/2} \\
\end{pmatrix} \! ,
\end{align}
so 
\begin{align} \label{eq-IV-vb.59}
\big\la \Om \big| 
& \bbU_{B_*} \, \dGamma_J[ T_{\phi,\alpha} , 0 ] \, \bbU_{B_*}^* \Om \big\ra
\ = \
\big\la \Om \big| \, 
\dGamma_J[ B_* T_{\phi,\alpha} B_*^*, 0 ] \, \Om \big\ra
\nonumber \\[1ex] 
\ = \ &
\frac{1}{2} \, \Tr\Big[
(y_*^{1/2} - y_*^{-1/2}) \, |k| \, (y_*^{1/2} - y_*^{-1/2}) 
+ 2 y_*^{-1/2} |k|^{-1/2} \Theta_{\phi,\alpha} |k|^{-1/2} y_*^{-1/2} \Big]
\nonumber \\[1ex] 
\ = \ &
\frac{1}{2} \, \Tr\Big[ |k|^{1/2} (y_* + y_*^{-1} - 2) |k|^{1/2} +
2 \Theta_{\phi,\alpha}^{1/2} |k|^{-1/2} y_*^{-1} |k|^{-1/2} 
\Theta_{\phi,\alpha}^{1/2} \Big]
\nonumber \\[1ex] 
\ = \ &
\frac{1}{2} \, \Tr\Big[ \sqrt{k^2 + 2 \Theta_{\phi,\alpha} \,} - |k|
+ |k|^{1/2} (y_*^{-1}-1) |k|^{1/2} +
2 \Theta_{\phi,\alpha} |k|^{-1/2} y_*^{-1} |k|^{-1/2}  \Big] 
\nonumber \\[1ex] 
\ = \ &
\frac{1}{2} \, \Tr\Big[ \sqrt{k^2 + 2 \Theta_{\phi,\alpha} \,} - |k|
- \big(k^2 + 2 \Theta_{\phi,\alpha} \big) |k|^{-1/2} y_*^{-1} |k|^{-1/2} 
-|k| \Big]
\nonumber \\[1ex] 
\ = \ &
 \Tr\Big[ \sqrt{k^2 + 2 \Theta_{\phi,\alpha} \,} - |k| \Big] \, . 
\end{align}
\end{proof}
%
The first step in our derivation rests on an observation
made in \cite{BachBreteauxTzaneteas2013} which may be viewed as
a generalization of Lieb's variational principle \cite{Lieb1981a}.
%
\begin{lemma} \label{lem-IV-vb.07}
Let $\sfJ: \sfh \to \sfh$ be an antiunitary involution and
$T = T^* \in \cB[\sfh \oplus \sfh]$ be nonnegative, $T \geq 0$ 
Then 
\begin{align} \label{eq-IV-vb.12}
\inf\big\{
\sigma\big( \dGamma_\sfJ[ T , 0 ] \big) \big\} 
\ = \ 
\inf\Big\{ \big\la \Om \big| 
\bbU_B \, \dGamma_\sfJ[ T , 0 ] \, \bbU_B^* \Om \big\ra
\; \Big| \ 
B \in \Bog_\sfJ[\sfh] \; \Big\} \, ,
\end{align}
where 
\begin{align} \label{eq-IV-vb.12,1}
\Bog_\sfJ[\sfh] \ := \
\bigg\{ B \: = \: 
\begin{pmatrix} 
U & \sfJ V \sfJ \\ V & \sfJ U \sfJ 
\end{pmatrix} 
\ \bigg| \ B^* \cS B = \cS \, , \ \ \Tr(V^*V) < \infty \; \bigg\} 
\end{align}
denotes the set of generators of homogeneous Bogolubov transformations.
\end{lemma}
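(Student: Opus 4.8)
The plan is to prove the two inequalities separately. The inequality ``$\geq$'' is the easy direction: every homogeneous Bogolubov transformation $\bbU_B$ with $B \in \Bog_\sfJ[\sfh]$ is unitary on $\fF_\ph$ and preserves the vacuum domain, so $\la \Om | \bbU_B \, \dGamma_\sfJ[T,0] \, \bbU_B^* \Om \ra = \la \bbU_B^* \Om | \dGamma_\sfJ[T,0] \, \bbU_B^* \Om \ra$ is the expectation value of $\dGamma_\sfJ[T,0]$ in the unit vector $\bbU_B^* \Om \in \cD(N_\ph)$, hence bounded below by $\inf\sigma(\dGamma_\sfJ[T,0])$. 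Taking the infimum over $B$ preserves this bound. For this one only needs that $\dGamma_\sfJ[T,0]$ is semibounded and self-adjoint on $\cD(N_\ph)$ when $T \geq 0$, which was recorded right after \eqref{eq-III-vb.10}.

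The substantive direction is ``$\leq$''. First I would reduce to quasifree states: by the identity \eqref{eq-IV-vb.52,2} from \cite{BachBreteauxTzaneteas2013}, the Bogolubov--Hartree--Fock infimum $\inf_{\rho \in \QF}\Tr(\rho^{1/2} \dGamma_\sfJ[T,0] \rho^{1/2})$ equals the infimum over \emph{pure} quasifree states. Since $T \geq 0$, the energy functional $\rho \mapsto \Tr(\rho^{1/2} \dGamma_\sfJ[T,0] \rho^{1/2})$ is also bounded below on \emph{all} density matrices by $\inf\sigma(\dGamma_\sfJ[T,0])$, and conversely, approximating a minimizing sequence of unit vectors by their associated rank-one density matrices shows the unrestricted density-matrix infimum equals $\inf\sigma(\dGamma_\sfJ[T,0])$. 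Hence $\inf\sigma(\dGamma_\sfJ[T,0]) = \inf\{\text{energy of pure quasifree states}\}$. It then remains to identify pure quasifree states with vacua of homogeneous Bogolubov transformations: a pure quasifree state on the CCR algebra over $\sfh$ is exactly a state of the form $\psi \mapsto \la \bbU_B^* \Om | \psi \bbU_B^* \Om\ra$ for a (possibly improper, i.e.\ non-Shale--Stinespring) symplectic $B$; here, however, the Shale--Stinespring condition $\Tr(V^*V)<\infty$ must be enforced, and this is where care is needed (see below). Combining these, $\inf\sigma(\dGamma_\sfJ[T,0]) = \inf\{\la \Om | \bbU_B \dGamma_\sfJ[T,0] \bbU_B^* \Om\ra \mid B \in \Bog_\sfJ[\sfh]\}$, using the transformation rule \eqref{eq-III-vb.15} to write the energy as a vacuum expectation value.

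The main obstacle I anticipate is the Shale--Stinespring / closedness issue in the reduction to pure quasifree states. The result \eqref{eq-IV-vb.52,2} a priori concerns quasifree density matrices on Fock space, whose one-particle density matrices are automatically trace class, so the associated $V$ in the Bogolubov decomposition is automatically Hilbert--Schmidt; thus the pure quasifree states arising from \eqref{eq-IV-vb.52,2} do correspond to genuine elements of $\Bog_\sfJ[\sfh]$, and no improper Bogolubov transformation is needed. The point to check carefully is that the pure-state minimizer (or minimizing sequence) produced by \eqref{eq-IV-vb.52,2} lies in the domain where $\Tr(\rho^{1/2}\dGamma_\sfJ[T,0]\rho^{1/2})$ coincides with the Fock-space expectation value $\la \bbU_B^* \Om | \dGamma_\sfJ[T,0] \bbU_B^*\Om\ra$; this requires the $J$-invariance built into \eqref{eq-III-vb.10,1} so that $\dGamma_\sfJ[T,0]$ is genuinely the second quantization of a self-adjoint one-particle operator and the quadratic form is the naturally associated one. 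Given these, the remaining steps are bookkeeping with \eqref{eq-III-vb.13}--\eqref{eq-III-vb.15}.
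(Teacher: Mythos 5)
Your easy direction (``$\geq$'') is fine and matches what the paper leaves implicit. The problem is in the substantive direction ``$\leq$'', where your chain of reasoning has a genuine gap. You establish (a) that the \emph{unrestricted} density-matrix infimum of $\rho \mapsto \Tr(\rho^{1/2}\dGamma_\sfJ[T,0]\rho^{1/2})$ equals $\inf\sigma(\dGamma_\sfJ[T,0])$, and (b) via \cite{BachBreteauxTzaneteas2013} that the quasifree infimum equals the pure-quasifree infimum. From these you conclude ``hence $\inf\sigma = \inf\{\text{energy of pure quasifree states}\}$'' --- but that conclusion requires the quasifree infimum to coincide with the unrestricted infimum, i.e.\ precisely the statement $E_{BHF}(\dGamma_\sfJ[T,0]) = \inf\sigma(\dGamma_\sfJ[T,0])$. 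Your two facts only give $E_{BHF} \geq \inf\sigma$; the nontrivial inequality $E_{BHF} \leq \inf\sigma$ --- that the ground-state energy of a Hamiltonian \emph{quadratic} in the fields is already reached on quasifree states --- is exactly the input the paper invokes (``Since $\dGamma_J[T,0]$ is quadratic in the field operators, its ground state energy agrees with its Bogolubov--Hartree--Fock energy''), and it is nowhere supplied or cited in your argument. Without it the reduction to quasifree states does not go through.

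A second, smaller gap: you identify pure quasifree states with vacua of \emph{homogeneous} Bogolubov transformations only. In the convention of the paper (and of \cite{BachBreteauxTzaneteas2013}), pure quasifree density matrices are rank-one projections onto vectors of the form $\bbU_B^* \bbW_\eta^* \Om$, i.e.\ they include a Weyl factor with $\eta \in \sfh$. One must therefore still argue that the infimum over $\eta$ is attained at $\eta = 0$; the paper does this by computing, via \eqref{eq-III-vb.15} and \eqref{eq-III-vb.17}, that the Weyl transformation contributes the nonnegative constant $\la \eta \,|\, q^* B T B^* q\, \eta\ra \geq 0$ plus a term linear in the fields whose vacuum expectation vanishes. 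This step is short but should not be omitted, since it is the only place where the absence of a linear term ($y=0$) in the Hamiltonian is actually used. Your handling of the Shale--Stinespring condition, by contrast, is correct and essentially the same as the paper's.
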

\begin{proof} Suppose that $\HH \geq 0$ is a nonnegative Hamiltonian 
on $\cF_\ph$ and define its \textit{Bogolubov-Hartree-Fock energy}
by
\begin{align} \label{eq-IV-vb.13}
E_{BHF}(\HH) \ := \ 
\inf\Big\{ \Tr\{ \rho^{1/2} \, \HH \, \rho^{1/2} \} \; \Big|
\ \rho \in \DM \, , \ \text{$\rho$ is quasifree} \Big\} \, ,
\end{align}
where 
$\DM := \big\{ \rho \in \cB[\fF_\ph] \; \big| \; 
0 \leq \rho \leq \Tr\{\rho\} = 1 \big\}$
denotes the set of density matrices on $\fF_\ph$. In
\cite{BachBreteauxTzaneteas2013} it is shown that the
Bogolubov-Hartree-Fock energy is already obtained by taking the
infimum over all \textit{pure} quasifree states,
\begin{align} \label{eq-IV-vb.14}
E_{BHF}(\HH) \ = \ 
\inf\Big\{ \Tr\{ \rho^{1/2} \, \HH \, \rho^{1/2} \} \; \Big|
\ \rho \in \DM \, , \ \text{$\rho$ is quasifree and pure} \Big\} \, .
\end{align}
Since $\dGamma_J[ T_{|\phi|} , \, 0 ]$ is quadratic in the field 
operators, its ground state energy agrees with its 
Bogolubov-Hartree-Fock energy,
\begin{align} \label{eq-IV-vb.15}
\inf\big\{
\sigma\big( \dGamma_\sfJ[ T , 0 ] \big) \big\} 
\ = \ 
E_{BHF}\big( \dGamma_\sfJ[ T , 0 ] \big) \, .
\end{align}
On the other hand, the pure quasifree density matrices
$\rho_{\mathrm{pure}} \in \DM$ are precisely the rank-one
orthogonal projections 
$\rho_{\mathrm{pure}} = 
| \bbU_B^* \bbW_\eta^* \Om \ra \la \bbU_B^* \bbW_\eta^* \Om |$
onto Bogolubov and Weyl transforms $\bbU_B^* \bbW_\eta^* \Om$ of the 
vacuum vector $\Om$, using that, $\bbU_B^* = \bbU_{\cS B^* \cS}$ 
is a homogeneous Bogolubov transformation, for $B \in \Bog_\sfJ[\sfh]$, and
$\bbW_\eta^* = \bbW_{-\eta}$ is a Weyl transformation, for $\eta \in \sfh$.
Thus we obtain
\begin{align} \label{eq-IV-vb.16}
\inf\big\{
\sigma & \big( \dGamma_\sfJ[ T , 0 ] \big) \big\} 
\nonumber \\[1ex] 
\ = \ & 
\inf\Big\{ \big\la \Om \big| 
\bbW_\eta \, \bbU_B \, 
\dGamma_\sfJ[ T , 0 ] \, \bbU_B^* \, \bbW_\eta^* \Om \big\ra
\; \Big| \ B \in \Bog_\sfJ[\sfh] \, , \ \eta \in \sfh \Big\}
\nonumber \\[1ex]
\ = \ &
\inf\Big\{ \big\la \Om \big| 
\dGamma_\sfJ\big[ B T B^* \, , \; 
-\tfrac{1}{2} q^* B T B^* q \eta \big] \Om \big\ra
\\ \nonumber & \qquad \qquad 
+ \la \eta | q^* B T B^* q \eta \ra
\; \Big| \ B \in \Bog_\sfJ[\sfh] \, , \ \eta \in \sfh \Big\} \, ,
\end{align}
using \eqref{eq-III-vb.15} and \eqref{eq-III-vb.17}. Since 
\begin{align} \label{eq-IV-vb.17}
\big\la \Om \big| 
\dGamma_\sfJ\big[ B T B^* , -\tfrac{1}{2} q^* B T B^* q \eta \big] 
\Om \big\ra
\ = \
\big\la \Om \big| 
\dGamma_\sfJ\big[ B T B^* , 0 \big] \Om \big\ra
\end{align}
and
\begin{align} \label{eq-IV-vb.18}
\la \eta | q^* B T B^* q \eta \ra \ \geq \ 0 \, ,
\end{align}
it follows that the infimum on the right side of \eqref{eq-IV-vb.16}
is attained for $\eta =0$.
\end{proof}

\begin{lemma} \label{lem-IV-vb.08}
Let $\sfj: \sfh \to \sfh$ be an antiunitary involution. Let
$a \in \cB[\sfh]$ be a bounded, $b \in \cL^2[\sfh]$ a Hilbert-Schmidt,
and $d \in \cL^1(\sfh)$ a trace-class operator such that
all three are nonnegative and commute with $\sfj$, i.e., 
$a = \sfj a \sfj \geq 0$, $b = \sfj b \sfj \geq 0$, 
$d = \sfj d \sfj \geq 0$. Furthermore let $B \in \Bog_\sfj[\sfh]$,
with $\Bog_\sfj[\sfh]$ as defined in \eqref{eq-IV-vb.12,1}. Then
\begin{align} \label{eq-IV-vb.19}
T \ = \ 
\begin{pmatrix} a + b & b \\ b & d +b \end{pmatrix} 
\ \geq \ 0 \, ,
\end{align}
and 
\begin{align} \label{eq-IV-vb.20}
& \big\la \Om \big| 
\bbU_B \, \dGamma_\sfj[ T , 0 ] \, \bbU_B^* \Om \big\ra
\\[1ex] \nonumber
& \ \geq \ 
\inf\Big\{ 
\Tr\Big[ a \, v^2 + b \big(v - \sqrt{1+v^2} \big)^2 
+ d \, (1+v^2) \Big] \ \Big| 
\ v \geq 0 \, , \ \Tr(v^2) < \infty \Big\} \, .
\end{align}
\end{lemma}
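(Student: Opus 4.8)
The plan is to compute the vacuum expectation value $\la \Om | \bbU_B \, \dGamma_\sfj[T,0] \, \bbU_B^* \Om \ra$ explicitly in terms of the Bogolubov matrix $B$, then reduce the ensuing matrix optimization to the scalar-looking trace functional on the right of \eqref{eq-IV-vb.20}. First I would use the transformation rule \eqref{eq-III-vb.15}, which gives $\bbU_B \, \dGamma_\sfj[T,0] \, \bbU_B^* = \dGamma_\sfj[ BTB^*, 0]$, so that the left side of \eqref{eq-IV-vb.20} equals $\la \Om | \dGamma_\sfj[BTB^*,0]\Om\ra$. Writing $B = \big(\begin{smallmatrix} U & \sfj V \sfj \\ V & \sfj U \sfj\end{smallmatrix}\big)$ and expanding $\dGamma_\sfj[\,\cdot\,,0]$ in normal-ordered form, only the $a^* a$ part survives the vacuum expectation via the CCR \eqref{eq-II-vb.05}, contributing the number of photons created, while the $a^* a^*$ and $aa$ parts drop out and $[a,a^*]$-commutators produce the trace terms. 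Concretely, the lower-left block of $BTB^*$ governs the pairing and the on-diagonal pieces give the photon-number contribution; a short computation shows $\la \Om | \dGamma_\sfj[BTB^*,0]\Om\ra = \Tr\big[(BTB^*)_{22}\big] - \Tr\big[\text{(shift from normal ordering)}\big]$, and carefully collecting these yields an expression that depends on $U,V$ only through $V^*V$ (equivalently through $|V| =: v \geq 0$, using $UU^* - \sfj V^* V \sfj = \bfone$, i.e. $UU^* = \bfone + v^2$ after a polar decomposition and using $\sfj$-invariance).

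Next I would carry out the algebra of the block product $BTB^*$ with $T = \big(\begin{smallmatrix} a+b & b \\ b & d+b\end{smallmatrix}\big)$. The key identities are $B^* \cS B = \cS$, which in block form reads $U^*U - V^*V = \bfone$ and $U^*\sfj V\sfj = V^* \sfj U \sfj$ (so the cross terms simplify), together with the commutation of $a,b,d$ with $\sfj$, which lets me replace every $\sfj X \sfj$ by $X$ for $X \in \{a,b,d\}$ and pull $\sfj$'s through $U$ and $V$ freely after absorbing them into $v = \sqrt{V^*V}$. Grouping the resulting trace, the $a$-term becomes $\Tr[a\, V^*V] = \Tr[a v^2]$; the $d$-term becomes $\Tr[d\,(\bfone + V^*V)] = \Tr[d(1+v^2)]$; and the three $b$-contributions — one from $UbU^*$-type, one from $\sfj V \sfj\, b\, \sfj V^* \sfj$-type, and the cross term — combine into $\Tr[b\,(W - w)(W-w)^*]$ where $W$ comes from $U$ and $w$ from $V$; using $WW^* = \bfone + v^2$ one recognizes this as $\Tr[b\,(v - \sqrt{1+v^2})^2]$ up to choosing the polar phases optimally (which is harmless since taking the infimum can only decrease the value). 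Adding the three pieces gives exactly the integrand $a v^2 + b(v-\sqrt{1+v^2})^2 + d(1+v^2)$, and since every admissible $B$ produces some such $v \geq 0$ with $\Tr(v^2) = \Tr(V^*V) < \infty$, the left side of \eqref{eq-IV-vb.20} is bounded below by the infimum over all such $v$, which is \eqref{eq-IV-vb.20}. The nonnegativity \eqref{eq-IV-vb.19} I would note separately: $T$ is a congruence, $T = \big(\begin{smallmatrix} \bfone & 0 \\ 0 & \bfone\end{smallmatrix}\big)\big(\begin{smallmatrix} a & 0 \\ 0 & d\end{smallmatrix}\big) + \big(\begin{smallmatrix} \bfone \\ \bfone\end{smallmatrix}\big) b \big(\bfone\ \bfone\big) \geq 0$ as a sum of two manifestly nonnegative operators.

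The main obstacle I anticipate is the bookkeeping in the block computation of $BTB^*$, specifically tracking the antiunitary $\sfj$ through the off-diagonal blocks $\sfj V \sfj$ and $\sfj U \sfj$ and verifying that the cross terms assemble into a perfect square $(v - \sqrt{1+v^2})^2$ rather than something with an irreducible phase. The resolution is that the infimum is taken over \emph{all} $B \in \Bog_\sfj[\sfh]$, so one does not need the identity $\la\Om|\cdots\Om\ra = \Tr[\cdots(v-\sqrt{1+v^2})^2\cdots]$ on the nose — it suffices to show $\geq$ with $v := \sqrt{V^*V}$, for which one uses $\Tr[b\, X X^*] \geq \Tr[b\,(|X| - (\text{stuff}))^2]$-type estimates combined with the polar decomposition, and the operator-monotonicity/trace-inequality facts already in play in Lemma~\ref{lem-IV-vb.01}. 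A secondary technical point is ensuring all traces are finite: $b \in \cL^2$, $d \in \cL^1$, and $\Tr(V^*V) < \infty$ guarantee each term is well defined, with the $d(1+v^2)$ term finite since $d$ is trace-class and $v^2$ trace-class, and the $b$-term finite by Cauchy--Schwarz for the Hilbert--Schmidt class. This should be a routine but lengthy verification once the normal-ordering expansion is written down cleanly.
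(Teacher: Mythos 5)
Your route is essentially the paper's: pass to $\dGamma_\sfj[BTB^*,0]$ via \eqref{eq-III-vb.15}, use that $\la \Om | \dGamma_\sfj[\tT,0]\Om\ra$ equals the trace of the lower-right block of $\tT$, compute that block, and express everything through $v=|V|$ using $u^*u-v^*v=\bfone$. The diagonal pieces do give $\Tr[a\,v^*v]$, $\Tr[d\,u^*u]=\Tr[d(1+v^2)]$ and $\Tr[b(v^*v+u^*u)]=\Tr[b(2v^2+1)]$, which is the ``square part'' of $\Tr[b(v-\sqrt{1+v^2})^2]$. The genuine gap is the cross term $2\rRe\Tr[b\,v^*\sfj u\sfj]$: to get a \emph{lower} bound on the vacuum expectation you need the \emph{upper} bound $|\Tr[b\,v^*\sfj u\sfj]|\leq \Tr[b\,(v^*v)^{1/2}(1+v^*v)^{1/2}]$, and your proposed escape --- ``choosing the polar phases optimally, which is harmless since taking the infimum can only decrease the value'' --- does not work. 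The operator $B$ is given and fixed; the infimum on the right of \eqref{eq-IV-vb.20} ranges over $v$, not over the partial-isometry parts of $U$ and $V$, so you must prove the inequality for the actual $u,v$ at hand. Since $u$, $v$ and $b$ need not commute, the cross term is not $\Tr[b\,|v|\,|u|]$ by algebra alone, and the claimed perfect square does not assemble without an additional estimate.

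The paper closes exactly this gap with a weighted Cauchy--Schwarz inequality for traces, $|\Tr[b\,v^*\sfj u\sfj]|^2\leq \Tr[b\,v^*x^{-1}v]\,\Tr[b\,u^*\sfj x\sfj u]$ for an auxiliary invertible $x>0$, combined with the intertwining relations $(r+vv^*)^{\pm 1/2}\,\sfj u=\sfj u\,(r+v^*v)^{\pm 1/2}$ and $(r+vv^*)^{\pm 1/2}\,v=v\,(r+v^*v)^{\pm 1/2}$, which follow from the symplectic conditions $u^*u-v^*v=\bfone$ and $v^*\sfj u=u^*\sfj v$ via the resolvent formula for the square root; the choice $x_\eps=(1+vv^*)^{-1/2}(\eps+vv^*)^{1/2}$ and the limit $\eps\to 0$ then yield precisely the needed bound. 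Some argument of this strength is indispensable for the $b$-term of the trace functional. The rest of your proposal --- the block computation of $BTB^*$, the nonnegativity \eqref{eq-IV-vb.19} via the congruence decomposition, and the finiteness of the traces --- is sound as described, modulo the minor point that the vacuum expectation survives through the unordered $aa^*$ contributions of the generalized field operators (yielding the lower-right block) rather than through the $a^*a$ part, which annihilates the vacuum.
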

\begin{proof} First, we note that 
\begin{align} \label{eq-IV-vb.20,1}
\text{if} \quad
\tT \ = \ 
\begin{pmatrix} \ta & \tb^* \\ \tb & \td \end{pmatrix} 
\, , \quad \text{then} \quad
\big\la \Om \big| \dGamma_\sfj[ \tT , 0 ] \Om \big\ra
\ = \ \Tr(\td) \, .
\end{align}
Next, if $B \in Bog_\sfj[\sfh]$ is of the form
\begin{align} \label{eq-IV-vb.21}
B \ = \ 
\begin{pmatrix} 
u & \sfj v \sfj \\ v & \sfj u \sfj 
\end{pmatrix} \, ,
\end{align}
then a simple computation using that $\sfj$ commutes with $a$, $b$,
and $d$, shows that
\begin{align} \label{eq-IV-vb.22}
B \, T \, B^* 
\ = \ 
\begin{pmatrix} 
u (a+b) u^* + u b \sfj v^* \sfj & \quad u (a+b) v^* + u b \sfj u^* \sfj \\
+ \sfj v \sfj b u + \sfj v (d+b) v^* \sfj & 
       \quad + \sfj v \sfj b v^* + \sfj v (d+b) u^* \sfj \\[2ex]
v (a+b) u^* + \sfj u \sfj b u^* & \quad v (a+b) v^* + v b \sfj u^* \sfj \\
+ v b \sfj v^* \sfj + \sfj u (d+b) v^* \sfj & 
       \quad + \sfj u \sfj b v^* + \sfj u (d+b) u^* \sfj 
\end{pmatrix} \, .
\end{align}
Using \eqref{eq-IV-vb.20,1}, this yields
\begin{align} \label{eq-IV-vb.23}
\big\la \Om \big| \dGamma_\sfj[ & B^* T B , 0 ] \Om \big\ra
\ = \ 
\Tr\big[ v (a+b) v^* + v^* b \sfj u \sfj 
+ \sfj u^* \sfj b v + u (d+b) u^* \big] 
\\[1ex] \nonumber
\ = \ &
\Tr\big[ a v^* v +  d u^* u \big]
+  2 \rRe\Tr\big[ b v^* \sfj u \sfj \big] \, .
\end{align}
From the Cauchy-Schwarz inequality for traces we obtain
\begin{align} \label{eq-IV-vb.24}
\big| \Tr\big[ b v^* \sfj u \sfj \big] \big|^2 
\ \leq \ &
\Tr\big[ b \, v^* \, x^{-1} \, v \big] \, 
\Tr\big[ b \, \sfj u^* \sfj \, x \, \sfj u \sfj \big] 
\ = \ 
\Tr\big[ b \, v^* \, x^{-1} \, v \big] \, 
\Tr\big[ b \, u^* \sfj \, x \, \sfj u \big] \, ,
\end{align}
for any bounded and invertible positive operator 
$x \geq \mu \cdot \bfone > 0$. 
\\[1ex]
Next we remark that, due to \eqref{eq-III-vb.14}, we have 
\begin{align} \label{eq-IV-vb.25}
u^* \, u - v^* \, v 
\ = \ 
\sfj \, u \, u^* \, \sfj - v \, v^* 
\ = \ \bfone 
\, , \quad
v^* \, \sfj \, u 
\ = 
\ u^* \, \sfj \, v  
\, , \quad 
\sfj \, u \, v^* 
\ = \ 
v \, u^* \, \sfj \, .
\end{align}
For any $r > 0$, this implies that
\begin{align} \label{eq-IV-vb.26}
(r + v v^*) \, \sfj u 
\ = \ &
r \sfj u + v u^* \sfj v
\ = \ 
\sfj u \, (r + v^* v) 
\, , \quad
(r + v v^*) \, v 
\ = \ &
v \, (r + v^* v) \, ,
\end{align}
which, in turn, gives 
\begin{align} \label{eq-IV-vb.27}
(r + v v^*)^{-1} \, \sfj u 
\ = \ 
\sfj u \, (r + v^* v)^{-1} 
\, , \quad \!
(r + v v^*)^{-1} \, v 
\ = \ &
v \, (r + v^* v)^{-1} \, .
\end{align}
Writing the square root as an integral over resolvents according to
$A^{-1/2} = \break \frac{1}{\pi} \int_0^\infty (s + A)^{-1} \, \frac{ds}{s^{1/2}}$,
\eqref{eq-IV-vb.27} yields 
\begin{align} \label{eq-IV-vb.28}
(r + v v^*)^{\pm 1/2} \, \sfj u 
\ = \ 
\sfj u \, (r + v^* v)^{\pm 1/2} 
\, , \quad
(r + v v^*)^{\pm 1/2} \, v 
\ = \ &
v \, (r + v^* v)^{\pm 1/2} \, ,
\end{align}
for all $r >0$. For small $0 < \eps < 1$, we define
\begin{align} \label{eq-IV-vb.29}
x_\eps \ := \ ( 1 + v v^* )^{-1/2} \, ( \eps + v v^* )^{1/2}  
\end{align}
and observe that, due to \eqref{eq-IV-vb.28} and \eqref{eq-IV-vb.25}, 
we have
\begin{align} \label{eq-IV-vb.30}
u^* \sfj \, x_\eps \, \sfj u 
\ = \ &
u^* \sfj \, ( 1 + v v^* )^{-1/2} \, ( \eps + v v^* )^{1/2} \, \sfj u 
\ = \ 
u^* u \, ( 1 + v^* v )^{-1/2} \, ( \eps + v^* v)^{1/2} 
\nonumber \\[1ex]
\ = \ &
( 1 + v^* v )^{1/2} \, ( \eps + v^* v)^{1/2} 
\end{align}
and further
\begin{align} \label{eq-IV-vb.31}
v^* \, x_\eps^{-1} \, v 
\ = \ &
v^* \, ( \eps + v v^* )^{-1/2} \, ( 1 + v v^* )^{1/2} \, v
\ = \ 
v^* v \, ( \eps + v^* v )^{-1/2} \, ( 1 + v^* v)^{1/2} 
\nonumber \\[1ex]
\ \leq \ &
(v^*v)^{1/2} \, ( 1 + v^* v)^{1/2} \, . 
\end{align}
Inserting \eqref{eq-IV-vb.30} and \eqref{eq-IV-vb.31} into
\eqref{eq-IV-vb.24} and taking the limit $\eps \to 0$, we obtain
\begin{align} \label{eq-IV-vb.32}
\big| \Tr\big[ b \, v^* \, \sfj u \sfj \big] \big| 
\ \leq \ 
\Tr\big[ b_\pm \, (v^*v)^{1/2} \, ( 1 + v^* v)^{1/2} \big] \, .
\end{align}
Using this estimate and \eqref{eq-IV-vb.23}, we arrive at 
\begin{align} \label{eq-IV-vb.33}
\big\la \Om \big| \dGamma_\sfj[ B^* T B , 0 ] \Om \big\ra
\ \geq \ 
\Tr\Big[ a \, |v|^2  + b \big( |v| - \sqrt{1 + |v|^2} \big)^2 
+  d (1 + |v|^2) \Big] \, ,
\end{align}
from which the asserted estimate \eqref{eq-IV-vb.20} is immediate.
\end{proof}

\begin{lemma} \label{lem-IV-vb.09}
Let $\sfj: \sfh \to \sfh$ be an antiunitary involution. Let
$a \in \cB[\sfh]$ be a bounded, $b \in \cL^2[\sfh]$ a Hilbert-Schmidt,
and $d \in \cL^1(\sfh)$ a trace-class operator such that
$a = \sfj a \sfj \geq \sigma \cdot \bfone >0$, for some $\sigma >0$,
and $b = \sfj b \sfj \geq 0$, 
$d = \sfj d \sfj \geq 0$, i.e., all three are nonnegative and commute 
with $\sfj$. Then
\begin{align} \label{eq-IV-vb.34}
& \inf\Big\{ 
\Tr\Big[ a \, v^2 + b \big(v - \sqrt{1+v^2} \big)^2 
+ d \, (1+v^2) \Big] \ \Big| 
\ v \geq 0 \, , \ \Tr(v^2) < \infty \Big\} \, .
\nonumber \\[1ex] 
& \ = \ 
\frac{1}{2} \, \Tr\Big[ 
\Big( \sqrt{a+d} \, (a+d + 4b) \, \sqrt{a+d} \Big)^{1/2} - a + d \Big] \, . 
\end{align}
\end{lemma}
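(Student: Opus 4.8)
The plan is to diagonalise the variational problem by the substitution already encountered in \eqref{eq-IV-vb.56}--\eqref{eq-IV-vb.57}. Abbreviate $P := a + d$ and $Q := a + d + 4b$; both are bounded, self-adjoint, commute with $\sfj$, and satisfy $P \geq \sigma\cdot\bfone > 0$ and $Q \geq P$ (the latter since $b \geq 0$). Lifting the scalar identities by the functional calculus, the map $v \mapsto y := (v + \sqrt{1+v^2}\,)^2$ is a bijection from $\{v = v^* \geq 0\}$ onto $\{y = y^* \geq \bfone\}$ with inverse $v = \tfrac12(y^{1/2} - y^{-1/2})$, and along it
\begin{align*}
v^2 \ = \ \tfrac14\big(y + y^{-1} - 2\,\bfone\big), \qquad
1 + v^2 \ = \ \tfrac14\big(y + y^{-1} + 2\,\bfone\big), \qquad
\big(v - \sqrt{1+v^2}\,\big)^2 \ = \ y^{-1},
\end{align*}
while $\Tr(v^2) < \infty$ is equivalent to $y - \bfone \in \cL^2[\sfh]$ (because $v = \tfrac12 y^{-1/2}(y - \bfone)$ with $y^{-1/2}$ and $y$ bounded). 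Inserting these and writing $a v^2 + d(1+v^2) = (a+d) v^2 + d$, the functional to be minimised becomes
\begin{align*}
\cG(v) \ := \ \Tr\Big[ a v^2 + b\big(v - \sqrt{1+v^2}\,\big)^2 + d(1+v^2) \Big]
\ = \ \Tr\big[ (a+d)\, v^2 + b\, y^{-1} \big] \ + \ \Tr[d],
\end{align*}
which, for $b \in \cL^1[\sfh]$, is a sum of traces of nonnegative trace-class operators, hence finite.

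The candidate minimiser is $y_* := P^{-1/2} R^{1/2} P^{-1/2}$ and $v_* := \tfrac12\big(y_*^{1/2} - y_*^{-1/2}\big)$, where $R := \sqrt{a+d}\,(a+d+4b)\,\sqrt{a+d} = P^{1/2} Q P^{1/2} \geq 0$. I would first check admissibility: from $Q \geq P$ one gets $R = P^{1/2} Q P^{1/2} \geq P^{1/2} P P^{1/2} = P^2$, hence $R^{1/2} \geq P$ by operator monotonicity of the square root, hence $y_* \geq \bfone$; moreover $y_* - \bfone = P^{-1/2}\big(R^{1/2} - P\big) P^{-1/2}$, so $\Tr(v_*^2) < \infty$ provided $R^{1/2} - P \in \cL^1[\sfh]$, which follows from the integral representation $X^{1/2} - Y^{1/2} = \tfrac1\pi \int_0^\infty s\,(s + Y)^{-1}(X - Y)(s + X)^{-1}\,s^{-1/2}\, ds$ with $X = R$, $Y = P^2$, $X - Y = 4 P^{1/2} b P^{1/2} \in \cL^1[\sfh]$, and $X, Y \geq \sigma^2\cdot\bfone$.

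Next I would prove the lower bound $\cG(v) \geq \cG(v_*)$ for every admissible $v$ and then compute $\cG(v_*)$. Using $v^2 - v_*^2 = \tfrac14\big((y - y_*) + (y^{-1} - y_*^{-1})\big)$ and $P + 4b = Q$, the difference of these two finite quantities is
\begin{align*}
\cG(v) - \cG(v_*)
\ = \
\Tr\big[ (a+d)(v^2 - v_*^2) + b\,(y^{-1} - y_*^{-1}) \big]
\ = \
\tfrac14\,\Tr\big[ P\,(y - y_*) + Q\,(y^{-1} - y_*^{-1}) \big].
\end{align*}
Passing to $z := P^{1/2} y P^{1/2}$ and $z_* := P^{1/2} y_* P^{1/2} = R^{1/2}$, and using that this is the trace of a trace-class operator together with cyclicity, one rewrites it as $\tfrac14\,\Tr\big[(z + R z^{-1}) - (z_* + R z_*^{-1})\big]$. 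Since
\begin{align*}
z + R z^{-1} - 2 R^{1/2}
\ = \
\big( z^{1/2} - z^{-1/2} R^{1/2} \big)^*\big( z^{1/2} - z^{-1/2} R^{1/2} \big)
\ \geq \ 0,
\end{align*}
while $z_* + R z_*^{-1} - 2 R^{1/2} = 0$, we get $\cG(v) \geq \cG(v_*)$, with equality precisely at $v = v_*$. For the value, using $v_*^2 = \tfrac14(y_* + y_*^{-1} - 2\,\bfone)$, $P y_* = P^{1/2} R^{1/2} P^{-1/2}$ and $Q y_*^{-1} = P^{-1/2} R^{1/2} P^{1/2}$ (the last from $Q P^{1/2} = P^{-1/2} R$), one finds
\begin{align*}
(a+d)\, v_*^2 + b\, y_*^{-1}
\ = \
\tfrac14\, P^{1/2}\big(R^{1/2} - P\big) P^{-1/2} \ + \ \tfrac14\, P^{-1/2}\big(R^{1/2} - P\big) P^{1/2},
\end{align*}
whose trace equals $\tfrac12\,\Tr\big[R^{1/2} - P\big]$ (both summands lie in $\cL^1[\sfh]$ and cyclicity applies), so that
\begin{align*}
\inf_{v}\cG(v) \ = \ \cG(v_*) \ = \ \tfrac12\,\Tr\big[R^{1/2} - P\big] + \Tr[d]
\ = \ \tfrac12\,\Tr\Big[\big(\sqrt{a+d}\,(a+d+4b)\,\sqrt{a+d}\,\big)^{1/2} - a + d\Big],
\end{align*}
which is \eqref{eq-IV-vb.34}.

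The main obstacle is the trace-class bookkeeping rather than the algebra: because $\Tr[Py]$, $\Tr[a]$ and $\Tr[R^{1/2}]$ are individually infinite, every identity among traces must be applied to the subtracted quantities $\cG(v) - \cG(v_*)$ and $(a+d)v_*^2 + b y_*^{-1}$, which lie in $\cL^1[\sfh]$ precisely because $R^{1/2} - P$, $z - z_*$ and $z^{1/2} - z^{-1/2} R^{1/2}$ do, and this uses the uniform lower bounds $a, P, Q \geq \sigma\cdot\bfone$ together with the integral representation of the square root. Finally, if $b \in \cL^2[\sfh] \setminus \cL^1[\sfh]$ the identity still holds because both sides equal $+\infty$: on the left $\big(v - \sqrt{1+v^2}\,\big)^2 \geq \varepsilon\cdot\bfone$ for some $\varepsilon = \varepsilon(\|v\|) > 0$, so $\Tr\big[b\,(v - \sqrt{1+v^2}\,)^2\big] \geq \varepsilon\,\Tr[b] = +\infty$ for every admissible $v$; on the right, if $R^{1/2} - P$ were trace class then $R = (R^{1/2})^2 = P^2 + \cL^1[\sfh]$, hence $P^{1/2} b P^{1/2} = \tfrac14(R - P^2) \in \cL^1[\sfh]$ and therefore $b \in \cL^1[\sfh]$, a contradiction, so $R^{1/2} - P \geq 0$ is not trace class and $\Tr[R^{1/2} - P] = +\infty$.
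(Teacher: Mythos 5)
Your proof is correct and follows essentially the same route as the paper: the same change of variables $v=\tfrac12(y^{1/2}-y^{-1/2})$, the same candidate minimizer $y_*=P^{-1/2}R^{1/2}P^{-1/2}$ (the paper's $m^{-1}(m(m^2+4b)m)^{1/2}m^{-1}$ with $m=P^{1/2}$), and the same strategy of exhibiting $\cG(v)-\cG(v_*)$ as the trace of a manifestly nonnegative operator before evaluating $\cG(v_*)$ by cyclicity. The only variations are minor: you obtain positivity by completing a square in the conjugated variable $z=P^{1/2}yP^{1/2}$ — note that $(z^{1/2}-z^{-1/2}R^{1/2})^*(z^{1/2}-z^{-1/2}R^{1/2})$ equals $z+R^{1/2}z^{-1}R^{1/2}-2R^{1/2}$ rather than $z+Rz^{-1}-2R^{1/2}$, so your displayed identity holds only after taking the trace — whereas the paper uses the stationarity relation $y_*^{-1}(m^2+4b)y_*^{-1}=m^2$ together with the second resolvent identity, and you additionally settle the case $b\in\cL^2[\sfh]\setminus\cL^1[\sfh]$ (both sides $+\infty$), which the paper leaves implicit.
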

\begin{proof} It is convenient to parametrize $v$ as
\begin{align} \label{eq-IV-vb.35}
v \ = \ \frac{1}{2} \big( y^{1/2} - y^{-1/2} \big) \, ,
\end{align}
where $y \geq 1$ is a positive operator defined by \eqref{eq-IV-vb.35}
through functional calculus. Note in passing that $y$ is uniquely
determined by $v$ up to $\ker(y-1)$ and that $y-1 \in \cL^2[\sfh]$,
due to Lemma~\ref{lem-IV-vb.10}~(i). Then
\begin{align} \label{eq-IV-vb.36}
v^2 \ = \ 
\frac{y}{4} + \frac{y^{-1}}{4} - \frac{1}{2}
\quad \text{and} \quad
1+v^2 \ = \ 
\frac{y}{4} + \frac{y^{-1}}{4} + \frac{1}{2} 
\ = \
\Big[\frac{1}{2} \big( y^{1/2} + y^{-1/2} \big) \Big]^2 \, .
\end{align}
Hence we have that
\begin{align} \label{eq-IV-vb.41}
\sqrt{1+v^2} 
\ = \ 
\frac{1}{2} \big( y^{1/2} + y^{-1/2} \big) 
\quad \text{and} \quad
\big( v - \sqrt{1+v^2} \big)^2 \ = \ y^{-1} \, .
\end{align}
Inserting the parametrization \eqref{eq-IV-vb.35} into the
trace in \eqref{eq-IV-vb.34}, we obtain
\begin{align} \label{eq-IV-vb.42}
\Tr\Big[ a \, v^2 + b \big(v - \sqrt{1+v^2} \big)^2 
+ d \, (1+v^2) \Big] 
\ = \ 
\frac{1}{4} \, \cG(y) \, ,
\end{align}
with 
\begin{align} \label{eq-IV-vb.43}
\cG(y) \ := \ \Tr\Big[ 
m^2 \, y \: + \: (m^2 + 4b) \, y^{-1} \: + \:  2 (d-a) \Big] \, ,
\end{align}
$m := \sqrt{a + d} \geq \sigma^{1/2} >0$, and $y-1 \in \cL^2[\fh]$.
Obviously, $y \mapsto \cG(y)$ is convex. We define $y_* \geq \bfone$
by
\begin{align} \label{eq-IV-vb.44}
y_* \ := \ 
m^{-1} \, \big( m \, (m^2 + 4b) \, m \big)^{1/2} \, m^{-1}
\end{align}
and observe that $y_*-1 \in \cL^2[\sfh]$, by Lemma~\ref{lem-IV-vb.10}~(ii),
and that $y_* \, m^2 \, y_* = m^2 + 4b$ which is equivalent to
\begin{align} \label{eq-IV-vb.45}
y_*^{-1} \, (m^2+4b) \, y_*^{-1} \ = \ m^2 \, .
\end{align}
The latter is the formal condition for stationarity of $y \mapsto
\cG(y)$. We refrain from turning this formal into a mathematically
rigorous condition by establishing differentiability of $\cG$ in a
suitable sense. Instead, we simply check by computation that $y_*$ is
the minimizer of $\cG$. Namely, we have 
that 
\begin{align} \label{eq-IV-vb.46}
\cG(y) - \cG(y_*) 
\ := \ 
\Tr\Big[ 
m^2 \, (y-y_*) \: + \: (m^2 + 4b) \, \big(y^{-1} - y_*^{-1} \big) \Big] \, ,
\end{align}
and the second resolvent equation gives
\begin{align} \label{eq-IV-vb.47}
y^{-1} - y_*^{-1}
\ = \ 
- y_*^{-1} \, (y-y_*) \, y_*^{-1} \: + \: 
y_*^{-1} \, (y-y_*) \, y^{-1} \, (y-y_*) \, y_*^{-1} \, .
\end{align}
Thus from \eqref{eq-IV-vb.45} derives
\begin{align} \label{eq-IV-vb.48}
\cG(y) - \cG(y_*) 
\ = \
\Tr\Big[ & 
\big( m^2 - y_*^{-1} \, (m^2+4b) \, y_*^{-1} \big) \, (y-y_*) 
\\ \nonumber  &
\: + \: y^{-1/2} \, (y-y_*) \, y_*^{-1} \, (m^2 + 4b) \, 
y_*^{-1} \, (y-y_*) \, y^{-1/2} \Big] 
\\[1ex] \nonumber 
\ = \
\Tr\Big[ & y^{-1/2} \, (y-y_*) \, y_*^{-1} \, (m^2 + 4b) \, 
y_*^{-1} \, (y-y_*) \, y^{-1/2} \Big] 
\ \geq \ 0 \, .
\end{align}
Finally,
\begin{align} \label{eq-IV-vb.51}
\cG(y_*) \ = \ &
\Tr\Big[ 
m^2 \, y_* \: + \: (m^2 + 4b) \, y_*^{-1} \: + \:  2 (d-a) \Big] 
\nonumber \\[1ex] 
\ = \ &
2 \, \Tr\big[ 
m \, y_* \, m + d-a \big] 
\ = \ 
2 \, \Tr\big[ 
\big( m \, (m^2 + 4b) \, m \big)^{1/2} + d-a \big] 
\nonumber \\[1ex] 
\ = \ &
2 \, \Tr\Big[ 
\big( \sqrt{a+d} \, (a+d + 4b) \, \sqrt{a+d} \big)^{1/2} - a + d \Big] \, , 
\end{align}
arriving at \eqref{eq-IV-vb.34}.
\end{proof}

\begin{lemma} \label{lem-IV-vb.10}
Let $\sfh$ be a Hilbert space and $m, b, y \in \cB[\sfh]$ be positive
bounded operators such that $b \in \cL^2[\sfh]$ is Hilbert-Schmidt,
$y \geq 1$, and $m \geq \sigma^{1/2} \cdot \bfone$, for some 
$\sigma >0$. Then the following assertions hold true.
\begin{itemize}
\item[(i)] Define $v := \frac{1}{2} (y^{1/2} - y^{-1/2}) > 0$. Then 
  $v \in \cL^2[\sfh]$ is Hilbert-Schmidt if, and only if, 
  $y-1 \in \cL^2[\sfh]$ is Hilbert-Schmidt.
\item[(ii)] Define $y := m^{-1} \big[ m (m^2 + 4b) m \big]^{1/2} m^{-1}$.
Then $y \geq 1$ and $y-1 \in \cL^2[\sfh]$ is Hilbert-Schmidt.
\end{itemize}
\end{lemma}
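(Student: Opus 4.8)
The plan is to treat the two parts separately. Part (i) is a short manipulation of bounded operators; for part (ii) I would, after an exact conjugation by $m^{-1}$, reduce the claim to a Hilbert--Schmidt Lipschitz bound for the operator square root, which I would then obtain from a Sylvester equation. For part (i): since $y$ is bounded and $y \geq 1$, both $y^{1/2}$ and $y^{-1/2}$ are bounded operators (indeed $\|y\|^{-1/2}\bfone \leq y^{-1/2} \leq \bfone \leq y^{1/2} \leq \|y\|^{1/2}\bfone$). From $2v = y^{1/2} - y^{-1/2}$ and the functional calculus for $y$ one reads off the identities $y - 1 = 2\,v\,y^{1/2}$ and $v = \tfrac12 (y-1)\,y^{-1/2}$. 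Since $\cL^2[\sfh]$ is a two-sided ideal of $\cB[\sfh]$, the first identity gives $v \in \cL^2[\sfh] \Rightarrow y - 1 \in \cL^2[\sfh]$ and the second gives the converse, which proves (i).

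For part (ii), the algebraic reduction: put $A := m(m^2 + 4b)m = m^4 + 4\,m b m$ and $B := m^4$, so that $A - B = 4\,mbm \geq 0$ and hence $A \geq B \geq 0$. Operator monotonicity of $t \mapsto t^{1/2}$ gives $A^{1/2} \geq B^{1/2} = m^2$, and conjugating by $m^{-1}$ (which is bounded with $\|m^{-1}\| \leq \sigma^{-1/2}$) yields $y = m^{-1}A^{1/2}m^{-1} \geq m^{-1}m^2 m^{-1} = \bfone$; this is the first assertion. For the second, set $E := A^{1/2} - B^{1/2} = A^{1/2} - m^2 \geq 0$, so that $y - 1 = m^{-1} E\, m^{-1}$ and, by the ideal property, it suffices to show $E \in \cL^2[\sfh]$.

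The Sylvester step: with $S := A^{1/2} + B^{1/2}$, expanding $ES$ and $SE$ and noting that the two terms $\pm\,[A^{1/2}, B^{1/2}]$ cancel yields the Sylvester identity $E S + S E = 2(A - B) = 8\, m b m$. Since $B^{1/2} = m^2 \geq \sigma\,\bfone$ and $A^{1/2} \geq B^{1/2}$, one has $S \geq 2\sigma\,\bfone > 0$, so $e^{-tS}$ decays exponentially and the Sylvester equation has the unique bounded solution $E = \int_0^\infty e^{-tS}\,(8\, mbm)\,e^{-tS}\,dt$. Because $b$ is Hilbert--Schmidt and $m$ bounded, $8\,mbm \in \cL^2[\sfh]$, whence $\|E\|_{\cL^2} \leq 8\,\|mbm\|_{\cL^2}\int_0^\infty e^{-4\sigma t}\,dt = \tfrac{2}{\sigma}\,\|mbm\|_{\cL^2} < \infty$, i.e. $E \in \cL^2[\sfh]$. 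Conjugating once more, $y - 1 = m^{-1} E\, m^{-1} \in \cL^2[\sfh]$, finishing (ii).

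The only genuinely substantive point is the Sylvester identity together with the bounded invertibility of $X \mapsto XS + SX$ on $\cL^2[\sfh]$; both rely on the uniform lower bound $m \geq \sigma^{1/2}\bfone$, i.e. on the infrared cutoff $\sigma > 0$ (without it $S$ need not be bounded below, the Sylvester operator is no longer boundedly invertible, and the conclusion may fail). An equivalent route avoiding the Sylvester equation starts from $A^{1/2} = \pi^{-1}\int_0^\infty A(s+A)^{-1}s^{-1/2}\,ds$ (the companion of the representation of $A^{-1/2}$ already used above), telescopes with the resolvent identity to $A^{1/2} - B^{1/2} = \pi^{-1}\int_0^\infty s^{1/2}(s+A)^{-1}(A-B)(s+B)^{-1}\,ds$, and bounds its $\cL^2$-norm via $\|(s+A)^{-1}\|, \|(s+B)^{-1}\| \leq (s+\sigma^2)^{-1}$ together with the convergence of $\int_0^\infty s^{1/2}(s+\sigma^2)^{-2}\,ds$.
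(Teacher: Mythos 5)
Your proof is correct. Part (i) is essentially the paper's argument in a slightly different guise: the paper uses the identity $v^2 = \tfrac{1}{4y}(y-1)^2$ together with the two-sided operator bound $\bfone \leq y \leq (1+2\|v\|)^2\,\bfone$ to sandwich $\Tr[v^2]$ between multiples of $\Tr[(y-1)^2]$, whereas you factor $y-1 = 2v\,y^{1/2}$ and $v = \tfrac12(y-1)y^{-1/2}$ and invoke the ideal property of $\cL^2[\sfh]$; both are one-line consequences of the functional calculus and the boundedness of $y^{\pm 1/2}$. Part (ii) is where you genuinely diverge. The paper writes $y-1 = m^{-1}\big[(m^4+4mbm)^{1/2}-m^2\big]m^{-1}$ via the subordination formula $A^{-1/2} = \pi^{-1}\int_0^\infty (s+A)^{-1}s^{-1/2}\,ds$, telescopes with the second resolvent identity, drops a nonpositive term to get the operator bound $0 \leq y-1 \leq \tfrac{4}{\pi}\int_0^\infty R_s\, b\, R_s\, s^{1/2}\,ds$ with $R_s=(s+m^4)^{-1}$, and then controls $\Tr[(y-1)^2]$ by a double integral using Cauchy--Schwarz for traces; your sketched "alternative route" at the end is precisely this. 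Your primary route instead derives the Lyapunov--Sylvester identity $ES+SE = 2(A-B) = 8\,mbm$ for $E=A^{1/2}-B^{1/2}$, $S=A^{1/2}+B^{1/2}\geq 2\sigma\,\bfone$, and represents $E=\int_0^\infty e^{-tS}(8\,mbm)\,e^{-tS}\,dt$, which gives the clean bound $\|E\|_{\cL^2} \leq \tfrac{2}{\sigma}\|mbm\|_{\cL^2}$ in one step, avoiding both the resolvent telescoping and the trace Cauchy--Schwarz. The identifications $ES+SE=2(A-B)$ (the commutators cancel), the positivity $S\geq 2\sigma\,\bfone$, and the uniqueness of the bounded solution of the Lyapunov equation are all correct, and both approaches hinge in the same way on the infrared cutoff through $m\geq\sigma^{1/2}\,\bfone$. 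Your argument is arguably shorter and yields an equally explicit constant; the paper's has the minor advantage of reusing the integral representation \eqref{eq-IV-vb.36d} that appears elsewhere in the text.
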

\begin{proof} \hspace*{2mm} \\
\textit{(i):} First $0 < y^{-1/2} \leq 1$ and thus 
$1 \leq y^{1/2} = y^{-1/2} + 2v \leq 1 + 2 \|v\|_\op$, which implies that
\begin{align} \label{eq-IV-vb.36a}
\bfone \ \leq \ y 
\ \leq \ 
\big( 1 + 2\|v\|_{\cB[\sfh]} \big)^2 \cdot \bfone \, .
\end{align}
Secondly note that
\begin{align} \label{eq-IV-vb.36b}
v^2 \ = \ 
\frac{y}{4} + \frac{y^{-1}}{4} - \frac{1}{2}
\ = \ 
\frac{1}{4y} (y-1)^2 \, , 
\end{align}
and taking \eqref{eq-IV-vb.36a} into account, we arrive at \textit{(i)}
because
\begin{align} \label{eq-IV-vb.36c}
\Tr\big[ v^2 \big]
\ \leq \
\Tr\big[ (y-1)^2 \big] 
\ \leq \
4 \big( 1 + 2\|v\|_{\cB[\sfh]} \big)^2 \, \Tr\big[ v^2 \big] \, .
\end{align}
\textit{(ii):} 
For $y = m^{-1} \big[ m (m^2 + 4b) m \big]^{1/2} m^{-1}$ we trivially 
have $y \geq 1$ since $b \geq 0$ and the square root is operator
monotone. Moreover, using 
\begin{align} \label{eq-IV-vb.36d}
A^{-1/2} \ = \ 
\frac{1}{\pi} \int_0^\infty (s + A)^{-1} \, \frac{ds}{s^{1/2}} \, ,
\end{align}
the second resolvent equation, and $R_s := (s + m^4)^{-1} \leq
(s + \sigma^2)^{-1}$, we have that
\begin{align} \label{eq-IV-vb.49}
y - & 1 \ = \ 
m^{-1} \Big[ \big( m^4 + 4mbm \big)^{1/2} \: - \:  m^2 \Big] \, m^{-1} 
\nonumber \\[1ex] 
\ = \ &
\frac{1}{\pi} \int_0^\infty m^{-1} \, \bigg\{ 
\frac{m^4 + 4mbm}{s + m^4 + 4mbm} - \frac{m^4}{s + m^4} 
\bigg\} \, m^{-1} \, \frac{ds}{s^{1/2}}
\nonumber \\[1ex] 
\ = \ &
\frac{1}{\pi} \int_0^\infty m^{-1} \, \Big\{ 
\big( s + m^4 \big)^{-1} - \big( s + m^4 + 4mbm \big)^{-1} 
\Big\} \, m^{-1} \, s^{1/2} \, ds
\nonumber \\[1ex] 
\ = \ &
\frac{4}{\pi} \int_0^\infty \Big\{ R_s \, b \, R_s \: - \:
4 R_s \, bm \, \big( s + m^4 + 4mbm \big)^{-1} \, mb \, R_s \Big\} 
\, s^{1/2} \, ds
\nonumber \\[1ex] 
\ \leq \ &
\frac{4}{\pi} \int_0^\infty \big\{ R_s \, b \, R_s \big\} 
\, s^{1/2} \, ds \, .
\end{align}
Consequently, 
\begin{align} \label{eq-IV-vb.50}
\Tr\big[ & (y-1)^2 \big] 
\nonumber \\[1ex] 
\ \leq \ &
\frac{16}{\pi^2} \int_0^\infty \int_0^\infty \Tr\big[ 
\sqrt{R_s} \, \sqrt{R_t} \, b \, \sqrt{R_t} \, R_s \,  
\sqrt{R_t} \, b \, \sqrt{R_t} \, \sqrt{R_s} \big] 
\, \sqrt{s} \, \sqrt{t} \, ds \, dt
\nonumber \\[1ex] 
\ \leq \ &
\frac{16}{\pi^2} \int_0^\infty \int_0^\infty \Tr\big[ 
\sqrt{R_s} \, \sqrt{R_t} \, b^2 \, \sqrt{R_t} \, \sqrt{R_s} \big] 
\, \frac{\sqrt{s} \,  ds}{s+\sigma^2} \, 
\frac{\sqrt{t} \, dt}{t+\sigma^2} 
\\[1ex] \nonumber 
\ \leq \ &
\frac{16}{\pi^2} 
\bigg( \int_0^\infty \frac{\sqrt{s} \,  ds}{(s+\sigma^2)^2} \bigg)^2 \, 
\Tr\big[ b^2 \big] 
\ = \ 
\frac{16}{\pi^2 \, \sigma^2} 
\bigg( \int_0^\infty \frac{\sqrt{r} \,  ds}{(r+1)^{4}} \bigg)^2 \, 
\Tr\big[ b^2 \big] 
\ < \ \infty \, .
\end{align}
\end{proof}

\newpage
\section{Localization Estimates} \label{sec-V}
%
\setcounter{equation}{0}
In this section we turn to the analysis of the effective energy
functional
\begin{align} \label{eq-V-vb.01}
\hcE_{\alpha, \Lambda}(\phi)
\ = \ 
\frac{1}{2} \big\|\vnabla \phi \big\|_2^2 
+ \frac{1}{2} X( 2 \Theta_{\phi,\alpha}) \, ,
\end{align}
where $\phi = |\phi| \in \cH_\el$ is normalized and 
$\phi \in H^1(\RR^3)$, $\Theta_{\phi,\alpha}$ is defined in 
\eqref{eq-III-vb.21,1}-\eqref{eq-III-vb.21,2}, and
\begin{align} \label{eq-V-vb.02}
X(A) \ := \ 
\Tr\Big( \sqrt{ |k|^2 + A \:} \: - \: |k| \Big) \, ,
\end{align}
for positive operators $A \geq 0$. Recall that, according to
Theorem~\ref{thm-IV-vb.04} and Corollary~\ref{cor-IV-vb.06}, the
Lieb-Loss energy defined in Eqs.~\eqref{eq-0.03}-\eqref{eq-0.04} is
given by 
\begin{align} \label{eq-V-vb.03,1}
E_\LL(\alpha, \Lambda)
\ = \ 
\inf\big\{ \hcE_{\alpha, \Lambda}(\phi) \; \big| \ 
\phi = |\phi| \in H^1(\RR^3) \, , \ \|\phi\|_2 = 1 \big\} \, .
\end{align}
Ultimately, we compare $\hcE_{\alpha, \Lambda}$ and its infimum 
$E_\LL(\alpha, \Lambda)$ to $\cF_{\beta(\alpha, \Lambda)}$ and its infimum 
$F[\beta(\alpha, \Lambda)]$, respectively, where
\begin{align} \label{eq-V-vb.03,2}
\cF_\beta(\phi) 
\ := \ &
\frac{1}{2} \big\|\vnabla \phi \big\|_2^2 
\: + \: \beta \, \|\phi\|_1 \, ,
\\[1ex] \label{eq-V-vb.03,3}
F[\beta]
\ := \ 
\inf\big\{ \cF_\beta(\phi) \; \big| \ 
\phi = & |\phi| \in H^1(\RR^3) \cap L^1(\RR^3) 
\, , \ \|\phi\|_2 = 1 \big\} \, ,
\\[1ex] \label{eq-V-vb.03,4}
\beta(\alpha, \Lambda)
\ := \ &
\sqrt{\frac{4\alpha}{9\pi}\,} \, \Lambda^3 \, .
\end{align}
In the present section we demonstrate that the minimization in
\eqref{eq-V-vb.03,1} may be restricted to functions supported in the
ball $B(0,L) = \{ x \in \RR^3 : \ |x| < L \}$ of radius $L < \infty$,
provided $L \gg 1$ is sufficiently large. That is, we prove in
Theorem~\ref{thm-V-vb.01} below that 
\begin{align} \label{eq-V-vb.04}
E_\LL^{(L)}(\alpha, \Lambda)
\ := \ 
\inf\Big\{ \hcE_{\alpha, \Lambda}(\phi) \; \Big| \ 
\phi = |\phi| \in Y_L \, , \ \|\phi\|_2 = 1 \big\} \, ,
\end{align}
approximates $E_\LL(\alpha, \Lambda)$, as $L \to \infty$, by showing
that the error made by this restriction is of order $L^{-2}$, as
suggested by the IMS localization formula. Here,
\begin{align} \label{eq-V-vb.04,1}
Y_L \ := \ H^1\big( B(0,L) \big) 
\ \subseteq \
H^1(\RR^3) \cap L^1(\RR^3)
\ \subseteq \
H^1(\RR^3) \ =: \ Y \, ,
\end{align}
and we correspondingly approximate $F[\beta]$ by
\begin{align} \label{eq-V-vb.04,2}
F^{(L)}[\beta]
\ := \ 
\inf\big\{ \cF_\beta(\phi) \; \big| \ & 
\phi = |\phi| \in Y_L \, , \ \|\phi\|_2 = 1 \big\} \, .
\end{align}
%
%
\begin{theorem} \label{thm-V-vb.01}
There exists a universal constant $C < \infty$ such that, 
for all \break $\alpha, \beta, L >0$, $\sigma \geq 0$, and $\Lambda \geq 1$,
\begin{align} \label{eq-V-vb.05,1}
E_\LL^{(L)}(\alpha, \Lambda) - \frac{C}{L^2}
\ \leq \ &
E_\LL(\alpha, \Lambda)
\ \leq \
E_\LL^{(L)}(\alpha, \Lambda) \, ,
\\[1ex] \label{eq-V-vb.05,2}
F^{(L)}[\beta] - \frac{C}{L^2}
\ \leq \ &
F[\beta]
\ \leq \
F^{(L)}[\beta] \, ,
\end{align}
with $E_\LL(\alpha, \Lambda)$, $F(\alpha, \Lambda)$,
$E_\LL^{(L)}(\alpha, \Lambda)$, and $F^{(L)}(\alpha, \Lambda)$ as in
\eqref{eq-V-vb.03,1}, \eqref{eq-V-vb.03,3}, \eqref{eq-V-vb.04}, and
\eqref{eq-V-vb.04,2}, respectively.
\begin{proof} 
The inequalities $E_\LL(\alpha,\Lambda) \leq E_\LL^{(L)}(\alpha,\Lambda)$
and $F[\beta] \leq F^{(L)}[\beta]$ are trivial 
consequences of the inclusions 
$Y_L \subseteq Y$ and $Y_L \subseteq H^1(\RR^3) \cap L^1(\RR^3)$,
respectively.

For the derivation of the lower bound \eqref{eq-V-vb.05,1} on
$E_\LL(\alpha, \Lambda)$ we pick a smooth and compactly supported
function $\eta \in C_0^\infty(\RR^3; \RR_0^+)$, chosen such that
$\supp(\eta) \subseteq B(0,1)$ and $\|\eta\|_2 = 1$. Then we define
\begin{align} \label{eq-V-vb.06,1}
\eta_{L,z}(x) \ := \ L^{-\frac{3}{2}} \, \eta\big[ L^{-1} (x-z)\big] \, ,
\end{align}
for all $L>0$, and we observe that $\| \eta_{L,z} \|_2 = 1$ and 
$\int \eta_{L,z}^2(x) \, d^3z = 1$. We further set
\begin{align} \label{eq-V-vb.06,2}
\rho_L(z) \ := \ \| \eta_{L,z} \, \phi \|_2^2 \, , \quad
\phi_{L,z}(x) \ := \ \left\{ 
\begin{array}{cc}
\frac{\eta_{L,z}(x) \, \phi(x)}{\sqrt{\rho_L(z)\,}} 
& \text{if $\rho_L(z) >0$},
\\ 0 & \text{if $\rho_L(z) =0$},
\end{array} \right.
\end{align}
and observe that $\rho_L$ is a probability density on $\RR^3$. A
variant of the IMS localization formula
\cite{CyconFroeseKirschSimon1987} now yields
\begin{align} \label{eq-V-vb.06,3}
\| \nabla \phi \|_2^2 
\ = \ &
\int \big\| \nabla ( \eta_{L,z} \phi ) \big\|_2^2 \, d^3z 
\: - \: \int |\nabla \eta_{L,z}|^2 \, d^3z 
\nonumber \\[1ex]
\ = \ &
\int  \| \nabla \phi_{L,z} \|_2^2 \; \rho_L(z) \, d^3z 
\: - \: \frac{\|\nabla \eta\|_2^2}{L^2} \, .  
\end{align}
Note that
\begin{align} \label{eq-V-vb.06,4}
\Phi_{\phi,\alpha}^* \, \Phi_{\phi,\alpha}
\ = \ &
\int \big\{ \Phi_{\phi_{L,z},\alpha}^* \, \Phi_{\phi_{L,z},\alpha} \big\} 
\; \rho_L(z) \, d^3z \, ,  
\end{align}
and since $A \mapsto X(A)$ is concave according to
Lemma~\ref{lem-V-vb.03}~(ii), we obtain
\begin{align} \label{eq-V-vb.06.05}
X\big( \Phi_{\phi,\alpha}^* \, \Phi_{\phi,\alpha} \big)
\ \geq \
\int X\big( \Phi_{\phi_{L,z},\alpha}^* \, \Phi_{\phi_{L,z},\alpha} \big) 
\; \rho_L(z) \, d^3z \, . 
\end{align}
Consequently
\begin{align} \label{eq-V-vb.06.06}
\hcE_{\alpha, \Lambda}(\phi) 
\ \geq \ &
\int \hcE_{\alpha, \Lambda}(\phi_{L,z}) \; \rho_L(z) \, d^3z 
\: - \: \frac{\|\nabla \eta\|_2^2}{L^2}
\\[1ex] \nonumber 
\ \geq \ &
\int E_\LL^{(L)}(\alpha, \Lambda) \; \rho_L(z) \, d^3z 
\: - \: \frac{\|\nabla \eta\|_2^2}{L^2}
\ = \ 
E_\LL^{(L)}(\alpha, \Lambda) \: - \: \frac{\|\nabla \eta\|_2^2}{L^2} \, .
\end{align}
Taking the infimum over $\phi \in H^1(\RR^3)$ concludes the proof of
the first inequality in \eqref{eq-V-vb.05,1}. The proof of the first
inequality in \eqref{eq-V-vb.05,2} is similar.
\end{proof}
\end{theorem}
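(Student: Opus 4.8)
The plan is to obtain the two lower bounds by an IMS-type localization; the companion upper bounds $E_\LL(\alpha,\Lambda) \leq E_\LL^{(L)}(\alpha,\Lambda)$ and $F[\beta] \leq F^{(L)}[\beta]$ are immediate from the inclusion $Y_L = H^1(B(0,L)) \subseteq H^1(\RR^3) \cap L^1(\RR^3)$. First I would fix a cutoff $\eta \in C_0^\infty(\RR^3;\RR_0^+)$ with $\supp \eta \subseteq B(0,1)$ and $\|\eta\|_2 = 1$, and form the sliding family $\eta_{L,z}(x) := L^{-3/2}\,\eta(L^{-1}(x-z))$, which obeys $\int \eta_{L,z}(x)^2\,d^3z = 1$ for every $x$. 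Given a normalized $\phi = |\phi| \in H^1(\RR^3)$, this produces the probability density $\rho_L(z) := \|\eta_{L,z}\phi\|_2^2$ on $\RR^3$ together with the normalized, nonnegative, compactly supported functions $\phi_{L,z} := \rho_L(z)^{-1/2}\,\eta_{L,z}\phi$ (and $\phi_{L,z} := 0$ where $\rho_L(z) = 0$), each satisfying $\supp \phi_{L,z} \subseteq \overline{B(z,L)}$.

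Next I would assemble three ingredients. (i) The IMS identity: integrating $|\nabla(\eta_{L,z}\phi)|^2$ over $z$ annihilates the cross term (a total $z$-derivative) and leaves $\|\nabla\phi\|_2^2 = \int \rho_L(z)\,\|\nabla\phi_{L,z}\|_2^2\,d^3z - L^{-2}\|\nabla\eta\|_2^2$. (ii) Since $\Theta_{\phi,\alpha} = \frac{\alpha}{(2\pi)^3}\,P_C\,\chi_{\sigma,\Lambda}\,\phi(x)^2\,\chi_{\sigma,\Lambda}\,P_C$ depends linearly on the multiplication operator $\phi^2$, and $\phi(x)^2 = \int \rho_L(z)\,\phi_{L,z}(x)^2\,d^3z$ pointwise, one gets the convex decomposition $\Theta_{\phi,\alpha} = \int \rho_L(z)\,\Theta_{\phi_{L,z},\alpha}\,d^3z$ (equivalently \eqref{eq-V-vb.06,4}). (iii) The map $A \mapsto X(A) = \Tr(\sqrt{|k|^2 + A}-|k|)$ is concave on nonnegative operators; I would record this as a separate lemma, deducing it from the operator concavity of the square root (L{\"o}wner). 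Applied to the decomposition in (ii), Jensen's inequality then gives $X(2\Theta_{\phi,\alpha}) \geq \int \rho_L(z)\,X(2\Theta_{\phi_{L,z},\alpha})\,d^3z$. Combining (i)--(iii),
\[
\hcE_{\alpha,\Lambda}(\phi) \ \geq\ \int \rho_L(z)\,\hcE_{\alpha,\Lambda}(\phi_{L,z})\,d^3z \ -\ \tfrac{1}{2}\,L^{-2}\,\|\nabla\eta\|_2^2 .
\]

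To conclude, I would use that $\hcE_{\alpha,\Lambda}$ is translation invariant: $\|\nabla\phi\|_2$ manifestly is, and a shift $\phi \mapsto \phi(\cdot - z)$ conjugates $\phi^2$, hence $\Theta_{\phi,\alpha}$, by a unitary multiplication operator in the momentum representation, which commutes with the Fourier multipliers $P_C$, $\chi_{\sigma,\Lambda}$ and $|k|$; thus $\Theta_{\phi(\cdot - z),\alpha}$ is a unitary conjugate of $\Theta_{\phi,\alpha}$ and $X(2\Theta_{\phi(\cdot-z),\alpha}) = X(2\Theta_{\phi,\alpha})$. Hence $\hcE_{\alpha,\Lambda}(\phi_{L,z}) = \hcE_{\alpha,\Lambda}(\phi_{L,z}(\cdot + z))$, and the shifted function lies in $Y_L = H^1(B(0,L))$, is nonnegative and $L^2$-normalized, so $\hcE_{\alpha,\Lambda}(\phi_{L,z}) \geq E_\LL^{(L)}(\alpha,\Lambda)$; integrating against $\rho_L$ and then taking the infimum over $\phi$ yields the first inequality in \eqref{eq-V-vb.05,1} with, say, $C = \tfrac12 \|\nabla\eta\|_2^2$. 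The bound \eqref{eq-V-vb.05,2} follows the same template, with $\beta\,\|\phi\|_1$ replacing $\tfrac12 X(2\Theta_{\phi,\alpha})$: in place of (ii)--(iii) one uses the pointwise Jensen estimate $|\phi(x)| = \big(\int \rho_L(z)\,|\phi_{L,z}(x)|^2\,d^3z\big)^{1/2} \geq \int \rho_L(z)\,|\phi_{L,z}(x)|\,d^3z$ (concavity of the square root), whence $\|\phi\|_1 \geq \int \rho_L(z)\,\|\phi_{L,z}\|_1\,d^3z$, together with the translation invariance of $\cF_\beta$.

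The IMS part (i) carries no real difficulty; the genuine point is the interplay in (ii)--(iii) --- namely that the nonlocal, nonlinear term $\tfrac12 X(2\Theta_{\phi,\alpha})$ also localizes, precisely because $\phi^2 \mapsto \Theta_{\phi,\alpha}$ is linear while $X$ is concave. The main technical care goes into making the concavity of $X$ rigorous (the operators involved must keep $X$ finite, which holds since $\chi_{\sigma,\Lambda}$ restricts momenta to a bounded shell, rendering $\Theta_{\phi,\alpha}$ Hilbert--Schmidt) and into justifying the exchange of the trace with the $z$-integration in the infinite-dimensional Jensen inequality. The case $\sigma = 0$ is then reduced to $\sigma > 0$ by the continuity argument of Section~\ref{sec-I}, step~(1).
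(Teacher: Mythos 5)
Your proposal is correct and follows essentially the same route as the paper: the sliding IMS localization with $\eta_{L,z}$, the linear decomposition $\Theta_{\phi,\alpha}=\int\rho_L(z)\,\Theta_{\phi_{L,z},\alpha}\,d^3z$, and concavity of $X$ (which the paper establishes in Lemma~\ref{lem-V-vb.03}~(ii) by a direct midpoint computation rather than by citing L\"owner, but to the same effect). The only additions are that you make explicit two points the paper leaves implicit, namely the translation invariance needed to pass from $\phi_{L,z}\in H^1(B(z,L))$ to $Y_L$ and the pointwise Jensen step for the $\|\phi\|_1$ term in \eqref{eq-V-vb.05,2}; both are correct.
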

%
For the proof of Theorem~\ref{thm-V-vb.01}, we supply various
properties of $X(A)$ in the following two lemmata. To formulate these
it is convenient to denote
\begin{align} \label{eq-V-vb.07}
K_A \ := \ \sqrt{k^2 + A \, } \, , 
\end{align}
so that
\begin{align} \label{eq-V-vb.08}
X(A) \ = \ \Tr\big[ K_A - K_0 \big] \, .
\end{align}
Since on $\fh$, the multiplication operator 
$\sigma \cdot \bfone \leq |k| \leq \Lambda \cdot \bfone$ is bounded
and bounded invertible, we observe that
\begin{align} \label{eq-V-vb.09}
\sigma \cdot \bfone 
\ \leq \ 
K_0 \ \leq \ K_A 
\ \leq \ 
\big( \Lambda + \|A\|_{\cB[\fh]} \big) \cdot \bfone \, .
\end{align}
%
\begin{lemma} \label{lem-V-vb.02} 
Let $A = A^* \geq 0$ be a bounded self-adjoint operator on $\fh$ 
such that $(k^2 + A)^{\frac{1}{2}}-|k|$ is trace class. Then
\begin{align} \label{eq-V-vb.10}
\Tr\big[ (k^2 + A)^{\frac{1}{2}} - |k| \big]   
\ = \ 
\Tr\big[ A^{\frac{1}{2}} \, \big\{ (k^2 + A)^{\frac{1}{2}} + |k| \big\}^{-1} 
\, A^{\frac{1}{2}} \big] \, .
\end{align}
\begin{proof}
Using \eqref{eq-V-vb.07}-\eqref{eq-V-vb.09}, we have that 
\begin{align} \label{eq-V-vb.11}
\Tr\big[ K_A & - K_0 \big] 
\nonumber \\[1ex]  
\ = \ &
\frac{1}{2} \, \Tr\Big[ \big\{ K_A - K_0 \big\} \, 
\big\{ K_A + K_0 \big\} \, \big\{ K_A + K_0 \big\}^{-1} \Big]
\nonumber \\ &
\: + \: 
\frac{1}{2} \, \Tr\Big[ \big\{ K_A - K_0 \big\} \, 
\big\{ K_A + K_0 \big\}^{-1} \, \big\{ K_A + K_0 \big\} \Big]
\nonumber \\[1ex]  
\ = \ &
\Tr\Big[ \big\{ K_A^2 - K_0^2 \big\} \, 
\big\{ K_A + K_0 \big\}^{-1} \Big] 
\nonumber \\[1ex]  
\ = \ &
\Tr\Big[ A^{\frac{1}{2}} \, \big\{ K_A + K_0 \big\}^{-1} 
\, A^{\frac{1}{2}} \Big] \, ,
\end{align}
where the finiteness of the left side of \eqref{eq-V-vb.11} implies
finiteness of all following lines.
\end{proof} 
\end{lemma}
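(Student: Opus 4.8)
The plan is to read \eqref{eq-V-vb.10} as a rewriting of $X(A)=\Tr[K_A-K_0]$ in the manifestly $A^{1/2}$-sandwiched form, and to obtain it from the trivial identity $K_A^2-K_0^2=A$ (with $K_A:=(k^2+A)^{1/2}$, $K_0:=|k|$ as in \eqref{eq-V-vb.07}) together with cyclicity of the trace; the only care needed is to keep the rearrangements under the trace legitimate. Before computing I would record the standing facts used throughout: $K_A-K_0$ is trace class by hypothesis, and since $\sigma\cdot\bfone\le|k|\le\Lambda\cdot\bfone$ on $\fh$, the operator $K_A+K_0$ is bounded with bounded inverse (indeed $K_A+K_0\ge 2\sigma\cdot\bfone$), cf.\ \eqref{eq-V-vb.09}. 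Hence any product of bounded operators that contains the trace-class factor $K_A-K_0$ (or the operator $A$, which will be produced) is itself trace class, and its trace is invariant under cyclic permutation of factors.

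The computation has three short steps. First, insert the identity in the two factorizations $(K_A+K_0)(K_A+K_0)^{-1}=\bfone=(K_A+K_0)^{-1}(K_A+K_0)$ and average:
\begin{align*}
\Tr\big[K_A-K_0\big]
& \ = \ \tfrac12\,\Tr\big[(K_A-K_0)(K_A+K_0)(K_A+K_0)^{-1}\big]
\\
& \quad + \ \tfrac12\,\Tr\big[(K_A-K_0)(K_A+K_0)^{-1}(K_A+K_0)\big].
\end{align*}
Second, apply cyclicity to the second trace to move its rightmost factor $K_A+K_0$ to the front; the two traces then differ only in the order of the factors $K_A-K_0$ and $K_A+K_0$, so they combine into the anticommutator, which collapses because the cross terms $\pm(K_AK_0-K_0K_A)$ cancel:
\begin{align*}
\Tr\big[K_A-K_0\big]
& \ = \ \tfrac12\,\Tr\Big[\big\{(K_A-K_0)(K_A+K_0)+(K_A+K_0)(K_A-K_0)\big\}\,(K_A+K_0)^{-1}\Big]
\\
& \ = \ \Tr\big[A\,(K_A+K_0)^{-1}\big].
\end{align*}
Third, write $A=A^{1/2}A^{1/2}$ and move one factor past $(K_A+K_0)^{-1}$ by a final cyclic rearrangement, which yields $\Tr[K_A-K_0]=\Tr[A^{1/2}(K_A+K_0)^{-1}A^{1/2}]$, i.e.\ \eqref{eq-V-vb.10}.

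I do not expect a genuine obstacle. The only point demanding attention is the justification of the three cyclic rearrangements, and each is covered by the standing facts above ($K_A-K_0$, hence also $A=K_A^2-K_0^2$, is trace class, while the remaining factors $K_A\pm K_0$ and $(K_A+K_0)^{-1}$ are bounded). The substantive idea is merely the symmetrization in the second step, which turns the non-commuting product $(K_A-K_0)(K_A+K_0)$ into the difference of squares $K_A^2-K_0^2=A$ and thereby erases any dependence on the order of $K_A$ and $K_0$.
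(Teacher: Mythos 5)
Your proposal is correct and follows essentially the same route as the paper's proof: the same symmetrized insertion of $(K_A+K_0)(K_A+K_0)^{-1}$ in both orders, the same collapse of the anticommutator to $K_A^2-K_0^2=A$, and the same final cyclic rearrangement to the $A^{1/2}$-sandwiched form. Your explicit justification of the cyclic permutations (trace-class factor times bounded factors) merely spells out what the paper leaves implicit.
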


\begin{lemma} \label{lem-V-vb.03} 
Let $A = A^*, B = B^* \geq 0$ be two bounded self-adjoint operators 
on $\fh$ such that $K_A-K_0$ and $K_B-K_0$ are trace class. Then
\begin{align} \label{eq-V-vb.12}
(\mathrm{i}) \qquad & 
X(A) \ \leq \ X(A+B) \ \leq \ X(A) + X(B) \, , 
\\[1ex] \label{eq-V-vb.13}
(\mathrm{ii}) \qquad & 
A \ \mapsto \ X(A) \ \ \text{is concave} \, .
\end{align}
\begin{proof} Since $A, B \geq 0$ we may use the operator monotonicity
  of the square root and the inverse to infer that $A \mapsto K_A$
and $A \mapsto K_A^{-1}$ are monotone and thus
\begin{align} \label{eq-V-vb.14}
X(A) \ = \ 
\Tr\big[ K_A - K_0 \big]
\ \leq \
\Tr\big[ K_{A+B} - K_0 \big]
\ = \ X(A+B) \, ,
\end{align}
and
\begin{align} \label{eq-V-vb.15}
X(A+B) 
\ = \ &
\Tr\Big[ (A+B) \, \big\{ K_{A+B} + K_0 \big\}^{-1} \Big]
\nonumber \\[1ex] 
\ \leq \ &
\Tr\Big[ A \, \big\{ K_A + K_0 \big\}^{-1} \Big]
\: + \:
\Tr\Big[ B \, \big\{ K_B + K_0 \big\}^{-1} \Big]
\nonumber \\[1ex] 
\ = \ &
X(A) + X(B) \, ,
\end{align}
additionally using Lemma~\ref{lem-V-vb.02}. This yields
\eqref{eq-V-vb.12}.
\\[1ex]
As for \eqref{eq-V-vb.13}, we note that
\begin{align} \label{eq-V-vb.16}
X(\tfrac{1}{2} A + \tfrac{1}{2} B) 
- \tfrac{1}{2} X(A) - \tfrac{1}{2} X(B) 
\ = \ 
\Tr[P-Q] \, ,
\end{align}
where
\begin{align} \label{eq-V-vb.17}
P \ := \ K_{\tfrac{A+B}{2}} 
\quad \text{and} \quad
Q \ := \  \tfrac{1}{2} \, K_A \: + \: \tfrac{1}{2} \, K_B \, .
\end{align}
Now,
\begin{align} \label{eq-V-vb.18}
& \Tr[P-Q] 
\nonumber \\[1ex]
& \ = \  
\frac{1}{2} \, \Tr\big[ (P-Q) \, (P+Q) \, (P+Q)^{-1} \big] \, + \,
\frac{1}{2} \, \Tr\big[ (P-Q) \, (P+Q)^{-1} \, (P+Q) \big] 
\nonumber \\[1ex]
& \ = \
\Tr\big[ (P+Q)^{-1} \, (P^2-Q^2) \big] 
\nonumber \\[1ex] 
& \ = \ 
\frac{1}{2} \, \Tr\Big[ (P+Q)^{-1} \, 
\big\{ k^2 + \tfrac{1}{2} A + \tfrac{1}{2} B
- \tfrac{1}{2} K_A \, K_B - \tfrac{1}{2} K_B \, K_A \big\} \Big] 
\nonumber \\[1ex] 
& \ = \
\frac{1}{4} \, \Tr\Big[ (P+Q)^{-1} \, 
\big\{ K_A - K_B \big\}^2 \Big]
\ \geq \ 0 \, , 
\end{align}
which yields midpoint concavity of $A \mapsto X(A)$, i.e.,
\begin{align} \label{eq-V-vb.19}
X\big( \tfrac{1}{2} A + \tfrac{1}{2} B \big) 
\ \geq \ 
\tfrac{1}{2} X(A) + \tfrac{1}{2} X(B) \, . 
\end{align}
Finally, midpoint concavity and continuity of $A \mapsto X(A)$ implies
general concavity and hence \eqref{eq-V-vb.13}.
\end{proof}
\end{lemma}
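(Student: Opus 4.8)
The plan is to handle the two parts separately, in both cases relying on the representation furnished by Lemma~\ref{lem-V-vb.02}, namely $X(A)=\Tr\big[A^{1/2}(K_A+K_0)^{-1}A^{1/2}\big]$, together with the operator monotonicity of $t\mapsto\sqrt{t}$ and of $t\mapsto t^{-1}$ on positive invertible operators. A preliminary observation justifies all the cyclic rearrangements of traces below: on $\fh$ the multiplier $|k|$ is bounded and, for $\sigma>0$, bounded invertible, so $K_A$, $K_B$, $K_{A+B}$ and their sums are bounded with bounded inverses; moreover the hypotheses that $K_A-K_0$ and $K_B-K_0$ be trace class force $A$, $B$, $A+B$, and $K_A-K_B=(K_A-K_0)-(K_B-K_0)$, hence also $(K_A-K_B)^2$, to be trace class (using $X(A)\geq(2\Lambda+\|A\|)^{-1}\Tr[A]$, which follows from Lemma~\ref{lem-V-vb.02} and \eqref{eq-V-vb.09}). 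This is the same bookkeeping already used in the proof of Lemma~\ref{lem-V-vb.02}.

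For (i): since $B\geq0$ we have $k^2+A\leq k^2+A+B$, hence $K_A\leq K_{A+B}$ and $0\leq K_{A+B}-K_A\leq K_{A+B}-K_0$; taking traces yields $X(A)\leq X(A+B)$. For the subadditivity bound I would expand, via Lemma~\ref{lem-V-vb.02}, $X(A+B)=\Tr\big[A(K_{A+B}+K_0)^{-1}\big]+\Tr\big[B(K_{A+B}+K_0)^{-1}\big]$, observe that $K_{A+B}\geq K_A$ and $K_{A+B}\geq K_B$ give $(K_{A+B}+K_0)^{-1}\leq(K_A+K_0)^{-1}$ and $(K_{A+B}+K_0)^{-1}\leq(K_B+K_0)^{-1}$, and then conjugate the first estimate by $A^{1/2}$ and the second by $B^{1/2}$; cyclicity of the trace converts these into $\Tr\big[A(K_{A+B}+K_0)^{-1}\big]\leq X(A)$ and $\Tr\big[B(K_{A+B}+K_0)^{-1}\big]\leq X(B)$, so $X(A+B)\leq X(A)+X(B)$.

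For (ii): I would first prove midpoint concavity and then upgrade it. Since $t\mapsto X(tA+(1-t)B)$ is bounded on $[0,1]$ — below by $0$ and above by $X(A)+X(B)$, using (i) together with $tA\leq A$ and $(1-t)B\leq B$ — a midpoint-concave, bounded function on an interval is concave, so it suffices to show $X\big(\tfrac{1}{2}A+\tfrac{1}{2}B\big)\geq\tfrac{1}{2}X(A)+\tfrac{1}{2}X(B)$. Set $P:=K_{(A+B)/2}$ and $Q:=\tfrac{1}{2}K_A+\tfrac{1}{2}K_B$; the $K_0$ contributions cancel, leaving $X\big(\tfrac{1}{2}A+\tfrac{1}{2}B\big)-\tfrac{1}{2}X(A)-\tfrac{1}{2}X(B)=\Tr[P-Q]$. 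Writing $P-Q=\tfrac{1}{2}(P-Q)(P+Q)(P+Q)^{-1}+\tfrac{1}{2}(P-Q)(P+Q)^{-1}(P+Q)$ and symmetrizing under the trace — exactly as in the proof of Lemma~\ref{lem-V-vb.02} — rewrites $\Tr[P-Q]=\Tr\big[(P+Q)^{-1}(P^2-Q^2)\big]$. Since $P^2=k^2+\tfrac{1}{2}(A+B)$ and $Q^2=\tfrac{1}{4}(K_A+K_B)^2$, the elementary identity $\tfrac{1}{4}(K_A-K_B)^2+\tfrac{1}{4}(K_A+K_B)^2=\tfrac{1}{2}(K_A^2+K_B^2)=k^2+\tfrac{1}{2}(A+B)$ yields $P^2-Q^2=\tfrac{1}{4}(K_A-K_B)^2$. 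Hence $\Tr[P-Q]=\tfrac{1}{4}\Tr\big[(P+Q)^{-1/2}(K_A-K_B)^2(P+Q)^{-1/2}\big]\geq0$, because $K_A-K_B$ is self-adjoint and $(P+Q)^{-1}$ is positive.

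The only genuinely delicate point is the trace-class bookkeeping that legitimates the cyclic manipulations and, in particular, the passage from $\Tr\big[(P-Q)(P+Q)(P+Q)^{-1}\big]$ to $\Tr\big[(P+Q)^{-1}(P^2-Q^2)\big]$; but as noted above this is automatic from the hypotheses, since $|k|$ is a bounded, bounded-invertible multiplier on $\fh$ and the hypotheses already force $A$, $B$ and $K_A-K_B$ to be trace class. Everything else reduces to operator monotonicity and the single algebraic identity above, so I expect the final write-up to be brief and patterned directly on the proof of Lemma~\ref{lem-V-vb.02}.
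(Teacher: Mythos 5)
Your proposal is correct and follows essentially the same route as the paper: both halves of (i) via the representation of Lemma~\ref{lem-V-vb.02} and operator monotonicity of $\sqrt{\cdot}$ and the inverse, and (ii) via the same $P,Q$ decomposition, the same symmetrization under the trace, and the identity $P^2-Q^2=\tfrac{1}{4}(K_A-K_B)^2$. The only (cosmetic) difference is that you upgrade midpoint concavity to concavity using boundedness (Bernstein--Doetsch) rather than the continuity invoked in the paper, and you make the trace-class bookkeeping explicit, which the paper leaves implicit.
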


\newpage
\section{Upper Bound on $X(2\Theta_{\phi,\alpha})$} 
\label{sec-VI}
%
\setcounter{equation}{0}
We proceed to deriving an upper bound on $E_\LL^{(L)}(\alpha,
\Lambda)$ defined in \eqref{eq-V-vb.04} in terms of
$F^{(L)}(\alpha,\Lambda)$ given in \eqref{eq-V-vb.04,2}. Our
derivation uses two essential tools: 
\begin{compactitem}
\item[(i)] The functional calculus for self-adjoint operators
  described in \cite{AmreinBoutetdeMonvelGeorgescu1996}, which yields
  a good control on projections onto different momentum shells
  emerging from the decomposition $\chi_{\sigma, (1+\eps)\Lambda} =
  \chi_{\sigma, \Lambda} + \chi_{\Lambda, (1+\eps)\Lambda}$, where
  $\eps >0$ and we recall that $\chi_{\sigma, \Lambda} = \bfone[\sigma
  \leq |k| < \Lambda]$. We show that the contribution of
  $\chi_{\Lambda, (1+\eps)\Lambda}$ is negligible, provided $\eps >0$
  is chosen sufficiently small.

\item[(ii)] Inequalities for Schatten-$p$-norms of operators of the
  type ``$f(x)g(-i\nabla)$'', for $1 \leq p \leq 2$, in order to
  estimate the error terms emerging from (i). More specifically,
  Birman and Solomyak have shown \cite{BirmanSolomyak1975,Simon2005}
  that, for any $1 \leq p \leq 2$, there exists a universal constant
  $C_{\mathrm{BS}}(p) < \infty$ such that
\begin{align} \label{eq-VI-vb.00,1}
\| f(x) \: g(i\nabla_x) \|_{\cL^p[\fh_0]}  
\ \leq \
C_{\mathrm{BS}}(p) \, \|f\|_{2;p} \, \|g\|_{2;p} \, , 
\end{align}
provided $\|f\|_{2;p}, \|g\|_{2;p} < \infty$, where
\begin{align} \label{eq-VI-vb.00,2}
\|f\|_{2;p} \ := \ 
\bigg( \sum_{\beta \in \ZZ^3} \| f \cdot \bfone_{Q+\beta} \|_2^p \bigg)^{1/p}
\end{align}
and $Q = [-\frac{1}{2} \, , \, \frac{1}{2}]^3 \subseteq \RR^3$ is
the unit cube centered at the origin.
\end{compactitem}
\begin{theorem} \label{thm-VI-vb.01} 
There exists a universal constant $C < \infty$ such that, 
for all $\alpha, L >0$, all $0 \leq \sigma \leq 1 \leq \Lambda < \infty$, 
all $0 < \eps \leq 1$ and all $\phi = |\phi| \in Y_L$, the estimate 
\begin{align} \label{eq-VI-vb.01}
\frac{1}{2}& X(2\Theta_{\phi,\alpha}) 
\ \leq \ 
\\ \nonumber 
& \sqrt{\frac{4 \alpha}{9 \pi}\,} \, \Big[
\big(\Lambda^3 - \sigma^3 \big) + 54 \eps \, \Lambda^3 +
5 \sigma^{3/2} \, \Lambda^{3/2} \Big] \, \|\phi\|_1
\: + \:
\frac{C \, \alpha^{1/2} (L \, \Lambda +1)^3}{\eps^2 \, L^{3/2} \, \Lambda} \, 
\|\nabla \phi\|_2 \, .
\end{align}
holds true.
\begin{proof} We first apply Lemma~\ref{lem-V-vb.02} and the 
operator monotonicity of $A \mapsto \sqrt{A}$ and $A \mapsto A^{-1}$ 
and observe that 
\begin{align} \label{eq-VI-vb.02}
X(A) \ = \ 
\Tr\Big[ \sqrt{A} \, \big( \sqrt{k^2+A} + |k| \big)^{-1} \, \sqrt{A} \Big] 
\ \leq \ 
\Tr[ \sqrt{A} ] \, .
\end{align}
Secondly, we note that $(\hphi *)^* (\hphi *) = \cF \phi^2 \cF^*$,
where $\cF$ is (componentwise) Fourier transformation. As is
customary, we denote by $\phi(x) := \cF \phi \cF^* \geq 0$ the
corresponding nonnegative multiplication operator, indicating the
change from momentum to position space by explicitly keeping the
argument ``$x$'' for the spatial variable. Using \eqref{eq-VI-vb.02},
the decomposition $\bfone = \chi_{0,\sigma} + \chi_{\sigma,\Lambda} +
\chi_{\Lambda,(1+\varepsilon)\Lambda} +
\bchi_{(1+\varepsilon)\Lambda}$, where $\bchi_r := \bfone - \chi_r$,
and the triangle inequality for the trace norm, we obtain
\begin{align} \label{eq-VI-vb.04}
X( 2\Theta_{\phi,\alpha} )
\ = \ &
X( 2 \Phi_{\phi,\alpha}^* \Phi_{\phi,\alpha})
\ \leq \
\Tr\Big[ \sqrt{2 \Phi_{\phi,\alpha}^* \Phi_{\phi,\alpha} \,} \; \Big]
\ = \ 
\sqrt{2} \, \big\| \Phi_{\phi,\alpha} \|_{\cL^1[\fh]}
\nonumber \\[1ex]
\ = \ &
\frac{(2\alpha)^{1/2}}{(2\pi)^{3/2}} \, 
\big\| \phi(x) \, \chi_{\sigma,\Lambda} \, P_C \big\|_{\cL^1[\fh]}
\\[1ex] \nonumber 
\ \leq \ &
\frac{(2\alpha)^{1/2}}{(2\pi)^{3/2}} \, \Big(
\big\| \chi_{\sigma,\Lambda} \; \phi(x) 
\; \chi_{\sigma,\Lambda} \; P_C \big\|_{\cL^1[\fh]} 
+ 3X_1 + 3X_2 + 3X_3 \Big) \, ,
\end{align}
where 
\begin{align} \label{eq-VI-vb.04,1}
\big\| P_C \; & \chi_{\sigma,\Lambda} \; \phi(x) \; \chi_{\sigma,\Lambda} 
\; P_C \big\|_{\cL^1[\fh]} 
\ = \ 
\big\| \sqrt{\phi(x)} \; \chi_{\sigma,\Lambda} \; P_C \big\|_{\cL^2[\fh]}^2 
\\[1ex] \nonumber 
\ = \ &
2 \, \Big( \Vol[B(0,\Lambda)] - \Vol[B(0,\sigma)] \Big) \,
\bigg( \int \phi(x) \, d^3x \bigg) 
\ = \ 
\frac{8\pi}{3} \big( \Lambda^3 - \sigma^3 \big) \, \|\phi\|_1 
\end{align}
is the main term. Note that the factor $2$ takes into account that
$P_C$ is an orthogonal projection of rank $2$ on $\CC \otimes \RR^3$.
Moreover, we denote by $\Vol[M] := \int \bfone_M(k) \, d^3$ the
three-dimensional Lebesgue measure of a measurable set $M \subseteq
\RR^3$ in \eqref{eq-VI-vb.04,1} and henceforth. Furthermore,
\begin{align} \label{eq-VI-vb.05}
X_1 \ := \ &
\big\| \chi_{0,\sigma} \; \phi(x) \; \chi_{\sigma,\Lambda} \big\|_{\cL^1[\fh_0]} \, ,
\\[1ex] \label{eq-VI-vb.06}
X_2 \ := \ &
\big\| \chi_{\Lambda,(1+\varepsilon)\Lambda} \; \phi(x) 
\; \chi_{\sigma,\Lambda} \big\|_{\cL^1[\fh_0]} \, ,
\\[1ex] \label{eq-VI-vb.07}
X_3 \ := \ &
\big\| \bchi_{(1+\varepsilon)\Lambda} \; \phi(x) 
\; \chi_{\sigma,\Lambda} \big\|_{\cL^1[\fh_0]} 
\end{align}
are error terms we proceed to estimate next. Before we remark that the
Hilbert space in \eqref{eq-VI-vb.05}-\eqref{eq-VI-vb.07} is the space
$\fh_0 := L^2(\RR^3)$ of complex-valued (scalar) square-integrable
functions, as opposed to the one-photon Hilbert space $\fh$ of \break
square-integrable divergence-free vector fields used before. The
factors $3$ on the right side of \eqref{eq-VI-vb.04} account for the
three components of the latter.

Using the trace inequality
$\|AB\|_{\cL^1[\fh_0]} \leq \|A\|_{\cL^2[\fh_0]} \|B\|_{\cL^2[\fh_0]}$ and
$(1+\eps)^3 - 1 \leq 3 \eps (1+\eps)^2 \leq 12 \eps$, we obtain  
\begin{align} \label{eq-VI-vb.08}
X_1 \ \leq \ &
\big\| \chi_{0,\sigma} \; \sqrt{\phi(x)} \big\|_{\cL^2[\fh_0]} \, 
\big\| \sqrt{\phi(x)} \; \chi_{0,\Lambda} \big\|_{\cL^2[\fh_0]} \, 
\ \leq \
\frac{12\pi}{3} \, \Lambda^{3/2} \, \sigma^{3/2} \, \|\phi\|_1 \, ,
\\[1ex] \label{eq-VI-vb.09} 
X_2 \ \leq \ &
\big\| \chi_{\Lambda,(1+\varepsilon)\Lambda} \; \sqrt{\phi(x)} \big\|_{\cL^2[\fh_0]} \, 
\big\| \sqrt{\phi(x)} \; \chi_{0,\Lambda} \big\|_{\cL^2[\fh_0]} \, 
\ \leq \
\frac{144\pi}{3} \, \eps \, \Lambda^3 \, \|\phi\|_1 \, ,
\end{align}
similarly to \eqref{eq-VI-vb.04,1}.

To estimate $X_3$ we pick a smooth function 
$\tg \in C^\infty(\RR; [0,1])$ such that $\tg \equiv 1$ on $\RR_0^-$,
$\tg'\leq 0$, and $\tg \equiv 0 $ on $[1, \infty )$. We then
define a smooth function of compact support by 
\begin{align} \label{eq-VI-vb.10}
g_\eps(\lambda) \ := \ 
\tg\big( \eps^{-1}(\lambda -1) \big) \; 
\tg\big( \eps^{-1}(-\lambda -1) \big) \, .
\end{align}
Note that, for $\eps < 1$ and suitable constants $C_1, C_2, \ldots <\infty$,
we have
\begin{align} \label{eq-VI-vb.11}
\supp(g_\eps) \: \subseteq : (-2,2) \, , & \qquad 
\|g_\eps\|_\infty \: = \: 1 
\\[1ex] \label{eq-VI-vb.12}
\supp\big(g_\eps^{(k)} \big) \: \subseteq \: (-1-\eps, 1) \cup (1, 1+ & \eps)  
\; , \quad 
\big\| g_\eps^{(k)} \big\|_\infty \: \leq \: C_k \, \eps^{-k} \, ,
\end{align}
for all $k \in \NN$. We use the functional calculus developed by
Amrein, Boutet de Monvel, and Georgescu in
\cite[Thm.~6.1.4]{AmreinBoutetdeMonvelGeorgescu1996}. For any
self-adjoint operator $A$ and any $n \in \NN$, this functional
calculus yields the identity
\begin{align} \label{eq-VI-vb.13}
g_\eps(A) \ = \ &
\sum_{k=0}^{n-1} \int_{-\infty}^\infty 
\frac{g_\eps^{(k)}(\lambda) \, d\lambda}{\pi \, k!} 
\: \rIm\big\{i^k ( A-\lambda - i)^{-1} \big\} 
\\[1ex] \nonumber 
& \ \ + 
\int_0^1 \mu^{n-1} \, d\mu \int_{-\infty}^\infty 
\frac{g_\eps^{(n)}(\lambda) \, d\lambda}{\pi \, (n-1)!} 
\ \rIm\big\{i^n ( A-\lambda - i \mu)^{-1} \big\} \, .
\end{align}
We choose $n=3$ and $A:= \Lambda^{-2} k^2 = \Lambda^{-2}
\cF \circ (-\Delta) \circ \cF^* =: \Lambda^{-2} (-\Delta_x)$ and obtain
\begin{align} \label{eq-VI-vb.14}
g_\eps(A) \ = \ &
\int_{-\infty}^\infty 
\big[ g_\eps(\lambda) - \tfrac{1}{2}g_\eps''(\lambda) \big] 
\, \frac{d\lambda}{\pi} \: \rIm\big\{( A-\lambda - i)^{-1} \big\} 
\nonumber \\[1ex] 
& \ \ + 
\int_{-\infty}^\infty g_\eps'(\lambda) \, \frac{d\lambda}{\pi} \: 
\rRe\big\{( A-\lambda - i)^{-1} \big\} 
\\[1ex] \nonumber 
& \ \ -
\int_0^1 \mu^2 \, d\mu \int_{-\infty}^\infty 
g_\eps'''(\lambda) \, \frac{d\lambda}{2 \pi} 
\ \rRe\big\{ ( A-\lambda - i \mu)^{-1} \big\} \, .
\end{align}
We observe that due to the support properties of $g_\eps$ and its
derivatives and the definition of $A = \Lambda^{-2} k^2$, we have 
\begin{align} \label{eq-VI-vb.15}
\chi_{\sigma, \Lambda} 
\ = \ 
g_\eps(A) \, \chi_{\sigma, \Lambda}
\ = \ 
\chi_{\sigma, \Lambda} \, g_\eps(A) 
\, , \quad
g_\eps(A) \, \bchi_{(1+\eps)\Lambda}
\ = \ 
\bchi_{(1+\eps)\Lambda} \, g_\eps(A) 
\ = \ 0 \, ,
\end{align}
which implies that
\begin{align} \label{eq-VI-vb.16}
X_3 \ = \ &
\big\| \bchi_{(1+\varepsilon)\Lambda} \: \phi(x) 
\: \chi_{\sigma,\Lambda} \big\|_{\cL^1[\fh_0]} 
\ = \ 
\big\| \bchi_{(1+\varepsilon)\Lambda} \: \phi(x) 
\: g_\eps(A) \: \chi_{\sigma,\Lambda} \big\|_{\cL^1[\fh_0]} 
\nonumber \\[1ex]
\ = \ &
\big\| \bchi_{(1+\varepsilon)\Lambda} \: \big[ g_\eps(A) \, , \, \phi(x) \big] 
\: \chi_{\sigma,\Lambda} \big\|_{\cL^1[\fh_0]} 
\\[1ex] \nonumber 
\ \leq \ &
\int_{-\infty}^\infty \frac{d\lambda}{\pi} \: 
\big( |g_\eps(\lambda)| + |g_\eps'(\lambda)| + |g_\eps''(\lambda)| \big) \ 
\Big\| \big[ R(\lambda+i) \, , \, \phi(x) \big] 
\: \chi_{\sigma,\Lambda} \Big\|_{\cL^1[\fh_0]}  
\\[1ex] \nonumber 
& \ \ 
+ \int_0^1 \mu^2 \, d\mu \: \int_{-\infty}^\infty \frac{d\lambda}{2 \pi} 
\: |g_\eps'''(\lambda)| \ \Big\| \big[ R(\lambda+i\mu) \, , \, \phi(x) \big] 
\: \chi_{\sigma,\Lambda} \Big\|_{\cL^1[\fh_0]}  \, ,
\end{align}
where $R(z):= \big( - \Lambda^{-2} \Delta_x - z \big)^{-1}$, with
$z \in \CC \setminus \RR$. Now, note that
\begin{align} \label{eq-VI-vb.17}
\big[ R(z) \, , \, \phi(x) \big]
\ = \ &
\Lambda^{-2} \, R(z) \, [ \Delta_x , \phi(x)] \, R(z) 
\\[1ex] \nonumber
\ = \ &
\Lambda^{-2} \, R(z) \, 
\big( \nabla_x \cdot \nabla\phi(x) + \nabla\phi(x) \cdot \nabla_x \big) 
\, R(z) \, ,
\end{align}
and hence
\begin{align} \label{eq-VI-vb.18}
\Big\| \big[ R(\lambda+i\mu) & \, , \, \phi(x) \big] 
\: \chi_{\sigma,\Lambda} \Big\|_{\cL^1[\fh_0]} 
\nonumber \\[1ex] 
\ \leq \ &
\frac{2}{\Lambda^2} \, 
\| \nabla_x \, R(\lambda+i\mu) \|_{\cB[\fh_0]} \, 
\| R(\lambda+i\mu) \|_{\cB[\fh_0]} \,
\| \nabla\phi(x) \: \chi_{\sigma,\Lambda} \|_{\cL^1[\fh_0]}  
\nonumber \\[1ex] 
\ \leq \ &
\frac{4}{\mu^2 \, \Lambda} \, 
\| \nabla\phi(x) \: \chi_{\sigma,\Lambda} \|_{\cL^1[\fh_0]}  \, ,
\end{align}
using that, for all $\lambda \in [-2,2]$ and $\mu \in (0,1)$, 
\begin{align} \label{eq-VI-vb.19}
\| R(\lambda+i\mu) \|_{\cB[\fh_0]} 
\ = \ &
\sup_{r>0}\Big\{ \big| (r/\Lambda)^2 - \lambda - i\mu \big|^{-1} \Big\}
\ = \ 
\frac{1}{\mu} \, ,
\\[1ex] \label{eq-VI-vb.20}
\| \nabla_x \, R(\lambda+i\mu) \|_{\cB[\fh_0]} 
\ = \ &
\Lambda \, \sup_{r>0}\Big\{ (r/\Lambda) 
\big| (r/\Lambda)^2 - \lambda - i\mu \big|^{-1} \Big\}
\ \leq \ 
\frac{2 \Lambda}{\mu} \, .
\end{align}
Inserting \eqref{eq-VI-vb.18} into \eqref{eq-VI-vb.16} and
additionally taking \eqref{eq-VI-vb.11}-\eqref{eq-VI-vb.12},
as well as $\eps \in (0,1)$ into account, we arrive at
\begin{align} \label{eq-VI-vb.21}
X_3 \ \leq \ 
\frac{C}{\eps^2 \, \Lambda} \:
\| \nabla\phi(x) \: \chi_{\sigma,\Lambda} \|_{\cL^1[\fh_0]}  \, ,
\end{align}
for some universal constant $C <\infty$. To estimate the trace norm on
the right side of \eqref{eq-VI-vb.21} we first conjugate the operators
by a suitable unitary dilatation, which implements the change of
length scale $(x, k) \mapsto (Lx, k/L)$ and does not change the norm,
and then apply Inequality~\eqref{eq-VI-vb.00,1} with $p =1$. These
steps lead us to 
\begin{align} \label{eq-VI-vb.23}
\| \nabla\phi(x) \: & \chi_{\sigma,\Lambda}(k) \|_{\cL^1[\fh_0]}  
\ = \ 
\| \nabla\phi(Lx) \: \chi_{\sigma,\Lambda}(k/L) \|_{\cL^1[\fh_0]}  
\\[1ex] \nonumber 
\ = \ &
\| \nabla\phi(Lx) \: \chi_{L\sigma,L\Lambda}(k) \|_{\cL^1[\fh_0]}  
\ \leq \ 
C_{\mathrm{BS}}(1) \, \| \nabla\phi \|_{2;1} \: 
\| \chi_{L\sigma,L\Lambda} \|_{2;1}  
\\[1ex] \nonumber 
\ = \ &
C_{\mathrm{BS}}(1) \,
\| \nabla\phi(Lx) \|_2 \: \sum_{\gamma \in \ZZ^3} 
\sqrt{ \Vol\big[ B(0,L\Lambda) \cap (Q+\gamma) \big] \, } \, ,
\end{align}
where we use that $\nabla\phi$ is supported in $B(0,L)$, hence
$x \mapsto \nabla\phi(Lx)$ is supported in $B(0,1) \subseteq Q$,
and in the sum 
$\sum_{\beta \in \ZZ^3} \|\nabla\phi(Lx) \, \bfone_{Q+\beta}\|_2$ only
the term corresponding to $\beta = 0$ contributes.
Now, $\Vol[B(0,L\Lambda) \cap (Q+\gamma)] \leq \Vol[(Q+\gamma)] = 1$
and $\Vol[B(0,L\Lambda) \cap (Q+\gamma)] \leq \Vol[(Q+\gamma)] = 0$
unless $|\gamma| \leq L\Lambda + \sqrt{3}$ which implies that
\begin{align} \label{eq-VI-vb.24}
\sum_{\gamma \in \ZZ^3} \sqrt{ \Vol\big[ B(0,L\Lambda) \cap (Q+\gamma) \big] \, }
\ \leq \ &
\sum_{\gamma \in \ZZ^3} \bfone_{B(0,L\Lambda + \sqrt{3})}(\gamma)
\\[1ex] \nonumber
\ \leq \ 
\Vol\big[ B\big(0,L\Lambda + \tfrac{3}{2}\sqrt{3} \big) & \big] 
\ \leq \
\frac{4\pi}{3} \, (L \, \Lambda + 3)^3 \, ,
\end{align}
using that $\tfrac{3}{2} \sqrt{3} \leq 3$. Furthermore, 
$\| \nabla\phi(Lx) \|_2 = L^{-3/2} \| \nabla\phi(x) \|_2$, and thus
\begin{align} \label{eq-VI-vb.25}
\| \nabla\phi(x) \: & \chi_{\sigma,\Lambda}(k) \|_{\cL^1[\fh_0]}  
\ \leq \ 
\frac{4\pi \, C_{\mathrm{BS}}}{3 \, L^{3/2}} \, 
(L \, \Lambda + 3)^3 \: \| \nabla\phi \|_2 \, .
\end{align}
Inserting this into \eqref{eq-VI-vb.21} we finally obtain
\begin{align} \label{eq-VI-vb.26}
X_3 \ \leq \ 
\frac{C \, (L \, \Lambda + 1)^3}{\eps^2 \, L^{3/2} \, \Lambda} 
\: \| \nabla\phi \|_2 \, ,
\end{align}
for a suitable constant $C < \infty$.
Estimate~\eqref{eq-VI-vb.01} now follows from inserting
\eqref{eq-VI-vb.04,1}, \eqref{eq-VI-vb.08}, \eqref{eq-VI-vb.09}, and
\eqref{eq-VI-vb.26} into \eqref{eq-VI-vb.04,1}.
\end{proof}
\end{theorem}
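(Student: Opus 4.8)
The plan is to first collapse the trace $\tfrac12 X(2\Theta_{\phi,\alpha})$ into a Schatten $\cL^1$-norm and then to dissect that norm by cutting the identity into momentum shells. By Lemma~\ref{lem-V-vb.02} one has $X(A)=\Tr\big[A^{1/2}\big(\sqrt{k^2+A}+|k|\big)^{-1}A^{1/2}\big]$, and since $\sqrt{k^2+A}\geq\sqrt{A}$ by operator monotonicity of the square root while $|k|\geq0$, this yields the crude bound $X(A)\leq\Tr[A^{1/2}]$. Applied to $A=2\Theta_{\phi,\alpha}=2\Phi_{\phi,\alpha}^{*}\Phi_{\phi,\alpha}$, it remains to estimate $\|\Phi_{\phi,\alpha}\|_{\cL^1[\fh]}$, i.e., up to the explicit constant and the factor accounting for the rank of $P_C$, the $\cL^1$-norm of $\phi(x)\,\chi_{\sigma,\Lambda}\,P_C$, where $\phi(x):=\cF\phi\cF^{*}\geq0$ is multiplication by $\phi$ in position space. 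I would then insert the resolution $\bfone=\chi_{0,\sigma}+\chi_{\sigma,\Lambda}+\chi_{\Lambda,(1+\eps)\Lambda}+\bchi_{(1+\eps)\Lambda}$ to the left of $\phi(x)$ and apply the triangle inequality in $\cL^1$. The contribution $\|\chi_{\sigma,\Lambda}\,\phi(x)\,\chi_{\sigma,\Lambda}\,P_C\|_{\cL^1[\fh]}$ is the genuine main term: writing $\phi(x)=\sqrt{\phi(x)}\cdot\sqrt{\phi(x)}$ it equals $\|\sqrt{\phi(x)}\,\chi_{\sigma,\Lambda}\,P_C\|_{\cL^2[\fh]}^{2}$, a trace evaluated explicitly because the kernel of $\chi_{\sigma,\Lambda}(k)$ is constant on the diagonal; this gives $\tfrac{8\pi}{3}\big(\Lambda^{3}-\sigma^{3}\big)\|\phi\|_1$ and, after multiplying by the prefactor, exactly the claimed leading term $\sqrt{\tfrac{4\alpha}{9\pi}}\,\big(\Lambda^{3}-\sigma^{3}\big)\|\phi\|_1$.

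The two ``soft'' error terms $\|\chi_{0,\sigma}\,\phi(x)\,\chi_{\sigma,\Lambda}\|_{\cL^1}$ and $\|\chi_{\Lambda,(1+\eps)\Lambda}\,\phi(x)\,\chi_{\sigma,\Lambda}\|_{\cL^1}$ I would treat in the same spirit, now via the H\"older trace inequality $\|AB\|_{\cL^1}\leq\|A\|_{\cL^2}\|B\|_{\cL^2}$, the same explicit shell-volume computation, and the elementary estimate $(1+\eps)^{3}-1\leq12\eps$ valid for $0<\eps\leq1$; they produce the $\sigma^{3/2}\Lambda^{3/2}$ and $\eps\Lambda^{3}$ contributions, and the numerical constants $5$ and $54$ in \eqref{eq-VI-vb.01} come out of tracking the various factors (the rank of $P_C$, the three Cartesian components, and so on).

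The only delicate term is $X_3:=\|\bchi_{(1+\eps)\Lambda}\,\phi(x)\,\chi_{\sigma,\Lambda}\|_{\cL^1[\fh_0]}$, where the two spectral projections sit at momentum distance $\sim\eps\Lambda$, so this operator is morally a commutator and should be small. I would make this precise by choosing a smooth cutoff $g_\eps$ with $g_\eps\equiv1$ on $[0,1]$, $\supp g_\eps\subseteq(-1-\eps,1+\eps)$ and $\|g_\eps^{(m)}\|_\infty\leq C_m\,\eps^{-m}$, so that $g_\eps(\Lambda^{-2}k^2)\,\chi_{\sigma,\Lambda}=\chi_{\sigma,\Lambda}$ while $\bchi_{(1+\eps)\Lambda}\,g_\eps(\Lambda^{-2}k^2)=0$, whence $X_3=\big\|\bchi_{(1+\eps)\Lambda}\,\big[g_\eps(\Lambda^{-2}k^2),\phi(x)\big]\,\chi_{\sigma,\Lambda}\big\|_{\cL^1}$. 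Representing $g_\eps(-\Lambda^{-2}\Delta_x)$ by the Amrein--Boutet de Monvel--Georgescu resolvent formula with three derivatives, using the commutator identity $\big[(-\Lambda^{-2}\Delta_x-z)^{-1},\phi(x)\big]=\Lambda^{-2}R(z)\big(\nabla_x\!\cdot\!\nabla\phi(x)+\nabla\phi(x)\!\cdot\!\nabla_x\big)R(z)$ together with $\|R(\lambda+i\mu)\|\leq\mu^{-1}$ and $\|\nabla_x R(\lambda+i\mu)\|\leq2\Lambda\mu^{-1}$ on $\lambda\in[-2,2]$, and integrating the $g_\eps$-derivatives (which cost a factor $\eps^{-2}$), one reduces $X_3$ to $C\,\eps^{-2}\Lambda^{-1}\,\|\nabla\phi(x)\,\chi_{\sigma,\Lambda}\|_{\cL^1[\fh_0]}$. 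This last operator has the form $f(x)\,g(i\nabla_x)$; after a unitary dilation $(x,k)\mapsto(Lx,k/L)$, which preserves $\cL^1$-norms, its factors become $\nabla\phi(L\,\cdot\,)$ — supported in the unit ball since $\phi\in Y_L$ — and $\chi_{L\sigma,L\Lambda}$, and the Birman--Solomyak inequality \eqref{eq-VI-vb.00,1} with $p=1$ bounds it by $C\,L^{-3/2}(L\Lambda+1)^{3}\,\|\nabla\phi\|_2$, the lattice count $\sum_{\gamma\in\ZZ^3}\sqrt{\Vol[B(0,L\Lambda)\cap(Q+\gamma)]}\leq C(L\Lambda+1)^{3}$ being elementary. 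Collecting the main term with the three error terms then gives \eqref{eq-VI-vb.01}. I expect the estimate for $X_3$ to be the main obstacle: it requires a representation of the smooth spectral cutoff that is simultaneously quantitative in $\|g_\eps^{(m)}\|_\infty\sim\eps^{-m}$ and compatible with trace norms, and then exactly the right rescaling so that the $f(x)g(i\nabla_x)$ bound does not leak spurious powers of $L$ or $\Lambda$.
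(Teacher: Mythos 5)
Your proposal is correct and follows essentially the same route as the paper: the crude bound $X(A)\leq\Tr[\sqrt{A}]$ from Lemma~\ref{lem-V-vb.02}, the four-shell decomposition of the identity with the explicit main term and H\"older-type bounds for $X_1, X_2$, and the commutator representation of $X_3$ via the Amrein--Boutet de Monvel--Georgescu functional calculus followed by dilation and the Birman--Solomyak inequality. No gaps.
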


\newpage
\section{Lower Bound on $X(\Theta_{\phi,\alpha})$} 
\label{sec-VII}
%
\setcounter{equation}{0}
In order to complement the upper bound on $X(\Theta_{\phi,\alpha})$ from
Section~\ref{sec-VI} by a corresponding lower bound we first derive a
general inequality on $X(A)$ of the form $X(A) \geq
\Tr\big[A^{1/2}\big] - 2 \Lambda^{1-p} \: \Tr\big[ A^{p/2} \big]$,
where $p$ is any exponent between $\frac{1}{2}$ and $1$. By another
application of the Birman-Solomyak inequality \eqref{eq-VI-vb.00,1} we
then estimate the emerging error term by a multiple of $\|\phi\|_1^p$.

We begin by deriving a general lower bound on $X(A)$ only using that
$|k| \leq \Lambda \cdot \bfone$ on $\fh$.
\begin{lemma} \label{lem-VII-vb.01}
Let $A \geq 0$ be a nonnegative self-adjoint operator on
$\fh$ such that $A^{1/2} \in \cL[\fh]$ is trace-class and assume
that $0 < p < 1$. Then
\begin{align} \label{eq-VII-vb.01}
X(A) \ \geq \ 
\Tr\big[ A^{1/2} \big] \: - \: 2 \Lambda^{1-p} \: \Tr\big[ A^{p/2} \big] \, .
\end{align}
\begin{proof} We recall from Lemma~\ref{lem-V-vb.02} that
\begin{align} \label{eq-VII-vb.02}
X(A) \ = \ &
\Tr\big[ A^{1/2} \, \big\{ (k^2 + A)^{1/2} + |k| \big\}^{-1} 
\, A^{1/2} \big]  
\nonumber \\[1ex]
\ = \ &
\Tr\big[ A^{1/2} \, \big\{ K_A + K_0 \big\}^{-1} 
\, A^{1/2} \big] \, ,
\end{align}
with $K_A := \sqrt{k^2 + A}$. From the second resolvent identity 
we derive 
\begin{align} \label{eq-VII-vb.03}
\frac{1}{K_A+K_0} 
\ = \ &
\frac{1}{K_A} \: - \: \frac{1}{K_A} \, K_0 \, \frac{1}{K_A+K_0} 
\nonumber \\[1ex] 
\ = \ &
\frac{1}{K_A} \: - \: 
\frac{1}{K_A} \, K_0 \, \frac{1}{K_A} \: + \: 
\frac{1}{K_A} \, K_0 \, \frac{1}{K_A+K_0} \, K_0 \, \frac{1}{K_A} 
\nonumber \\[1ex] 
\ \geq \ &
\frac{1}{K_A} \: - \: 
\frac{1}{K_A} \, K_0 \, \frac{1}{K_A} \, ,
\end{align}
which implies
\begin{align} \label{eq-VII-vb.04}
X(A) 
\ \geq \ 
\Tr\big[ A^{1/2} \, K_A^{-1} \, A^{1/2} \big] \: - \: 
\Tr\big[ A^{1/2} \, K_A^{-1} \, K_0 \, K_A^{-1} \, A^{1/2} \big] \, .
\end{align}
Since
\begin{align} \label{eq-VII-vb.05}
K_0 \ = \ |k| \ \leq \ \Lambda^{1-p} |k|^p \ \leq \ \Lambda^{1-p} K_A^{p/2} \, ,
\end{align}
and $K_A \geq A^{1/2}$, we have that
\begin{align} \label{eq-VII-vb.06}
\Tr\big[ A^{1/2} \, K_A^{-1} \, K_0 \, K_A^{-1} \, A^{1/2} \big] 
\ \leq \ &
\Lambda^{1-p} \, \Tr\big[ A^{1/2} \, K_A^{-2+(p/2)} \, A^{1/2} \big] 
\nonumber \\[1ex]
\ \leq \ &
\Lambda^{1-p} \, \Tr\big[ A^{p/2} \big] \, .
\end{align}
By operator monotonicity we further have 
\begin{align} \label{eq-VII-vb.07}
A^{1/2} \: & - \: A^{1/2} \, K_A^{-1} \, A^{1/2} 
\ \leq \ 
A^{1/2} \: - \: \frac{A}{\sqrt{\Lambda^2 +A\,}}
\nonumber \\[1ex]
\ = \ &
\frac{A^{1/2}}{\sqrt{\Lambda^2 +A \,}}
\Big( \sqrt{\Lambda^2 +A \,} \, - \, A^{1/2} \Big)
\ = \ 
\frac{A^{1/2} \, \Lambda^2}{
\sqrt{\Lambda^2 +A \,} \, \big( \sqrt{\Lambda^2 +A \,} + A^{1/2} \big) }
\nonumber \\[1ex]
\ \leq \ &
\frac{\Lambda^2 \, A^{1/2}}{\Lambda^2 +A} 
\ \leq \ 
\frac{\Lambda^2 \, A^{1/2}}{\Lambda^{1+p} \, A^{(1-p)/2}} 
\ = \ 
\Lambda^{1-p} \, A^{p/2} \, .
\end{align}
Inserting \eqref{eq-VII-vb.06} and \eqref{eq-VII-vb.07} into
\eqref{eq-VII-vb.04}, we arrive at the claim.
\end{proof}
\end{lemma}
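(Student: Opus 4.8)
The plan is to start from the identity of Lemma~\ref{lem-V-vb.02},
\[
X(A) \ = \ \Tr\big[ A^{1/2} \, \{ K_A + K_0 \}^{-1} \, A^{1/2} \big] ,
\qquad K_A := \sqrt{k^2 + A}\, , \quad K_0 := |k|\, ,
\]
and to replace the ``mixed'' resolvent $\{K_A+K_0\}^{-1}$ by the cheaper $K_A^{-1}$ at the cost of a controllable correction. Applying the second resolvent identity twice --- once from the left, once from the right --- one obtains
\[
\{K_A + K_0\}^{-1} \ = \ K_A^{-1} \: - \: K_A^{-1} K_0 K_A^{-1} \: + \: (K_0 K_A^{-1})^{*} \, \{K_A + K_0\}^{-1} \, (K_0 K_A^{-1}) \, ,
\]
and the last summand, being of the form $C^{*}\{K_A+K_0\}^{-1}C \geq 0$, may simply be dropped. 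Conjugating by $A^{1/2}$ and taking traces yields
\[
X(A) \ \geq \ \Tr\big[ A^{1/2} K_A^{-1} A^{1/2} \big] \: - \: \Tr\big[ A^{1/2} K_A^{-1} K_0 K_A^{-1} A^{1/2} \big] \, ,
\]
and it remains to treat the two terms separately, using only the operator inequalities $A^{1/2} \leq K_A$ (immediate, since $k^2 \geq 0$) and $|k| \leq \Lambda \cdot \bfone$ on $\fh$ --- the latter being the only place where the ultraviolet cutoff enters.

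For the error term I would interpolate the cutoff: $K_0 = |k| = |k|^{1-p}|k|^{p} \leq \Lambda^{1-p}|k|^{p} \leq \Lambda^{1-p} K_A^{p}$, using $|k| \leq \Lambda$ and the operator monotonicity of $t \mapsto t^{p}$ ($0<p<1$) applied to $|k| \leq K_A$. Hence $K_A^{-1} K_0 K_A^{-1} \leq \Lambda^{1-p} K_A^{p-2}$, and since $K_A^{p-2} = (k^2+A)^{(p-2)/2}$ with exponent $(p-2)/2 \in (-1,0)$, the map $t \mapsto t^{(p-2)/2}$ is operator monotone decreasing, so $k^2+A \geq A$ gives $K_A^{p-2} \leq A^{(p-2)/2}$ and therefore $A^{1/2} K_A^{p-2} A^{1/2} \leq A^{1/2} A^{(p-2)/2} A^{1/2} = A^{p/2}$; taking traces bounds the error term by $\Lambda^{1-p}\Tr[A^{p/2}]$. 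For the main term, $K_A = \sqrt{k^2+A} \leq \sqrt{\Lambda^2+A}$ gives $A^{1/2} K_A^{-1} A^{1/2} \geq A^{1/2}(\Lambda^2+A)^{-1/2}A^{1/2} = A(\Lambda^2+A)^{-1/2}$, a function of $A$ alone, so the scalar estimate $\sqrt{a} - a(\Lambda^2+a)^{-1/2} = \frac{\sqrt{a}}{\sqrt{\Lambda^2+a}} \cdot \frac{\Lambda^2}{\sqrt{\Lambda^2+a}+\sqrt{a}} \leq \frac{\sqrt{a}\,\Lambda^2}{\Lambda^2+a}$, together with the weighted arithmetic--geometric mean bound $\Lambda^{1+p}a^{(1-p)/2} \leq \tfrac{1+p}{2}\Lambda^2 + \tfrac{1-p}{2}a \leq \Lambda^2+a$, yields $A^{1/2} - A^{1/2}K_A^{-1}A^{1/2} \leq \Lambda^{1-p}A^{p/2}$, i.e.\ $\Tr[A^{1/2}K_A^{-1}A^{1/2}] \geq \Tr[A^{1/2}] - \Lambda^{1-p}\Tr[A^{p/2}]$. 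Adding the two estimates gives exactly \eqref{eq-VII-vb.01} (the bound being trivial when $\Tr[A^{p/2}]=\infty$).

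The argument is largely mechanical once Lemma~\ref{lem-V-vb.02} is available; the single point that genuinely needs care is the double resolvent expansion, since $K_A$ and $K_0$ do not commute and cannot be manipulated as scalars --- the maneuver works only because the discarded remainder has the nonnegative ``sandwich'' form $C^{*}\{K_A+K_0\}^{-1}C$. Apart from that, it matters that the two relevant powers, $(p-2)/2$ and $-1/2$, both lie in $(-1,0)$, so that operator monotonicity of $t \mapsto t^{s}$ can be invoked, and that the prefactor $\Lambda^{1-p}$ and the exponent $p/2$ are produced precisely by the splitting $|k| = |k|^{1-p}|k|^{p}$ in the error term and the weighted AM--GM step in the main term. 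I do not expect anything else to be a real obstacle.
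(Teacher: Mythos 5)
Your proof is correct and follows essentially the same route as the paper's: the identity of Lemma~\ref{lem-V-vb.02}, the twice-iterated second resolvent identity with the nonnegative sandwich remainder dropped, the bound $K_0 \leq \Lambda^{1-p}K_A^{p}$ for the error term, and the weighted AM--GM estimate for the main term. (Your exponent bookkeeping, $K_A^{p}$ and hence $K_A^{-2+p}\leq A^{(p-2)/2}$, is in fact the correct one; the $K_A^{p/2}$ and $K_A^{-2+(p/2)}$ appearing in \eqref{eq-VII-vb.05}--\eqref{eq-VII-vb.06} look like a typo, since only the exponent $-2+p$ yields the stated bound $\Lambda^{1-p}\Tr[A^{p/2}]$.)
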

As described above, we now use Lemma~\ref{lem-VII-vb.01} to derive a
lower bound on $X(2\Theta_{\phi,\alpha})$.
\begin{theorem} \label{thm-VII-vb.02} 
There exists a universal constant $C < \infty$ such that, 
for all $\alpha, L >0$, all $0 \leq \sigma \leq 1 \leq \Lambda < \infty$, 
all $0 < \eps \leq 1$ and all $\phi = |\phi| \in Y_L$, the estimate 
\begin{align} \label{eq-VII-vb.08}
\frac{1}{2} X(2\Theta_{\phi,\alpha}) 
\ \geq \ 
\sqrt{\frac{4 \alpha}{9 \pi}\,} \, 
\big(\Lambda^3 - \sigma^3 \big) \, \|\phi\|_1
\: - \:
\frac{C \, \alpha^{1/4} \, \Lambda^{1/2} (L \, \Lambda +1)^3}{L^{3/2}} \, 
\sqrt{ \|\phi\|_1 \, } 
\end{align}
holds true.
\begin{proof} We first use that $A \mapsto X(A)$ is monotonically
increasing. Since 
\begin{align} \label{eq-VII-vb.09}
\Theta_{\phi,\alpha} \ = \ & 
\Phi_{\phi,\alpha}^* \Phi_{\phi,\alpha} 
\ = \ 
\frac{\alpha}{(2\pi)^3} \, 
P_C \, \chi_{\sigma,\Lambda} \, \phi(x)^2 \, \chi_{\sigma,\Lambda} \, P_C
\\[1ex] \nonumber
\ \geq \ &
\frac{\alpha}{(2\pi)^3} \, 
P_C \, \chi_{\sigma,\Lambda} \, \phi(x) \, \chi_{\sigma,\Lambda} \, 
P_C \, \chi_{\sigma,\Lambda} \, \phi(x) \, \chi_{\sigma,\Lambda} \, P_C
\\[1ex] \nonumber
\ = \ &
\big[ \alpha^{1/2} \, (2\pi)^{-3/2} \; 
P_C \, \chi_{\sigma,\Lambda} \, 
\phi(x) \, \chi_{\sigma,\Lambda} \, P_C \big]^2 \, ,
\end{align}
we obtain from Lemma~\ref{lem-VII-vb.01} with $p = \frac{1}{2}$ that
\begin{align} \label{eq-VII-vb.10}
\frac{1}{2} X(2\Theta_{\phi,\alpha}) 
\ \geq \ &
\frac{1}{2} X\Big( \big[ (2\alpha)^{1/2} \, (2\pi)^{-3/2} \; 
P_C \, \chi_{\sigma,\Lambda} \, 
\phi(x) \, \chi_{\sigma,\Lambda} \, P_C \big]^2 \Big) 
\nonumber \\[1ex] 
\ \geq \ &
\frac{(2\alpha)^{1/2}}{2 (2\pi)^{3/2}} 
\, \Tr\big( P_C \, \chi_{\sigma,\Lambda} \, 
\phi(x) \, \chi_{\sigma,\Lambda} \, P_C \big) 
\nonumber \\ 
\  \ & \quad 
\: - \: \frac{(2\alpha)^{1/4} \Lambda^{1/2}}{(2\pi)^{3/4}} 
\: \Tr\big( [P_C \, \chi_{\sigma,\Lambda} \, 
\phi(x) \, \chi_{\sigma,\Lambda} \, P_C]^{1/2} \big) \, .
\\[1ex] \nonumber 
\ \geq \ &
\sqrt{\frac{4 \alpha}{9 \pi}\,} \, \big(\Lambda^3 - \sigma^3 \big) 
\, \|\phi\|_1
\: - \: \frac{ (2\alpha)^{1/4} \Lambda^{1/2}}{(2\pi)^{3/4}} \,
\big\| \sqrt{\phi(x)} \, \chi_{\sigma,\Lambda}(k) \big\|_{\cL^1[\fh_0]} \, .
\end{align}
To estimate the second term on the right side of \eqref{eq-VII-vb.10}
we proceed as in \eqref{eq-VI-vb.21}-\eqref{eq-VI-vb.25}. After
unitary rescaling $(x,k) \mapsto (Lx,k/L)$, we apply
\eqref{eq-VI-vb.00,1} again and get
\begin{align} \label{eq-VII-vb.11}
\big\| \sqrt{\phi(x)} \, \chi_{\sigma,\Lambda}(k) \big\|_{\cL^1[\fh_0]}
\ = \ &
\big\| \sqrt{\phi(Lx)} \, \chi_{L\sigma,L\Lambda}(k) \big\|_{\cL^1[\fh_0]}
\\[1ex] \nonumber
\ \leq \ & 
C_{\mathrm{BS}}(1) \, 
\big\| \sqrt{\phi(Lx)} \big\|_{2;1} \: 
\big\| \chi_{L\sigma,L\Lambda}(k) \big\|_{2;1} 
\\[1ex] \nonumber
\ \leq \ & 
\frac{4 \pi \, C_{\mathrm{BS}}(1)}{3} \, (L\Lambda+3)^3 \, 
\big\| \sqrt{\phi(Lx)} \big\|_{2;1} \, , 
\end{align}
where the last estimate results from
\eqref{eq-VI-vb.23}-\eqref{eq-VI-vb.24}. Since $x \mapsto \phi(Lx)$ is
supported in $B(0,1) \subseteq Q = [-\frac{1}{2}, \frac{1}{2}]^3$, we
further have
\begin{align} \label{eq-VII-vb.13}
\big\| \sqrt{\phi(Lx)} \big\|_{2;1}
\ = \ &
\sum_{\beta \in \ZZ^3} 
\big\| \sqrt{\phi(Lx)} \cdot \bfone_{Q+\beta} \big\|_2
\ = \ 
\big\| \sqrt{\phi(Lx)} \big\|_2
\nonumber \\[1ex]
\ = \ &
\| \phi(Lx) \|_1^{1/2}
\ = \ 
L^{-3/2} \,  \| \phi \|_1^{1/2} \, .
\end{align}
Finally, inserting \eqref{eq-VII-vb.13} into
\eqref{eq-VII-vb.11}, we arrive at \eqref{eq-VII-vb.08}.
\end{proof}
\end{theorem}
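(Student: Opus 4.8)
The plan is to derive the lower bound by first replacing $\Theta_{\phi,\alpha}$ by a comparison operator whose square root is explicit, and then feeding this into the general estimate of Lemma~\ref{lem-VII-vb.01} with the exponent $p=\tfrac12$. Since $\Pi:=P_C\,\chi_{\sigma,\Lambda}$ is an orthogonal projection (both factors being Fourier multipliers that commute), one has $\Pi\,\phi(x)^2\,\Pi\geq\Pi\,\phi(x)\,\Pi\,\phi(x)\,\Pi=(\Pi\,\phi(x)\,\Pi)^2$, because the difference equals $[(\bfone-\Pi)\,\phi(x)\,\Pi]^*[(\bfone-\Pi)\,\phi(x)\,\Pi]\geq0$. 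Hence, recalling $\Theta_{\phi,\alpha}=\tfrac{\alpha}{(2\pi)^3}\,\Pi\,\phi(x)^2\,\Pi$ from \eqref{eq-VII-vb.09}, we get $2\Theta_{\phi,\alpha}\geq A:=\big[(2\alpha)^{1/2}(2\pi)^{-3/2}\,\Pi\,\phi(x)\,\Pi\big]^2$, and by the monotonicity of $A\mapsto X(A)$ from Lemma~\ref{lem-V-vb.03}(i) it suffices to bound $\tfrac12 X(A)$ from below.

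Second, I would apply Lemma~\ref{lem-VII-vb.01} to $A$ with $p=\tfrac12$. Since $A^{1/2}=(2\alpha)^{1/2}(2\pi)^{-3/2}\,\Pi\,\phi(x)\,\Pi\geq0$ and $A^{1/4}=(2\alpha)^{1/4}(2\pi)^{-3/4}\,(\Pi\,\phi(x)\,\Pi)^{1/2}$, the lemma yields $\tfrac12 X(A)\geq\tfrac{(2\alpha)^{1/2}}{2(2\pi)^{3/2}}\,\Tr[\Pi\,\phi(x)\,\Pi]-\tfrac{(2\alpha)^{1/4}\Lambda^{1/2}}{(2\pi)^{3/4}}\,\Tr[(\Pi\,\phi(x)\,\Pi)^{1/2}]$. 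The main term is computed exactly as in \eqref{eq-VI-vb.04,1}: $\Tr[\Pi\,\phi(x)\,\Pi]=\big\|\sqrt{\phi(x)}\,\chi_{\sigma,\Lambda}\,P_C\big\|_{\cL^2[\fh]}^2=\tfrac{8\pi}{3}(\Lambda^3-\sigma^3)\,\|\phi\|_1$, and multiplying by the prefactor collapses the constant to $\sqrt{4\alpha/(9\pi)}$, giving the leading term $\sqrt{4\alpha/(9\pi)}\,(\Lambda^3-\sigma^3)\,\|\phi\|_1$ in \eqref{eq-VII-vb.08}.

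Third, for the error term I would observe that $\Tr[(\Pi\,\phi(x)\,\Pi)^{1/2}]=\big\|\sqrt{\phi(x)}\,\chi_{\sigma,\Lambda}\,P_C\big\|_{\cL^1[\fh]}\leq C\,\big\|\sqrt{\phi(x)}\,\chi_{\sigma,\Lambda}\big\|_{\cL^1[\fh_0]}$ on the scalar space $\fh_0=L^2(\RR^3)$, the constant accounting for the vector components. This last trace norm is then controlled exactly as in \eqref{eq-VI-vb.23}--\eqref{eq-VI-vb.25}: conjugating by the unitary dilation $(x,k)\mapsto(Lx,k/L)$ and invoking the Birman--Solomyak bound \eqref{eq-VI-vb.00,1} with $p=1$, one uses that $x\mapsto\sqrt{\phi(Lx)}$ is supported in $B(0,1)$ to get $\big\|\sqrt{\phi(Lx)}\big\|_{2;1}=\|\phi(Lx)\|_1^{1/2}=L^{-3/2}\|\phi\|_1^{1/2}$, while $\|\chi_{L\sigma,L\Lambda}\|_{2;1}\leq\tfrac{4\pi}{3}(L\Lambda+3)^3$ by \eqref{eq-VI-vb.24}. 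Combining these gives $\big\|\sqrt{\phi(x)}\,\chi_{\sigma,\Lambda}\big\|_{\cL^1[\fh_0]}\leq C\,L^{-3/2}(L\Lambda+1)^3\,\|\phi\|_1^{1/2}$, so the error term is bounded by $C\,\alpha^{1/4}\,\Lambda^{1/2}\,L^{-3/2}(L\Lambda+1)^3\,\sqrt{\|\phi\|_1}$, which is precisely \eqref{eq-VII-vb.08}.

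The main obstacle is less any individual estimate than the structural choice of the exponent $p=\tfrac12$ in Lemma~\ref{lem-VII-vb.01} together with the passage $\Theta_{\phi,\alpha}\geq(\tfrac{\alpha^{1/2}}{(2\pi)^{3/2}}\,\Pi\,\phi(x)\,\Pi)^2$: one has to recognize that extracting a square root trades a full power of the coupling for a half power, so the error scales like $\alpha^{1/4}\,\|\phi\|_1^{1/2}$ rather than $\alpha^{1/2}\,\|\phi\|_1$. This sublinear dependence on $\|\phi\|_1$ is exactly what the length-scale optimization in Section~\ref{sec-VIII} (with $L:=\alpha^{9/49}\Lambda^{-40/49}$) needs in order to produce the stated $\alpha^{4/49}\Lambda^{-4/49}$ lower-bound error in Theorem~\ref{thm-0.1}; balancing this against the localization cost of order $L^{-2}$ from Theorem~\ref{thm-V-vb.01} is the delicate part.
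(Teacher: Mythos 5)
Your proposal follows the paper's own proof essentially step for step: the same operator inequality $\Pi\,\phi(x)^2\,\Pi \geq (\Pi\,\phi(x)\,\Pi)^2$ with $\Pi = P_C\,\chi_{\sigma,\Lambda}$, the same application of Lemma~\ref{lem-VII-vb.01} with $p=\tfrac12$, the same exact evaluation of the main trace, and the same Birman--Solomyak/rescaling treatment of the error term. It is correct; your explicit justification of the projection inequality via $[(\bfone-\Pi)\phi(x)\Pi]^*[(\bfone-\Pi)\phi(x)\Pi]\geq 0$ is a small but welcome addition to what the paper leaves implicit.
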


\newpage
\section{Asymptotics of the Lieb-Loss Energy} 
\label{sec-VIII}
%
\setcounter{equation}{0}
We turn to the proof of the main result of this paper,
Theorem~\ref{thm-0.1}, stated below again for the reader's
convenience. In our proof a key role is played by the scaling relation
the effective energy $F[\beta]$ obeys. $F[\beta]$ is defined in
\eqref{eq-V-vb.03,3} as the infimum of the functional $\cF_\beta > 0$
over $L^2$-normalized functions in $H^1(\RR^3) \cap L^1(\RR^3)$. In
\cite{Hach2020} one of us showed that this infimum is attained for
some $\phi_\beta$ and hence is actually a minimum. A major issue in
this regard is non-reflexivity of the
$L^1$-space precluding a naive application of the direct method of the
calculus of variations. This was remedied by using the theory of
uniform convex spaces and the Milman-Pettis theorem. Subsequently an
explicit characterization of the minimizer (up to spherical
rearrangement) can be given in terms of a Bessel function. In
particular,
\begin{align} \label{eq-VIII-vb.01}
F[1] \ > \ 0 
\end{align}
is a positive constant, and it is then not difficult to see that
$F[\beta]$ scales as
\begin{align} \label{eq-VIII-vb.02}
F[\beta] \ = \ \beta^{4/7} \, F[1] \, , 
\end{align}
for all $\beta >0$.
\begin{theorem} \label{thm-VIII-vb.1}
There exists a universal constant $C < \infty$ such that, 
for all $\alpha>0$ and $\Lambda \geq 1$, the estimate
\begin{align} \label{eq-VIII-vb.03}
- C \, \alpha^{\frac{4}{49}} \Lambda^{-\frac{4}{49}}
\ \leq \ 
\frac{E_\LL(\alpha, \Lambda)}{F_1 \, \alpha^{2/7} \, \Lambda^{12/7}} 
\: - \: 1 
\ \leq \  
C \, \alpha^{\frac{4}{105}} \Lambda^{-\frac{4}{105}} 
\end{align}
holds true.
\begin{proof} We first take the infrared limit $\sigma \to 0$. Note that
$E_\LL$, $E_\LL^{(L)}$, $F$, $F^{(L)}$, $X(2\Theta_{\phi,\alpha})$,
and all error terms are continuous at $\sigma = 0$, and we can simply
set $\sigma := 0$ everywhere. Then Theorems~\ref{thm-VI-vb.01} and
\ref{thm-VII-vb.02} with $p = 1/2$ yield
\begin{align} \label{eq-VIII-vb.05}
\frac{1}{2} X(2\Theta_{\phi,\alpha}) 
\: - \: 
\sqrt{\frac{4 \alpha}{9 \pi}\,} \, \Lambda^3 \, \|\phi\|_1
\ \leq \ &
C \, \eps \, \Lambda^3 \, \|\phi\|_1 \, + \, 
C \, \alpha^{1/2} \, \eps^{-2} \, L^{3/2} \, \Lambda^2 \, \|\nabla \phi\|_2 \, ,
\\[1ex] \label{eq-VIII-vb.06}
\frac{1}{2} X(2\Theta_{\phi,\alpha}) 
\: - \: 
\sqrt{\frac{4 \alpha}{9 \pi}\,} \, \Lambda^3 \, \|\phi\|_1
\ \geq \ &
- C \, \alpha^{1/4} \, L^{3/2} \, \Lambda^{7/2} \, \|\phi\|_1^{1/2} \, ,
\end{align}
some constant $C_1 < \infty$ and any $\phi = |\phi| \in Y_L$
with $\|\phi\|_2 = 1$, provided that $L \geq \Lambda^{-1}$.

We first derive the upper bound in \eqref{eq-VIII-vb.03}. 
From \eqref{eq-VIII-vb.05} we obtain
\begin{align} \label{eq-VIII-vb.07}
\hcE_{\alpha,\Lambda}(\phi) 
\ = \ & 
\frac{1}{2} \|\nabla \phi\|_2^2  + \frac{1}{2} X(2\Theta_{\phi,\alpha}) 
\nonumber \\[1ex] 
\ \leq \ &
\frac{1}{2} (1 + \delta) \, \|\nabla \phi\|_2^2
\: + \:
\sqrt{\frac{4 \alpha}{9 \pi}\,} \, \Lambda^3 \, (1 + C_2 \eps) \, \|\phi\|_1
\: + \:
C_2 \, \alpha \, \delta^{-1} \, \eps^{-4} \, L^3 \, \Lambda^4 
\nonumber \\[1ex] 
\ \leq \ &
(1 + \delta) \, \cF_{\beta_2}(\phi) 
\: + \:
C_2 \, \alpha \, \delta^{-1} \, \eps^{-4} \, L^3 \, \Lambda^4 \, ,
\end{align}
where $\cF_\beta$ is defined in \eqref{eq-V-vb.03,2} and
\begin{align} \label{eq-VIII-vb.08}
\beta_2 \ := \ \beta_0 \, \frac{1 + C_2 \eps}{1 + \delta} 
\; , \quad
\beta_0 \ \equiv \ \beta(\alpha,\Lambda) \ = \ 
\sqrt{\frac{4 \alpha}{9 \pi}\,} \, \Lambda^3 \, ,
\end{align}
for some $C_2 < \infty$ and all $0 < \delta \leq 1$.
Taking the infimum over all $\phi = |\phi| \in Y_L$ with $\|\phi\|_2 = 1$
in \eqref{eq-VIII-vb.07}, we further have
\begin{align} \label{eq-VIII-vb.09}
E_\LL^{(L)}(\alpha,\Lambda)
\ \leq \ & 
(1 + \delta) \, F^{(L)}[\beta_2]
\: + \:
C_2 \, \alpha \, \delta^{-1} \, \eps^{-4} \, L^3 \, \Lambda^4 \, .
\end{align}
The localization estimates \eqref{eq-V-vb.05,1}-\eqref{eq-V-vb.05,2}
now imply 
\begin{align} \label{eq-VIII-vb.09,1}
E_\LL(\alpha,\Lambda)
\ \leq \ & 
(1 + \delta) \, F[\beta_2]
\: + \:
C_2 \, \alpha \, \delta^{-1} \, \eps^{-4} \, L^3 \, \Lambda^4 
\: + \:
C_3 \, L^{-2} \, ,
\end{align}
for some constant $C_3 < \infty$. From the scaling relation 
\eqref{eq-VIII-vb.02}, we get
\begin{align} \label{eq-VIII-vb.10}
(1 + \delta) \, F[\beta_2]
\ = \ &
(1 + \delta) \, \beta_2^{4/7} \, F[1]
\ = \ 
(1 + C_2 \eps)^{4/7} \, (1 + \delta)^{3/7} \, \beta_0^{4/7} \, F[1]
\nonumber \\[1ex]
\ \leq \ &
(1 + C_4 \eps + C_4 \delta) \, F[\beta_0] \, ,
\end{align}
for some $C_4 < \infty$, and inserting this into
\eqref{eq-VIII-vb.09,1}, we arrive at the intermediate estimate,
stating that there exists a universal constant $C_5 < \infty$,
such that
\begin{align} \label{eq-VIII-vb.11}
& \frac{E_\LL(\alpha,\Lambda)}{F[\beta(\alpha,\Lambda)]} \, - \, 1
\\[1ex] \nonumber
& \ \leq \ 
C_5 \Big( \eps \: + \: \delta \: + \: 
\alpha^{-2/7} \, L^{-2} \, \Lambda^{-12/7} \: + \: 
\alpha^{5/7} \, \delta^{-1} \, \eps^{-4} \, L^3 \, \Lambda^{16/7} \Big) 
\end{align}
holds for all $\eps, \delta \in (0,1]$, $\alpha >0$, $\Lambda \geq 1$,
and $L > \Lambda^{-1}$. As $\alpha$ enters the right side of 
\eqref{eq-VIII-vb.11} only in negative powers, we may assume 
$\alpha \in (0,1]$ w.l.o.g. To meet these requirements, we set 
\begin{align} \label{eq-VIII-vb.12}
\eps \; := \; \delta \; := \; \alpha^r \Lambda^{-s} 
\quad \text{and} \quad
L \; := \; \alpha^{-t} \Lambda^{u-1} \; , \quad
\end{align}
for $r,s,t,u \geq 0$ to be chosen later. Then
\begin{align} \label{eq-VIII-vb.13}
\eps + \delta + & \alpha^{-\frac{2}{7}} L^{-2} \Lambda^{-\frac{12}{7}} +  
\alpha^{\frac{5}{7}} \delta^{-1} \eps^{-4} L^3 \Lambda^{\frac{16}{7}}
\\[1ex] \nonumber
\ = \ &
2 \alpha^r \Lambda^{-s} 
+ \alpha^{2t-\frac{2}{7}} \Lambda^{\frac{2}{7}-2u} 
+ \alpha^{\frac{5}{7}-5r-3t} \Lambda^{5s+3u-\frac{5}{7}} 
\ \leq \
4 \, \alpha^{a/7} \, \Lambda^{-b/7} \, ,
\end{align}
with
\begin{align} \label{eq-VIII-vb.14}
a \ := \ &
\min\big\{ 7r, \; 14t-2, \; 5-35r-21t \big\} \, ,
\\[1ex] \label{eq-VIII-vb.15}
b \ := \ &
\min\big\{ 7s, \; 14u-2, \; 5-35s-21u \big\} \, .
\end{align}
We choose $r,s,t,u$ so that all three terms in both
\eqref{eq-VIII-vb.14} and \eqref{eq-VIII-vb.15} are equal, i.e.,
$r := s := 4/105$ and $t := u := 17/105$. This yields 
$a = b = 4/15$ and hence the upper bound
\begin{align} \label{eq-VIII-vb.16}
\frac{E_\LL(\alpha,\Lambda)}{F[\beta(\alpha,\Lambda)]} \, - \, 1
\ \leq \ 
4C_5 \; \alpha^{\frac{4}{105}} \; \Lambda^{-\frac{4}{105}} 
\end{align}
in \eqref{eq-VIII-vb.03}.

We similarly proceed for the lower bound in \eqref{eq-VIII-vb.03}. 
From \eqref{eq-VIII-vb.06} we obtain
\begin{align} \label{eq-VIII-vb.17}
\hcE_{\alpha,\Lambda}(\phi) 
\ = \ & 
\frac{1}{2} \|\nabla \phi\|_2^2  + \frac{1}{2} X(2\Theta_{\phi,\alpha}) 
\nonumber \\[1ex] 
\ \geq \ &
\frac{1}{2} \|\nabla \phi\|_2^2
\: + \:
\sqrt{\frac{4 \alpha}{9 \pi}\,} \, \Lambda^3 \, 
(1 - \delta) \, \|\phi\|_1
\: - \: C_6 \, \delta^{-1} \, L^3 \, \Lambda 
\nonumber \\[1ex] 
\ = \ &
\cF_{\beta_3}(\phi) 
\: - \: C_6 \, \delta^{-1} \, L^3 \, \Lambda \, ,
\end{align}
for some $C_2 < \infty$ and all $0 < \delta \leq 1$, where 
\begin{align} \label{eq-VIII-vb.18}
\beta_3 \ := \ \beta_0 \, (1 - \delta)
\; , \quad
\beta_0 \ \equiv \ \beta(\alpha,\Lambda) \ = \ 
\sqrt{\frac{4 \alpha}{9 \pi}\,} \, \Lambda^3 \, .
\end{align}
Taking the infimum over all $\phi = |\phi| \in Y_L$ with $\|\phi\|_2 = 1$
in \eqref{eq-VIII-vb.17}, we further have
\begin{align} \label{eq-VIII-vb.19}
E_\LL^{(L)}(\alpha,\Lambda)
\ \geq \ & 
F^{(L)}[\beta_3] \: - \: C_6 \, \delta^{-1} \, L^3 \, \Lambda \, .
\end{align}
The localization estimates \eqref{eq-V-vb.05,1}-\eqref{eq-V-vb.05,2}
now imply 
\begin{align} \label{eq-VIII-vb.20}
E_\LL(\alpha,\Lambda)
\ \geq \ & 
F[\beta_3] \: - \: C_7 \, L^{-2} 
\: - \: C_6 \, \delta^{-1} \, L^3 \, \Lambda^4 \, ,
\end{align}
for some constant $C_7 < \infty$. Again invoking the scaling relation 
\eqref{eq-VIII-vb.02}, we get
\begin{align} \label{eq-VIII-vb.21}
F[\beta_3]
\ = \ 
\beta_3^{4/7} \, F[1]
\ = \ 
(1 - \delta)^{4/7} \, \beta_0^{4/7} \, F[1]
\ \geq \ 
(1 - \delta) \, F[\beta_0] \, ,
\end{align}
and thus there exists a constant $C_8 < \infty$ such that
\begin{align} \label{eq-VIII-vb.22}
\frac{E_\LL(\alpha,\Lambda)}{F[\beta(\alpha,\Lambda)]} \, - \, 1
\ \geq \ 
- C_8 \Big( \delta \: + \: 
\alpha^{-2/7} \, L^{-2} \, \Lambda^{-12/7} \: + \: 
\alpha^{-2/7} \, \delta^{-1} \, L^3 \, \Lambda^{16/7} \Big) 
\end{align}
holds for all $\delta, \alpha \in (0,1]$, $\Lambda \geq 1$,
and $L > \Lambda^{-1}$. Again we set 
\begin{align} \label{eq-VIII-vb.23}
\delta \; := \; \alpha^r \Lambda^{-s} 
\quad \text{and} \quad
L \; := \; \alpha^{-t} \Lambda^{u-1} \; , \quad
\end{align}
for $r,s,t,u \geq 0$ to be chosen later and obtain
\begin{align} \label{eq-VIII-vb.24}
\delta + & \alpha^{-\frac{2}{7}} L^{-2} \Lambda^{-\frac{12}{7}} +  
\alpha^{-\frac{2}{7}} \delta^{-1} L^3 \Lambda^{\frac{16}{7}}
\\[1ex] \nonumber
\ = \ &
\alpha^r \Lambda^{-s} 
+ \alpha^{2t-\frac{2}{7}} \Lambda^{\frac{2}{7}-2u} 
+ \alpha^{\frac{5}{7}-r-3t} \Lambda^{s+3u-\frac{5}{7}} 
\ \leq \
3 \, \alpha^{a/7} \, \Lambda^{-b/7} \, ,
\end{align}
with
\begin{align} \label{eq-VIII-vb.25}
a \ := \ &
\min\big\{ 7r, \; 14t-2, \; 5-7r-21t \big\} \, ,
\\[1ex] \label{eq-VIII-vb.26}
b \ := \ &
\min\big\{ 7s, \; 14u-2, \; 5-7s-21u \big\} \, .
\end{align}
We choose $r,s,t,u$ so that all three terms in both
\eqref{eq-VIII-vb.14} and \eqref{eq-VIII-vb.15} are equal, i.e.,
$r := s := 4/49$ and $t := u := 9/49$. This yields 
$a = b = 4/7$ and hence the lower bound
\begin{align} \label{eq-VIII-vb.27}
\frac{E_\LL(\alpha,\Lambda)}{F[\beta(\alpha,\Lambda)]} \, - \, 1
\ \geq \ 
- 3C_8 \; \alpha^{\frac{4}{49}} \; \Lambda^{-\frac{4}{49}} 
\end{align}
in \eqref{eq-VIII-vb.03}.
\end{proof}
\end{theorem}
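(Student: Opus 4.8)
The plan is to combine the three main technical inputs that have been assembled in the preceding sections: the reduction of $E_\LL(\alpha,\Lambda)$ to the effective one-particle functional $\hcE_{\alpha,\Lambda}(\phi) = \frac12\|\vnabla\phi\|_2^2 + \frac12 X(2\Theta_{\phi,\alpha})$ over nonnegative normalized $\phi\in H^1(\RR^3)$ (Theorem~\ref{thm-IV-vb.04}, Corollary~\ref{cor-IV-vb.06}); the localization estimate \eqref{eq-V-vb.05,1}--\eqref{eq-V-vb.05,2} that lets us restrict to $\phi\in Y_L$ at the cost of $O(L^{-2})$; the two-sided comparison of $\frac12 X(2\Theta_{\phi,\alpha})$ with the linear term $\sqrt{4\alpha/9\pi}\,\Lambda^3\|\phi\|_1$ furnished by Theorems~\ref{thm-VI-vb.01} and \ref{thm-VII-vb.02} (specialized to $p=\frac12$ and $\sigma=0$); and the scaling relation $F[\beta]=\beta^{4/7}F[1]$ from \eqref{eq-VIII-vb.02}, which is the whole point of passing through the auxiliary functional $\cF_\beta$. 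First I would pass to the infrared limit $\sigma\to 0$, observing that every quantity involved is continuous there, so that $\sigma=0$ may simply be inserted everywhere; this is the content of step~(1) of the sketch in the introduction, whose details are deferred to \cite{BachFroehlichSigal1998a}.

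For the upper bound I would proceed as follows. Insert the upper estimate \eqref{eq-VIII-vb.05} (Theorem~\ref{thm-VI-vb.01} with $\sigma=0$) into $\hcE_{\alpha,\Lambda}(\phi)$; absorb the error term $C\alpha^{1/2}\eps^{-2}L^{3/2}\Lambda^2\|\nabla\phi\|_2$ into a multiple $\frac{\delta}{2}\|\nabla\phi\|_2^2$ of the kinetic energy by Young's inequality, at the cost of a purely numerical term $C\alpha\delta^{-1}\eps^{-4}L^3\Lambda^4$; this rewrites $\hcE_{\alpha,\Lambda}(\phi)$ as $(1+\delta)\cF_{\beta_2}(\phi)$ plus that numerical term, with $\beta_2=\beta_0(1+C\eps)/(1+\delta)$ and $\beta_0=\sqrt{4\alpha/9\pi}\,\Lambda^3$. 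Taking the infimum over $\phi\in Y_L$, then using the localization bound to replace $F^{(L)}$ by $F$, and finally using the scaling relation $F[\beta_2]=\beta_2^{4/7}F[1]=(1+C\eps)^{4/7}(1+\delta)^{-4/7}\beta_0^{4/7}F[1]$ together with $(1+\delta)F[\beta_2]\le(1+C(\eps+\delta))F[\beta_0]$, yields the intermediate estimate \eqref{eq-VIII-vb.11}, namely that $E_\LL(\alpha,\Lambda)/F[\beta(\alpha,\Lambda)]-1$ is bounded above by a constant times $\eps+\delta+\alpha^{-2/7}L^{-2}\Lambda^{-12/7}+\alpha^{5/7}\delta^{-1}\eps^{-4}L^3\Lambda^{16/7}$. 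The lower bound is analogous but easier: \eqref{eq-VIII-vb.06} (Theorem~\ref{thm-VII-vb.02} with $p=\frac12$) needs no Young splitting since its error $-C\alpha^{1/4}L^{3/2}\Lambda^{7/2}\|\phi\|_1^{1/2}$ is already controlled after using $\|\phi\|_1^{1/2}\lesssim\delta^{-1}L^3\Lambda+\delta\Lambda^{-?}\|\phi\|_1$-type manipulations folded into the numerical term $C\delta^{-1}L^3\Lambda$, giving $\hcE_{\alpha,\Lambda}(\phi)\ge\cF_{\beta_3}(\phi)-C\delta^{-1}L^3\Lambda$ with $\beta_3=\beta_0(1-\delta)$; the same localization-and-scaling chain then produces \eqref{eq-VIII-vb.22}.

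The remaining work is pure optimization of the free parameters $\eps,\delta,L$ as functions of $\alpha$ and $\Lambda$. I would set $\eps:=\delta:=\alpha^r\Lambda^{-s}$ and $L:=\alpha^{-t}\Lambda^{u-1}$ with nonnegative exponents $r,s,t,u$ to be determined, substitute into the intermediate bound, and read off that each term has the form $\alpha^{(\cdot)/7}\Lambda^{-(\cdot)/7}$; the worst term in $\alpha$ has exponent $\min\{7r,14t-2,5-35r-21t\}/7$ and the worst term in $\Lambda$ has exponent $-\min\{7s,14u-2,5-35s-21u\}/7$. Choosing $r,s,t,u$ to equalize all three competing exponents gives $r=s=4/105$, $t=u=17/105$, hence a common rate $\alpha^{4/105}\Lambda^{-4/105}$, which is the claimed upper bound. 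For the lower bound the analogous system has coefficients $\min\{7r,14t-2,5-7r-21t\}$ and $\min\{7s,14u-2,5-7s-21u\}$ (the change from $35$ to $7$ coming from the absence of the $\eps^{-4}$ factor), equalized by $r=s=4/49$, $t=u=9/49$, giving the rate $\alpha^{4/49}\Lambda^{-4/49}$. One checks that these choices respect the constraints $\eps,\delta\in(0,1]$ and $L>\Lambda^{-1}$ for $\alpha\in(0,1]$, $\Lambda\ge1$ — and since $\alpha$ appears only with negative powers in the bounds, restricting to $\alpha\le1$ is harmless. The only genuinely delicate point, and the step I expect to require the most care, is the bookkeeping of the comparison between $\hcE_{\alpha,\Lambda}$ and $\cF_{\beta}$: one must track exactly how the three distinct error terms (the localization $L^{-2}$, the ultraviolet-shell error $\eps\Lambda^3\|\phi\|_1$, and the Birman--Solomyak commutator error $\eps^{-2}L^{3/2}\Lambda^2\|\nabla\phi\|_2$) interact, ensure the Young splitting of the last one against the kinetic term is uniform in $\phi$, and verify that the $\beta$-dependent main terms land on the same normalization $\beta^{4/7}F[1]$ after all multiplicative factors $(1\pm\delta)$, $(1+C\eps)$ are expanded to first order; the scaling relation \eqref{eq-VIII-vb.02} is what makes this expansion possible, and without it the whole strategy would collapse.
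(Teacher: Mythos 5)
Your proposal follows the paper's own proof essentially step for step: the same reduction via Theorems~\ref{thm-VI-vb.01} and \ref{thm-VII-vb.02} at $\sigma=0$, the same Young splitting of the gradient error into $\tfrac{\delta}{2}\|\nabla\phi\|_2^2$ plus a numerical term, the same comparison with $\cF_{\beta_2}$ and $\cF_{\beta_3}$ followed by localization and the scaling relation \eqref{eq-VIII-vb.02}, and the identical exponent choices $r=s=4/105$, $t=u=17/105$ (upper bound) and $r=s=4/49$, $t=u=9/49$ (lower bound). The argument is correct and matches the paper's, up to the minor bookkeeping of the power of $\Lambda$ in the lower-bound error term (which should be $\Lambda^4$, as the paper itself uses in \eqref{eq-VIII-vb.20} and \eqref{eq-VIII-vb.22}).
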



\end{document}